\documentclass[mnsc]{informs3}

%\DoubleSpacedXI % Made default 4/4/2014 at request
\OneAndAHalfSpacedXI % current default line spacing
%%\OneAndAHalfSpacedXII
%%\DoubleSpacedXII

%% Setup of theorem styles. Outcomment only one.
%% Preferred default is the first option.
\TheoremsNumberedThrough     % Preferred (Theorem 1, Lemma 1, Theorem 2)
%\TheoremsNumberedByChapter  % (Theorem 1.1, Lema 1.1, Theorem 1.2)
\ECRepeatTheorems

%% Setup of the equation numbering system. Outcomment only one.
%% Preferred default is the first option.
\EquationsNumberedThrough    % Default: (1), (2), ...
%\EquationsNumberedBySection % (1.1), (1.2), ...

%------------Specific commands for informs format begin --------------
\renewenvironment{proof}[1][Proof]{\textsc{#1.} }{ \hfill $\square$ \linebreak}
\let\argmin\relax
\let\argmax\relax
       %removes top text in red
\RRHSecondLine{}					%removes second line in header
\LRHSecondLine{}					%removes second line in header
\usepackage[x11names]{xcolor}
\usepackage[colorlinks]{hyperref}
\hypersetup{citecolor=SpringGreen4}
%------------Specific commands for informs format end --------------

%------------Personal commands begin --------------
\usepackage{thmtools,enumitem}

\newenvironment{myexample}{\begin{example}}{%
    \end{example}\medskip
}

\usepackage{amsmath,amsfonts,dsfont,amssymb,comment}
\usepackage{multirow}

%useful commands
\newcommand{\R}{\mathbb{R}}								%reals
\newcommand{\N}{\mathbb{N}}								%naturals
								%integers
\newcommand{\Rp}{\mathbb{R}_{\geq 0}}					%reals >= 0
\newcommand{\E}{\mathbb{E}}								%expectation
\renewcommand{\Pr}{\mathbb{P}}							%prob
\newcommand{\In}[1]{\mathds{1}_{\crl*{ #1}}} 			%indicator
\newcommand{\Ins}[1]{\mathds{1}_{#1}} 			%indicator of set

\newcommand{\defeq}{\coloneqq}							%symbol :=

%parenthesis
\usepackage{mathtools}
\DeclarePairedDelimiter{\abs}{\lvert}{\rvert}
\DeclarePairedDelimiter{\norm}{\lvert\lvert}{\rvert\rvert}
\DeclarePairedDelimiter{\crl}{\{}{\}} 
\DeclarePairedDelimiter{\prn}{(}{)} 
\DeclarePairedDelimiter{\brk}{[}{]} 

%specific commands for RM
\newcommand{\T}{T}                                      %time horizon
\usepackage{xspace}
\newcommand{\voff}{V^{\textnormal{off}}}						%offline value
\newcommand{\von}{V^{\textnormal{on}}}						%online value			
									%penalty
\newcommand{\rmax}{r_{\max}}							%max reward
\newcommand{\pmin}{p_{\min}}                            %min probability
\newcommand{\Ron}{R^{\text{on}}}						%reward of online at each time
\newcommand{\Jt}{\theta^t}                                   %arrival at t
\newcommand{\JT}[1]{\theta^{#1}}                                   %arrival at t
\newcommand{\onl}{\textsc{Online}\xspace}
\newcommand{\off}{\textsc{Offline}\xspace}
\newcommand{\reg}{\textsc{Reg}\xspace}
\DeclareMathOperator{\Bin}{Bin}
\DeclareMathOperator{\Mult}{Multinomial}
\DeclareMathOperator{\Poiss}{Poisson}
\DeclareMathOperator{\Ber}{Bernoulli}

\DeclareMathOperator{\argmin}{argmin}
\DeclareMathOperator{\argmax}{argmax}

\newcommand{\Xt}{X^t}									%fluid solution
\newcommand{\Xts}{X^{\star t}}							%offline solution
\newcommand{\pon}{\pi^{\text{on}}}						%online policy
\newcommand{\poff}{\pi^{\text{off}}}	
\newcommand{\rl}{\partial r}								%marginal compensation
\newcommand{\Rl}{\partial R}                                %marginal comp as a r.v.
								%scaled A
								%scaled B
\newcommand{\Ac}{\mathcal{A}}							%action space
\newcommand{\St}{\mathcal{S}}							%state space
\newcommand{\J}{\Theta}%{\mathcal{J}}                            %arrival space
\newcommand{\Tr}{\mathcal{T}}							%transition function
\renewcommand{\Re}{\mathcal{R}}                         %reward function
\newcommand{\vx}{\mathbf{x}}							%vector x
\newcommand{\calF}{\mathcal{F}}                         %sigma-algebra
\newcommand{\jstar}{{j^\star}}

\newcommand{\pb}{\bar p}                                %survival probability

\renewcommand{\paragraph}[1]{\noindent\textbf{#1}}

%algorithm configuration
\usepackage{algorithm}
\usepackage[noend]{algpseudocode}

\newenvironment{proofof}[1]{%
	\begin{proof}[{\sc Proof of #1}]%
	}{%
	\end{proof}%
}

\usepackage{epigraph}
\setlength\epigraphwidth{8cm}
\setlength\epigraphrule{0pt}

%Load cleveref last
\usepackage[capitalize]{cleveref}

%------------Personal commands end --------------

% Natbib setup for author-year style
\usepackage[square,sort,comma,numbers]{natbib}
%

% In the reviewing and copyediting stage enter the manuscript number.
%\MANUSCRIPTNO{} % When the article is logged in and DOI assigned to it,
                 %   this manuscript number is no longer necessary

%%%%%%%%%%%%%%%%
\begin{document}
%%%%%%%%%%%%%%%%

% Outcomment only when entries are known. Otherwise leave as is and
%   default values will be used.
%\setcounter{page}{1}
%\VOLUME{00}%
%\NO{0}%
%\MONTH{Xxxxx}% (month or a similar seasonal id)
%\YEAR{0000}% e.g., 2005
%\FIRSTPAGE{000}%
%\LASTPAGE{000}%
%\SHORTYEAR{00}% shortened year (two-digit)
%\ISSUE{0000} %
%\LONGFIRSTPAGE{0001} %
%\DOI{10.1287/xxxx.0000.0000}%

% Author's names for the running heads
% Sample depending on the number of authors;
% \RUNAUTHOR{Jones}
 \RUNAUTHOR{Alberto Vera and Siddhartha Banerjee}

% Title or shortened title suitable for running heads. Sample:
% \RUNTITLE{Bundling Information Goods of Decreasing Value}
% Enter the (shortened) title:
\RUNTITLE{The Bayesian Prophet: A Low-Regret Framework for Online Decision Making}

% Full title. Sample:
% Enter the full title:
\TITLE{The Bayesian Prophet: A Low-Regret Framework for Online Decision Making}

% Block of authors and their affiliations starts here:
% NOTE: Authors with same affiliation, if the order of authors allows,
%   should be entered in ONE field, separated by a comma.
%   \EMAIL field can be repeated if more than one author
\ARTICLEAUTHORS{%
\AUTHOR{Alberto Vera, Siddhartha Banerjee}
\AFF{Cornell University, \EMAIL{aav39@cornell.edu}, \EMAIL{sbanerjee@cornell.edu}}
} % end of the block

\ABSTRACT{%
We develop a new framework for designing online policies given access to an oracle providing statistical information about an offline benchmark. 
Having access to such prediction oracles enables simple and natural Bayesian selection policies, and raises the question as to how these policies perform in different settings. 
Our work makes two important contributions towards this question: 
First, we develop a general technique we call \emph{compensated coupling} which can be used to derive bounds on the expected regret (i.e., additive loss with respect to a benchmark) for any online policy and offline benchmark. 
Second, using this technique, we show that a natural greedy policy, which we call \emph{the Bayes Selector}, has constant expected regret (i.e., independent of the number of arrivals and resource levels) for a large class of problems we refer to as Online Allocation with finite types, which includes widely-studied Online Packing and Online Matching problems. 
Our results generalize and simplify several existing results for Online Packing and Online Matching, and suggest a promising pathway for obtaining oracle-driven policies for other online decision-making settings.
}%

% Fill in data. If unknown, outcomment the field
\KEYWORDS{Stochastic Optimization, Prophet Inequalities, Approximate Dynamic Programming, Revenue Management, Online Allocation, Online Packing, Online Matching.} 
%\HISTORY{This paper was first submitted on April 12, 1922 and has been with the authors for 83 years for 65 revisions.}

\maketitle

\epigraph{\textit{``Life is the sum of all your choices."}}{--- \textup{Albert Camus}}

\section{Introduction}

\label{sec:intro}
Everyday life is replete with settings where we have to make decisions while facing uncertainty over future outcomes. 
Some examples include allocating cloud resources, matching an empty car to a ride-sharing passenger, displaying online ads, selling airline seats, etc.
In many of these instances, the underlying arrivals arise from some known generative process. 
Even when the underlying model is unknown, companies can turn to ever-improving machine learning tools to build predictive models based on past data.
This raises a fundamental question in online decision-making: \emph{how can we use predictive models to make good decisions?} 

Broadly speaking, an online decision-making problem is defined by a current state and a set of actions, which together determine the next state as well as generate rewards.
In Markov decision processes (MDPs), the rewards and state transitions are also affected by some random shock.
Optimal policies for such problems are known only in some special cases, when the underlying problem is sufficiently simple, and knowledge of the generative model sufficiently detailed. 
For many problems of interest, an MDP approach is infeasible due to two reasons: $(1)$ insufficiently detailed models of the generative process of the randomness, and $(2)$ the complexity of computing the optimal policy (the so-called `curse of dimensionality'). 
These shortcomings have inspired a long line of work on approximate dynamic programming (ADP). 

We focus on a general class of online resource allocation problems, which we refer to as Online Allocation (cf. Section~\ref{sec:def_allocation}), and which generalizes two important classes of online decision-making problems: Online Packing and Online Matching.
In brief, Online Allocation problems involve a set of $d$ distinct resources, and a principal with some initial budget vector $B\in\N^d$ of these resources, which have to be allocated among $T$ incoming agents. 
Each agent has a type comprising of some specific requirements for resources and associated rewards. 
The exact type becomes known only when the agent arrives. 
The principal must make irrevocable accept/reject decisions to try and maximize rewards, while obeying the budget constraints.

Online Packing and Online Matching problems are fundamental in MDP theory; they have a rich existing literature and widespread applications in many domains.
Nevertheless, our work develops new policies for both these problems which admit performance guarantees that are order-wise better than existing approaches. 
These policies can be stated in classical ADP terms (for example, see~\cref{alg:fluid,alg:fluid_matching}), but draw inspiration from ideas in Bayesian learning. 
In particular, our policies can be derived from a meta-algorithm, the Bayes selector (\cref{alg:bayes_selector}), which makes use of a black-box \emph{prediction oracle} to obtain statistical information about a chosen offline benchmark, and then acts on this information to make decisions.
Such policies are simple to define and implement in practice, and our work provides new tools for bounding their \emph{regret} vis-\'{a}-vis the offline benchmark. 
Thus, we believe that though our theoretical guarantees focus on a particular class of Online Allocation problems, our approach provides a new way for designing and analyzing much more general online decision-making policies using predictive models.

\subsection{Our Contributions}

We believe our contributions in this work are threefold:
\begin{enumerate}
\item \emph{Technical}: We present a \emph{new stochastic coupling technique}, which we call the \emph{compensated coupling}, for evaluating the regret of online decision-making policies vis-\`{a}-vis offline benchmarks.
\item \emph{Methodological}: Inspired by ideas from Bayesian learning, we propose a class of policies, expressed as the \emph{Bayes Selector}, for general online decision-making problems.
\item \emph{Algorithmic}: For a wide class of problems which we refer to as Online Allocation (which includes Online Packing and Online Matching problems), we prove that the Bayes Selector gives expected regret guarantees that are \emph{independent of the size of the state-space}, i.e., constant with respect to the horizon length and budgets. 
\end{enumerate}

\paragraph{Organization of the paper:} 
The rest of the paper is organized as follows: In~\cref{sec:prelim}, we introduce a general problem, called \emph{Online Allocation}, which includes as special cases the Multi-Secretary, Online Packing and Online Matching problems, and also more general settings involving agents with complex valuations over bundles. We also define the prophet benchmark, and discuss shortcomings of prevailing approaches. 
In~\cref{sec:coupling}, we present our main technical tool, the Compensated Coupling, in the general context of finite-state finite-horizon MDPs; we illustrate its use by applying it to the ski-rental problem.
In \cref{sec:bayes_selector} we introduce the Bayes Selector policy, and discuss how the compensated coupling provides a generic recipe for obtaining regret bounds for such a policy. 
In \cref{sec:regret_rm,sec:matching_type}, we use these techniques for the Online Packing and Online Matching problems; we analyze them separately to exploit their structure and obtain stronger results.
Finally, in \cref{sec:general_allocation} we analyze the most general problem (Online Allocation).

In particular, in~\cref{sec:regret_rm}, we propose a Bayes Selector policy for Online Packing and demonstrate the following performance guarantee:
\begin{theorem}[Informal]
For any Online Packing problem with a finite number of resource types and arrival types, under mild conditions on the arrival process, the regret of the Bayes Selector is independent of the horizon $T$ and budgets $B$ (in expectation and with high probability). 
\end{theorem}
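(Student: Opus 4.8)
The plan is to instantiate the compensated coupling of \cref{sec:coupling} with the Bayes Selector (\cref{alg:bayes_selector}) as the online policy and the offline (prophet) optimum as the benchmark, and to argue that the resulting bound collapses to a quantity independent of $\T$ and $B$. Here ``mild conditions on the arrival process'' should be read as minor regularity (bounded rewards, and enough regularity that the prediction oracle --- the conditional expectation of the benchmark's decisions --- is well defined); the real structural hypothesis is that there are finitely many resource and arrival types, which is what makes constant regret possible at all. Running the Bayes Selector and the benchmark on the same arrival sequence, the compensated coupling charges nothing at a step where the two policies agree (the coupled trajectories re-synchronize) and at most $\rmax$ at a step where they differ, so
\begin{equation*}
\E[\reg] \;\le\; \rmax \cdot \E[N], \qquad N \defeq \sum_{t=1}^{\T} \mathds{1}\crl*{\text{the Bayes Selector and the benchmark act differently at step } t} .
\end{equation*}
It therefore suffices to bound $\E[N]$, and its upper tail, by something depending only on $d$, the number of arrival types, and reward/probability parameters.

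The next step is to localize these disagreements. When an agent of type $j$ arrives at step $t$ with residual budget $B^t$, the Bayes Selector accepts exactly when, for every resource $i$ used by $j$, the oracle's conditional expectation $\widehat D_i^t$ of the benchmark's remaining consumption of resource $i$ leaves room (up to the integer rounding in the policy), i.e.\ when $B_i^t \gtrsim \widehat D_i^t$; and the benchmark is itself essentially a threshold policy in the residual budgets. Hence the two can disagree on this agent only when $\abs*{B_i^t - \widehat D_i^t} = O(1)$ for some resource $i$ it uses, so $N \le \sum_{i=1}^{d} N_i$ where $N_i$ counts the steps that fall in this $O(1)$-band for resource $i$. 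For any resource whose fluid consumption is bounded away from its budget, the gap $B_i^t - \widehat D_i^t$ exits the band after an $O(1)$-length transient and never returns, giving $\E[N_i] = O(1)$ for free.

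The substantive case is a resource $i$ that is (near-)critically binding: there the residual-budget and predicted-residual-demand processes track one another, their gap is a martingale with $O(1)$ increments, and a crude estimate of its time in an $O(1)$-band only yields $O(\sqrt{\T})$. What rescues the constant bound is that in this regime the benchmark is (almost) indifferent between serving and not serving the marginal type-$j$ agent: even when the Bayes Selector's raw action disagrees with one benchmark-optimal action, it still agrees with another, and the compensated coupling can be re-coupled to that one at compensation $o(1)$ per step; aggregated over the entire near-critical phase, the total compensation is controlled by the $O(1)$ final discrepancy between the online and offline consumption of resource $i$ rather than by the number of near-critical steps. Summing the two cases over the $d$ resources yields $\E[\reg] = O(\rmax \cdot d \cdot (\text{type-dependent constant}))$, independent of $\T$ and $B$. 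The high-probability statement follows from the same estimates: the per-step compensations have geometric tails, so concentrating $N$ via a bounded-difference or exponential-martingale inequality and union-bounding over $t \le \T$ costs only a constant factor.

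The main difficulty will lie in the near-critical analysis --- making ``near-indifference implies negligible compensation'' quantitative and uniform in $B$ and $\T$. One has to pin down precisely how the Bayes Selector's rounding of the oracle output aligns with the set of benchmark-optimal actions near criticality (this is exactly where it matters that the policy acts on the benchmark's conditional expectation), and then bound the compensation accumulated over a phase whose length genuinely grows with $\T$ by a quantity that does not --- rather than settling for the easy $O(\sqrt{\T})$. Finiteness of the type set is used heavily here, both to keep the increments of the gap processes bounded and to keep the number of distinct threshold configurations available to the benchmark bounded.
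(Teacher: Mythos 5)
There is a genuine gap, and it sits exactly where you flag ``the main difficulty.'' Your opening reduction is right in spirit --- invoke the compensated coupling and bound the expected number of disagreements --- but you then define a disagreement as a step where the Bayes Selector and the benchmark \emph{act differently}, and proceed to track the gap between \onl's residual budget and a predicted residual demand. That is the traditional ``follow a fixed offline trajectory'' analysis that the paper deliberately abandons. It forces you into the band/martingale argument, which in the critically-binding (dual-degenerate) case caps out at $O(\sqrt{\T})$, and your proposed rescue --- that near criticality the benchmark is nearly indifferent so the per-step compensation is $o(1)$ and aggregates to $O(1)$ --- is asserted without any mechanism. This is not a routine detail to fill in: \cref{prop:bad_fluid} shows that in precisely this degenerate regime the fluid quantity you are implicitly tracking is $\Omega(\sqrt{\T})$ away from the hindsight optimum, so an argument organized around ``distance to a critical threshold'' has to overcome a real obstruction, and your sketch does not.

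The paper's proof of \cref{theo:reg_general} avoids the critical case entirely by two moves you are missing. First, in the compensated coupling (\cref{lem:coupling}) the disagreement event $Q(t,S^t)$ is not ``the two policies differ'' but ``\off, \emph{re-coupled to \onl's current budget $B^t$} and knowing the future arrivals, is not satisfied with \onl's action.'' Consequently both the fluid LP $(P_t)$ and the ex-post LP $(P^\star_t)$ in \cref{eq:coupled_lp} share the same right-hand side $B^t$, and they differ only in the arrivals vector ($\E[Z(t)]$ versus $Z(t)$). Second, the LP sensitivity bound (\cref{prop:lipschitz}) converts this into $\norm{\Xt-\Xts}_\infty\le\kappa\norm{Z(t)-\E[Z(t)]}_\mu$, so a disagreement on a type-$j$ arrival (e.g.\ $\Xt_j\ge\E[Z_j(t)]/2$ while $\Xts_j<1$) forces a deviation of $Z(t)$ from its mean of relative size $\E[Z_j(t)]/2\kappa$. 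The ``mild condition'' is then a concrete tail bound on the arrival process (\cref{def:all_time}), giving $q_j(t,B^t)\le c_j/t^2$ \emph{uniformly over all budgets} $b\in\N^d$; summing over $t$ gives the constant, and the high-probability claim follows from \cref{lem:bound_bernoulli} applied to the resulting sum of Bernoulli indicators rather than from a bounded-difference inequality. In short: the budget never appears in the disagreement bound, so no decomposition into slack versus near-critical resources is needed, and the hard step you defer simply does not arise.
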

In more detail, our regret bounds depend on the `resource matrix' $A$ and the distribution of arriving types, but are independent of $T$ and $B$.
Moreover, the results holds under weak assumptions on the arrival process, including Multinomial and Poisson arrivals, time-dependent processes, and Markovian arrivals. 
This result generalizes prior and contemporaneous results~\citep{Reiman_nrm,jasin2012,wang_resolve,wu2015algorithms,itai_secretary}.
We show similar results for Online Matching problems in \cref{sec:matching_type}. 

\subsection{Related Work}\label{sec:related}

Our work is related to several active areas of research in MDPs and online algorithms.

\paragraph{Approximate Dynamic Programming:}
The complexity of computing optimal MDP solutions can scale with the state space, which often makes it impractical (the so-called `curse of dimensionality'~\cite{powell2011approximate}). This has inspired a long line of work on \emph{approximate dynamic programming} (ADP)~\cite{powell2011approximate,tsitsiklis2001regression} to develop lower complexity heuristics. 
Although these methods often work well in practice, they require careful choice of basis functions, and any bounds are usually in terms of quantities which are difficult to interpret. Our work provides an alternate framework, which is simpler and has interpretable guarantees.

\paragraph{Model Predictive Control:} 
A popular heuristic for ADP and control which is closer to our paradigm is that of \emph{model predictive control} (MPC)~\cite{morari1993model,borrelli2003constrained,ciocan2012model}. MPC techniques have also been connected with online convex optimization (OCO) ~\cite{huang2015receding,chen2016using,chen2015online} to show how prediction oracles can be used for OCO, and applying these policies to problems in power systems and network control. These techniques however require continuous controls, and do not handle combinatorial constraints.

\paragraph{Information Relaxation:}
Parallel to the ADP focus on developing better heuristics, there is a line of work on deriving upper bounds via martingale duality, sometimes referred to as information relaxations~\cite{brown2013optimal,desai2012pathwise,brown2014information}. 
The main idea in these works is to obtain performance bounds for heuristic policies work by defining more refined outer bounds; in particular, this can be done by adding a suitable martingale term to the current reward, in order to penalize `future information'. 
Our offline benchmarks serve a similar purpose; however, a critical difference is that instead of using these to analyze a given heuristic, we use the benchmarks directly to derive control policies.

\paragraph{Online Packing and Prophet Inequalities:}
The majority of work focuses on competitive ratio bounds under worst-case distributions. In particular, there is an extensive literature on the so-called Prophet Inequalities, starting with the pioneering work of \citep{hill_iid}, to more recent extensions and applications to algorithmic economics~\citep{kleinberg2012matroid,duetting2017prophet,correa2017posted,alaei_bayesian}.
We note that any competitive ratio guarantee implies a $O(T)$ expected regret, in comparison to our $O(1)$ expected regret guarantees -- the cost for this, however, is that our results hold under more restrictive assumptions on the inputs.
For example, the policy suggested by \citep{duetting2017prophet} is static and \citep[Theorem 1]{itai_secretary} shows that any static policy has $\Omega(\sqrt{T})$ expected regret, hence it cannot yield a strong guarantee like ours.

\paragraph{Distribution-agnostic and Adversarial Models:}
Though we focus only on the case where the input is drawn from a stochastic process, we note that there is a long line of work on Online Packing with adversarial inputs~\citep{buchbinder2009online,buchbinder2009design,kesselheim2018primal}, and also distribution-agnostic approaches~\citep{badanidiyuru2013bandits,devanur2019near}.
More generally, there is a large body of work on using sublinear expected regret algorithms for solving online linear and convex programs
\citep{agrawal2014fast,gupta2014experts}. The algorithms in these works are incomparable to ours in that, while they cannot get constant expected regret in our setting (stochastic input, finite type space), they provide guarantees under much weaker assumptions.

\paragraph{Regret bounds in Stochastic Online Packing:}
For these problems, regret is the most meaningful metric to study, see \citep{cong_shi} for a discussion, where approximations to the regret are studied.
The first work to prove constant expected regret in a context similar to ours is \citep{itai_secretary}, who prove a similar result for the multi-secretary setting with multinomial arrivals; we strengthen their result in \cref{theo:secretary}.
This result is relevant to a long line of work in applied probability.
Some influential works are \citep{Reiman_nrm}, which provides an asymptotically optimal policy under the diffusion scaling, and~\citep{jasin2012} who provide a resolving policy with constant expected regret under a certain non-degeneracy condition. In contrast, degeneracy plays no role in the performance of our algorithms.
More recently, \citep{wang_resolve} partially extended the result of \citep{itai_secretary} for more general packing problems; their guarantee is only valid for i.i.d. Poisson arrival processes and when the system is scaled linearly, i.e., when $B$ is proportional to $T$ (our results and \citep{itai_secretary} make no such assumption).
In \cref{sec:numerics}, we demonstrate with a numerical study that the Bayes Selector far outperforms all these previous policies.

\section{Problem Setting and Overview of Results}
\label{sec:prelim}

As we mentioned in the introduction, our contributions in this work are two-fold -- (i) we give a technique to analyze the regret of any MDP, and (ii) we apply it to specific problems to obtain constant regret. 
Our focus in this work is on the subclass of \emph{Online Packing problems with stochastic inputs}. 
This is a subclass of the wider class of \emph{finite-horizon online decision-making problems}: given a time horizon $\T\in\N$ with discrete time-slots $t = \T,\T-1,\ldots,1$, we need to make a decision at each time leading to some cumulative reward. 
Note that throughout our time index $t$ indicates the \emph{time to go}.
We present the details of our technical approach in this more general context whenever possible, indicating additional assumptions when required.

In what follows, we use $[k]$ to indicate the set $\crl{1,2,\ldots,k}$, and denote the $(i,j)$-th entry of any given matrix $A$ interchangeably by $A_{i,j}$ or $A(i,j)$.
We work in an underlying probability space $(\Omega,\calF,\Pr)$, and the complement of any event $Q\subseteq\Omega$ is denoted $\bar Q$.
For any optimization problem $(P)$, we use $v(P)$ to indicate its objective value.
If $S$ is a finite set, $\abs{S}$ denotes cardinality. 
The set $\N$ of naturals includes zero.

\subsection{The Online Allocation Problem}\label{sec:def_allocation}

We now present a generic problem, which we refer to as \emph{Online Allocation}, that encompasses both Online Matching and Online Packing.
The setup is as follows: There are $d$ distinct resource-types denoted by the set $[d]$, and at time $t=\T$, we have an initial availability (budget) vector $B = (B_1,B_2,\ldots,B_d)\in\N^d$.
At every time $t=T,T-1,\ldots,1$, an arrival with \emph{type} $\Jt$ is drawn from a finite set of $n$ distinct types $\J = [n]$, via some distribution which is known to the algorithm designer (henceforth referred to as the \emph{principal}). 
We denote $Z(t) = (Z_1(t),Z_2(t),\ldots,Z_n(t))\in \N^n$ as the cumulative vector of the last $t$ arrivals, where $Z_j(t)\defeq\sum_{\tau\leq t}\In{\JT{\tau}=j}$.

Each arriving agent is associated with a \emph{choice-over-bundles}. Formally, an arriving agent of type $j$ desires any one among a collection of multisets $S_j\subseteq 2^{[d]}$; the principal can allocate any $s\in S_j$ (referred to as a \emph{bundle}) to the agent, thereby obtaining a reward $r_{sj}$ and consuming one unit of each resource $i\in s$.
Observe that we do not assume additive valuations, e.g, we do not require $r_{\crl{1,2}j} = r_{\crl{1}j}+r_{\crl{2}j}$. The model can naturally be extended to allow bundles to consume multiple units of any resource. 

At each time, the principal must decide whether to allocate a bundle to the request $\Jt$ (thereby generating the associated reward while consuming the required resources), or reject it (no reward and no resource consumption). 
Allocating a bundle requires that there is sufficient budget of each resource to cover the request. 
The principal's aim is to make irrevocable decisions so as to maximize overall rewards.

In \cref{sec:general_allocation}, we present results for this general problem.
We next describe three particular cases of the general problem, which are each of independent interest.
We analyze these three cases separately since they admit improved results over the general case.

\paragraph{Multi-Secretary.} This is a fundamental one dimensional instance.
In this problem, we have $B\in\N$ available positions and want to hire employees with the highest abilities (rewards).
There is one resource type $(d=1)$ with budget $B$; each employee occupies one unit of budget (one position).
This is an instance of Online Allocation, with $S_j=\crl{\crl{1}}$ for all $j$ (all candidates want the same resource) and rewards $r_{\crl{1}j}=r_j$.

\paragraph{Online Packing.} In this multidimensional problem, each request $j$ is associated to one bundle and one reward if allocated.
Specifically, we are given a consumption matrix $A\in\N^{d\times n}$, where $a_{ij}$ denotes the units of resource $i$ required to serve the request $j$ and, for each $j$, there is a reward $r_j$ if served.
This is an instance of Online Allocation, wherein agents of each type $j$ desire a single bundle: $S_j=\crl{\crl{a_{ij} \text{ units of } i\text{ for each resource } i\in[d]}}$.

\paragraph{Online Matching.} There are $d$ resources, but now each type $j$ wants \emph{at most one resource} from among a given set of resources. Formally, a type $j$ agent wants any from $A_j$ instead of all from $A_j$.
The types can be represented by a reward matrix $r\in \Rp^{d\times n}$ and adjacency matrix $A\in \crl{0,1}^{d\times n}$; if the arrival is of type $j\in [n]$, we can allocate at most one resource $i$ such that $a_{ij}=1$, leading to a reward of $r_{ij}$.
This problem can be thought as online bipartite matching, see \cref{sec:matching_type} for details.
It corresponds to an instance of Online Allocation, with bundles $S_j=\crl{\crl{i} \text{ for each resource } i\in [d] \text{ s.t. } a_{ij}=1}$ and rewards $r_{\crl{i}j}=r_{ij}$.

\subsection{Arrival Processes} 
To specify the generative model for the type sequence $\JT{T},\JT{T-1},\ldots,\JT{1}$, an important subclass is that of \emph{stationary independent} arrivals, which further admits two widely-studied cases:\\
\noindent $1.$ The Multinomial process is defined by a known distribution $p\in\Rp^n$ over $[n]$; at each time, the arrival is of type $j$ with probability $p_j$, thus $Z(t)\sim \Mult(t,p_1,\ldots,p_n)$.\\
\noindent $2.$ The Poisson arrival process is characterized by a known rate vector $\lambda\in\Rp^n$. Arrivals of each class are assumed to be independent such that $Z_j(t)\sim \Poiss(\lambda_jt)$. 
Note that, although this is a continuous-time process, it can be accommodated in a discrete-time formulation by defining as many periods as arrivals  (see \cref{sec:poisson} for details).

We assume w.l.o.g.\ that $p_j>0$ and $\lambda_j>0$ for all $j\in [n]$ (if this is not the case for some $j$, that type never arrives and can be removed from the instance description).
More general models allow for non-stationary and/or correlated arrival processes -- for example, non-homogeneous Poisson processes, Markovian models (see \cref{ex:markov}), etc. 
An important feature of our framework is that it is capable of handling a wide variety of such processes in a unified manner, without requiring extensive information regarding the generative model. 
We discuss the most general assumptions we make on the arrival process in Section~\ref{sec:warm_up2}.

\subsection{The Offline Benchmark} 
Suppose a given problem is simultaneously solved by two `\emph{agents}', \onl and \off, who are primarily differentiated based on their access to information. 
\onl can only take \emph{non-anticipatory} actions, i.e., use available information only, whereas \off is allowed to make decisions with full  knowledge of future arrivals.
This is known in the literature as a \emph{prophet or full-information benchmark}.
Denoting the total collected rewards of \off and \onl  as $\voff$ and $\von$ respectively, we define the \emph{regret} to be the \emph{additive} loss $\reg\defeq\voff-\von$.
Observe that $\von$ depends on the policy used by \onl, the underlying policy will always be clear from context. 
Our aim is to design policies with low $\E[\reg]$.

For Online Packing, the solution to \off's problem corresponds to solving an integer programming problem.
A looser, but more tractable benchmark, is given by an LP relaxation of this policy: given arrivals vector $Z(\T)$, we assume \off solves the following:
\begin{equation}\label{eq:off_problem}
\begin{array}{rrl}
P[Z(\T),B]:\quad \max  & r'x& \\
\text{s.t.}&  Ax&\leq\,\, B \\
& x&\leq\,\, Z(\T) \\
& x&\geq\,\, 0.
\end{array}
\end{equation}

\paragraph{The unavoidable regret of the fluid benchmark}:
The most common technique for obtaining policies for Online Packing is based on the so-called fluid (a.k.a.\ deterministic or ex ante) LP benchmark $(P[\E[Z(\T)],B])$, where $(P)$ is defined in \cref{eq:off_problem}. 
It is easy to see via Jensen's Inequality that $v(P[\E[Z(\T)],B]) \geq \E[v(P[Z(\T),B])]$, and hence the fluid LP is an upper bound for any online policy. 
Although the use of this fluid benchmark is the prevalent tool to bound the regret in Online Packing problems \cite{talluri2006theory,Reiman_nrm,jasin2012,wu2015algorithms}, the following result shows that the approach of using $v(P[\E[Z(\T)],B])$ as a benchmark can never lead to a constant expected regret policy, as the fluid benchmark can be far off from the optimal solution in hindsight. 
\begin{proposition}
\label[proposition]{prop:bad_fluid}
For any Online Packing problem, if the arrival process satisfies the Central Limit Theorem and the fluid LP is dual degenerate, i.e., the optimal dual variables are not unique, then $v(P[\E[Z(\T)],B]) - \E[v(P[Z(\T),B])] = \Omega(\sqrt{\T})$.
\end{proposition}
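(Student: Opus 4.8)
The plan is to leverage the concavity and piecewise-linearity of the LP value function $z\mapsto v(P[z,B])$ and to convert the degeneracy hypothesis into genuine curvature at the mean arrival vector. Write $\bar z\defeq\E[Z(\T)]$, so that the fluid benchmark is $v(P[\bar z,B])$; Jensen's inequality (as noted in the text) already gives $v(P[\bar z,B])\ge\E[v(P[Z(\T),B])]$, and we must show the gap is $\Omega(\sqrt{\T})$. The crucial point is that a \emph{single} optimal dual solution of the fluid LP yields a \emph{global} linear over-estimator of $v(P[\cdot,B])$: let $\pi$ denote the dual multipliers of the constraint $Ax\le B$ and $\psi$ those of $x\le z$, so that dual feasibility reads $A^{\top}\pi+\psi\ge r$, $\pi,\psi\ge0$, which does not involve the right-hand side. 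Hence any $(\pi,\psi)$ that is dual-optimal for $P[\bar z,B]$ remains dual-feasible for $P[z,B]$ for every $z$, and weak duality gives $v(P[z,B])\le\pi\cdot B+\psi\cdot z$, with equality at $z=\bar z$ by strong duality.

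Next I would use dual degeneracy: pick two distinct optimal dual solutions $(\pi_1,\psi_1)\ne(\pi_2,\psi_2)$ of the fluid LP and set $w\defeq\psi_1-\psi_2$. Taking the pointwise minimum of the two over-estimators from the previous step, and using $\max(a,b)=\tfrac12(a+b)+\tfrac12\abs{a-b}$,
\[
v(P[\bar z,B])-v(P[z,B])\;\ge\;\max_{k\in\{1,2\}}\psi_k\cdot(\bar z-z)\;=\;\tfrac12(\psi_1+\psi_2)\cdot(\bar z-z)+\tfrac12\abs{w\cdot(\bar z-z)}.
\]
Substituting $z=Z(\T)$ and taking expectations kills the first term because $\E[Z(\T)]=\bar z$, leaving
\[
v(P[\bar z,B])-\E[v(P[Z(\T),B])]\;\ge\;\tfrac12\,\E\abs{w\cdot(Z(\T)-\bar z)}.
\]

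It then remains to show $\E\abs{w\cdot(Z(\T)-\bar z)}=\Omega(\sqrt{\T})$. First, $w\ne0$: the degeneracy relevant here is non-uniqueness of the $\psi$-component of the dual optimum (equivalently, of the superdifferential of $z\mapsto v(P[z,B])$ at $\bar z$), so the two optima can be chosen with distinct $\psi$'s. Second, and more delicately, $w$ must not be annihilated by the fluctuations of $Z(\T)$: by the CLT hypothesis, $\tfrac1{\sqrt{\T}}\bigl(w\cdot Z(\T)-\E[w\cdot Z(\T)]\bigr)$ converges in distribution to $\calN(0,\sigma_w^2)$ with $\sigma_w^2=\lim_{\T}\tfrac1{\T}\operatorname{Var}(w\cdot Z(\T))$; for the Poisson process $\sigma_w^2=\sum_j w_j^2\lambda_j>0$ automatically, and for the multinomial process one additionally checks $w\notin\operatorname{span}(\mathbf 1)$ --- which holds because adding a positive multiple of $\mathbf 1$ to $\psi$ strictly increases the dual objective $\pi\cdot B+\psi\cdot\bar z$ and hence cannot preserve optimality --- so again $\sigma_w^2>0$. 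Finally, since $x\mapsto\abs{x}$ is continuous and nonnegative, lower semicontinuity of the mean under weak convergence gives $\liminf_{\T}\,\T^{-1/2}\,\E\abs{w\cdot(Z(\T)-\bar z)}\ge\E\abs{\calN(0,\sigma_w^2)}=\sigma_w\sqrt{2/\pi}>0$, which is the claim.

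The main obstacle is the second point above: ensuring that the dual degeneracy is \emph{visible} to the arrival process, i.e.\ that the two dual optima can be taken so that $w=\psi_1-\psi_2$ has variance $\Theta(\T)$ under $Z(\T)$ rather than lying in the null space of its covariance (the span of $\mathbf 1$ for multinomial arrivals); this is also where one must pin down the precise notion of ``dual degenerate'' being used. Everything else is soft --- the global over-estimator is immediate from weak duality, the $\max$-identity is elementary, and the final step is standard weak-convergence bookkeeping.
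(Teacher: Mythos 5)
Your proposal is correct and follows essentially the same route as the paper's proof: both exploit that any fluid-optimal dual solution remains dual feasible for every right-hand side, take the minimum of the two resulting linear over-estimators, use that they coincide at $\E[Z(\T)]$ to isolate the term $\tfrac12\E\abs{(\psi_1-\psi_2)\cdot(Z(\T)-\E[Z(\T)])}$ (the paper writes the equivalent positive-part form via indicators), and invoke the CLT to obtain a half-normal limit of order $\sqrt{\T}$. The one place you go beyond the paper---verifying that $w=\psi_1-\psi_2$ is not annihilated by the covariance of $Z(\T)$---is a genuine subtlety the paper glosses over, although your argument for the multinomial case (that adding $c\mathbf{1}$ to $\psi$ raises the dual objective) implicitly holds $\pi$ fixed and so does not fully close that point.
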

This gap has been reported in literature, both informally and formally (see~\citep{itai_secretary,wang_resolve}).
 For completeness we provide a proof in~\cref{sec:proofs}.
Note though that this gap does not pose a barrier to showing constant-factor competitive ratio guarantees, i.e. $O(T)$ expected regret; the fluid LP benchmark is widely used for prophet inequalities.
In contrast, the gap presents a barrier for obtaining $O(1)$ expected regret bounds. 
Breaking this barrier thus requires a fundamentally new approach.

\subsection{Overview of our Approach and Results}
Our approach can be viewed as a meta-algorithm that uses black-box prediction oracles to make decisions.
The quantities estimated by the oracles are related to our offline benchmark and can be interpreted as {probabilities of regretting each action in hindsight}.
A natural `Bayesian selection' strategy given such estimators is to \emph{adopt the action that is least likely to cause regret in hindsight}.
This is precisely what we do in~\cref{alg:bayes_selector}, and hence, we refer to it as the Bayes Selector.

Bayesian selection techniques are often used as heuristics in practice. 
Our work however shows that such policies in fact have excellent performance for online allocation; in particular, we show that for matching and packing problems:
\begin{enumerate}
    \item There are easy to compute estimators (in particular, ones which are based on simple adaptive LP relaxations) that, when used for \cref{alg:bayes_selector}, give constant expected regret for a wide range of distributions (see \cref{theo:secretary,theo:reg_general,theo:matching}).
    \item Using other types of estimators (for example, Monte Carlo estimates) in \cref{alg:bayes_selector} yields comparable performance guarantees (see  \cref{cor:general,cor:matching}). 
\end{enumerate}

At the core of our analysis is a \emph{novel stochastic coupling technique} for analyzing online policies based on offline (or \emph{prophet}) benchmarks. 
Unlike traditional approaches for regret analysis which try to show that an online policy tracks a fixed offline policy, our approach is instead based on \emph{forcing \off to follow \onl's actions}. 
We describe this in more detail in the next section.

\section{Compensated Coupling and the Bayes Selector}
\label{sec:coupling}

We introduce our two main technical ideas: $1.$ the \emph{compensated coupling} technique, and $2.$ the \emph{Bayes selector} heuristic for online decision-making. 
The techniques introduced here are valid for any generic MDP; in subsequent sections we specialize them to Online Allocation.

\subsection{MDPs and Offline Benchmarks}
\label{ssec:mdp}

The basic MDP setup is as follows: at each time $t=T,T-1,\ldots,1$ (where $t$ represents the time-to-go), based on previous decisions, the system \emph{state} is one of a set of possible states $\St$.
Next, nature generates an \emph{arrival} $\Jt\in\J$, following which we need to choose from a set of available \emph{actions} $\Ac$. 
The state updates and rewards are determined via a transition function $\Tr:\Ac\times \St\times\J \to \St$ and a reward function $\Re:\Ac\times \St\times\J\to \R$: for current state $s\in\St$, arrival $j\in\J$ and action $a\in\Ac$, we transition to the state $\Tr(a,s,j)$ and collect a reward $\Re(a,s,j)$.
Infeasible actions $a$ for a given state $s$ correspond to $\Re(a,s,j)=-\infty$.
The sets $\Ac,\St,\J$, as well as the measure over arrival process $\crl{\Jt:t\in[\T]}$, are known in advance.
Finally, though we focus mainly on maximizing rewards, the formalism naturally ports over to cost-minimization.

Recall that we adopt the view that the problem is simultaneously solved by two `\emph{agents}': \onl and \off.
\onl can only take \emph{non-anticipatory} actions while \off makes decisions with knowledge of future arrivals. 
To keep the notation simple, we restrict ourselves to deterministic policies for \off and \onl, thereby implying that the only source of randomness is due to the arrival process (our results can be extended to randomized policies). 

A sample path $\omega\in \Omega$ encodes the arrival sequence $\crl{\Jt:t\in[\T]}$.
In other words, there exists a unique sequence of types that is consistent with $\omega$; whenever we fix $\omega$, the type $\Jt$ is uniquely identified, but, for notational ease, we do not write $\Jt[\omega]$.
For a given sample-path $\omega\in\Omega$ and \emph{time $t$ to go},
\off's value function is specified via the \emph{deterministic} Bellman equations
\begin{equation}\label{eq:bellman_off}
\voff(t,s)[\omega] \defeq \max_{a\in\Ac}\crl{\Re(a,s,\JT{t})  + \voff(t-1,\Tr(a,s,\JT{t}))[\omega] },    
\end{equation}
with boundary condition $\voff(0,s)[\omega]=0$ for all $s\in\St$. 
The notation $\voff(t,s)[\omega]$ is used to emphasize that, given sample-path $\omega$,  \off's value function is a deterministic function of $t$ and $s$.

We require that the DP formulation in \cref{eq:bellman_off} is well defined.
For simplicity, we enforce this with the following assumption: there are some constants $c_1,c_2\geq 0$ such that $-c_1 \leq \max_{a\in\Ac}\Re(a,s,j) \leq c_2$ for all $s\in\St,j\in\J$.
In other words, every state has a feasible action and the maximum reward is uniformly bounded and attained.
The spaces $\St,\J,\Ac$ and functions $\Tr,\Re$ are otherwise arbitrary.
We enforce this assumption for clarity of exposition, but we observe that it can be further generalized, c.f. \citep[Volume II, Appendix A]{bertsekas1995dynamic}.

On the other hand, \onl chooses actions based on \emph{policy} $\pon$ defined as follows:
\begin{definition}[Online Policy]
An online policy $\pon$ is any collection of functions $\crl{\pon(t,s,j): t\in [T],s\in\St,j\in\J}$ such that, if at time $t$ the current state is $s$ and a type $j$ arrives, then \onl chooses the action $\pon(t,s,j)\in\Ac$.
The function $\pon(t,\cdot,\cdot)$ can depend only on $\crl{\JT{T},\cdots,\Jt}$, i.e., on the randomness observed at periods $\tau\geq t$ (the history).
\end{definition}

Let us denote $\crl{S^t:t\in [T]}$ as \onl's state over time, i.e., the stochastic process $S^t\in\St$ that results from following a given policy $\pon$.
We can write \onl's accrued value, for a given policy $\pon$, as 
\begin{align*}%\label{eq:rewards_onl}
\von(t,S^t)[\omega] \defeq \sum_{\tau\leq t} \Re(\pon(\tau,S^\tau,\JT{\tau}),S^\tau,\JT{\tau})[\omega]
\end{align*}
For notational ease, we omit explicit indexing of $\von$ on policy $\pon$.

On any sample-path $\omega$, we can define the \emph{regret} of an online policy to be the \emph{additive} loss incurred by \onl using $\pon$ w.r.t. \off, i.e., 
$$\reg[\omega]\defeq\voff(T,S^T)[\omega]-\von(T,S^T)[\omega]$$

\begin{remark}[Regret is Agnostic of \off Algorithm]\label{rem:obliv}
Our previous definition of $\reg$ depends only on the online policy $\pon$, but it does not depend on the policy (or algorithm) used by \off as long as it is optimal.
For example, in the case where there are multiple maximizers in the Bellman \cref{eq:bellman_off}, different tie-breaking rules for \off yield different algorithms, but all of them are optimal and have the same optimal value $\voff$.
\end{remark}

\subsection{The Compensated Coupling}
\label{ssec:compcoupling}
At a high-level, the compensated-coupling is a sample path-wise charging scheme, wherein we try to couple the trajectory of a given policy to a sequence of offline policies. 
Given any non-anticipatory policy (played by \onl), the technique works by making \emph{\off follow \onl} -- formally, we couple the actions of \off to those of \onl, while compensating \off to preserve its collected value along every sample-path. 

\begin{myexample}
Consider the multi-secretary problem with budget $B=1$ and three arriving types $\J=\crl{1,2,3}$ with $r_1>r_2>r_3$. 
The state space in this problem is $\St=\N$ and the action space is $\Ac=\crl{\text{accept},\text{reject}}$.
Suppose for $\T=4$ the arrivals on a given sample-path are $(\JT{4},\JT{3},\JT{2},\JT{1})=(1,2,1,3)$. 
Note that \off  will accept exactly one arrival of type $1$, but is indifferent to which arrival. 
While analyzing \onl, we have the freedom to choose a benchmark by specifying the tie-breaking rule for \off -- for example, we can compare \onl to an \off agent who chooses to \emph{front-load} the decision by accepting the arrival at $t=4$ (i.e., as early in the sequence as possible) or \emph{back-load} it by accepting the arrival at $t=2$.
In conclusion, for this sample path, the following two sequences of actions are optimal for \off: (accept,reject,reject,reject) and (reject,reject,accept,reject).

Suppose instead that we choose to reject the first arrival ($t=4$), and then want \off to accept the type-$2$ arrival at $t=3$ -- this would lead to a decrease in \off's final reward. 
The crucial observation is that we can still \emph{incentivize} \off to accept arrival type $2$ by offering a \emph{compensation} (i.e., additional reward) of $r_1-r_2$ for doing so.
The basic idea behind the compensated coupling is to generalize this argument.
We want \off to take \onl's action, hence we couple the states of \off and \onl with the use of compensations. 
\end{myexample}

We start with a general problem:
Given sample-path $\omega\in\Omega$ with arrivals $\crl{\Jt[\omega]: t\in[\T]}$, recall $\voff(t,s )[\omega]$ denotes \off's value starting from state $s$ with $t$ periods to go.
$\voff(t,s )[\omega]$ obeys the Bellman \cref{eq:bellman_off}.
The following definition is about the actions satisfying said Bellman Equations.

\begin{definition}[Satisfying Action] \label[definition]{def:satisfying}
Fix $\omega\in\Omega$.
For any given state $s$ and time $t$, we say \off is \emph{satisfied with an action $a$} at $(s,t)$ if $a$ is a maximizer in the Bellman equation, i.e., $$a\in\argmax_{\hat{a}\in\Ac}\left\{\Re(\hat{a},s,\JT{t}) +\voff(t-1,\Tr(\hat{a},s,\JT{t}) )[\omega]\right\}.$$
\end{definition}
Observe that $a$ may be satisfying for a sample path $\omega$ and not for some other $\omega'$; once the sample path is fixed, satisfying actions are unequivocally identified.

\begin{myexample}
Consider the multi-secretary problem with  $\T=5$, initial budget $B=2$, types $\J = \{1,2,3\}$ with $r_1>r_2>r_3$, and a particular sequence of arrivals $(\JT{5},\JT{4},\JT{3},\JT{2},\JT{1})=(2,3,1,2,3)$. 
The optimal value of \off is $r_1+r_2$, and this is achieved by accepting the sole type-$1$ arrival as well as any one out of the two type-$2$ arrivals.
At time $t=5$, \off is satisfied either accepting or rejecting $\JT{5}$.
Further, at $t=3$, for any budget $b>0$ the only satisfying action is to accept.
\end{myexample}

With the notion of satisfying actions, we can create a coupling as illustrated in \cref{fig:diagram}.
Although \off may be satisfied with multiple actions (see above example and \cref{rem:obliv}), its value remains unchanged under any satisfying action, i.e., any tie-breaking rule.
We define a valid policy $\poff$ for \off to be any \emph{anticipatory functional} such that, for every $\omega\in \Omega$, we have a different mapping to actions.
Formally, for every $\omega\in\Omega$, $\poff[\omega]:[\T]\times\St\times\J \to \Ac$ is a function satisfying the optimality principle:
\begin{align*}
\voff(t,s )[\omega]=\voff(t-1,\Tr(\poff(t,s,\Jt)[\omega],s,\Jt)) [\omega] +  
\Re(\poff(t,s,\Jt)[\omega],s,\JT{t}), \quad \forall t\in [T],s\in \St.
\end{align*} 

Next, we quantify by how much we need to compensate \off when \onl's action is not satisfying, as follows
\begin{definition}[Marginal Compensation]
\label[definition]{def:margincomp}
For action $a\in\Ac$, time $t\in[\T]$ and state $s\in\St$, we denote the random variable $\Rl$ and scalar $\rl$
\begin{align*}
\Rl(t,a,s ) &\defeq \voff(t,s ) -[\voff(t-1,\Tr(a,s,\Jt) )+\Re(a,s,\Jt)]\\
\rl(a,j) &\defeq \max\crl*{\Rl(t,a,s )[\omega]:t\in[\T],s\in\St, \omega\in\Omega\mbox{ s.t. } \Jt[\omega]=j}.
\end{align*}
\end{definition}
The random variable $\Rl$ captures exactly how much we need to compensate \off, while $\rl(a,j)$ provides a uniform (over $s,t$) bound on the compensation required when \onl errs on an arrival of type $j$ by choosing an action $a$. 
Though there are several ways of bounding $\Rl(t,a,s )$, we choose $\rl(a,j)$ as it is clean, expressive, and admits good bounds in many problems as the next example shows.

\begin{myexample}
For Online Packing problems define $\rmax\defeq\max_{j\in [n]}r_j$ as the maximum reward over all classes.
The state space in this problem is $\St=\N^d$ and the actions space is $\Ac=\crl{\text{accept},\text{reject}}$.
Also, for simplicity, we assume that all resource requirements are binary, i.e., $a_{ij}\in\{0,1\}\,\forall\,i\in[d],j\in[n]$. 
For a given sample-path $\omega\in\Omega$ and any given budget $b\in\N^d$, if \off decides to accept the arrival at $t$, we can instead make it reject the arrival while still earning a greater or equal reward by paying a compensation of $\rmax$.
On the other hand, note that \off can at most extract $\rmax$ in the future for every resource $\Jt$ uses; hence on sample-paths where \off wants to reject $\Jt$, it can be made to accept $\Jt$ instead with a compensation of $\norm{A_{\Jt}}_1\rmax\leq d\rmax$.
In conclusion, we have $r_j\leq \rl(a,j)\leq d\rmax$.
\end{myexample}

Recall that $S^t$ denotes the random process of \onl's state.
Additionally, we denote $\Ron(t,S^t)[\omega]\defeq \Re(\pon(t,S^t,\theta^t),S^t,\Jt)$ as the reward collected by \onl at time $t$, and hence $\von(\T,S^\T )[\omega]=\sum_{t\in[T]}\Ron(t,S^t)[\omega]$. 

The final step is to fix \off's policy to be one which `follows \onl' as closely as possible.
For this, given a policy $\pon$, on any sample-path $\omega$ we set $\poff(t,s,\Jt)[\omega]=\pon(t,s,\Jt)[\omega]$ if $\pon(t,s,\Jt)[\omega]$ is satisfying, and otherwise set $\poff(t,s,\Jt)[\omega]$ to an arbitrary satisfying action.
In other words, we start with any valid policy $\poff$ and, for every $\omega\in \Omega$, we modify it as described to obtain another valid policy.
Abusing notation, we still call $\poff$ this modified policy.
Recall that this modification does not change the regret guarantees (see \cref{rem:obliv}).

\begin{definition}[Disagreement Set]\label{def:disagreement}
For any state $s$ and time $t$, and any action $a\in \Ac$, we define the \emph{disagreement set} $Q(t,a,s)$ to be the set of sample-paths where $a$ is not satisfying for \off, i.e., 
\begin{align*}
Q(t,a,s)\defeq\crl*{\omega\in\Omega :\voff(t,s )[\omega]>\Re(a,s,\Jt) +  
 \voff(t-1,\Tr(a,s,\Jt) )[\omega]}.
\end{align*}
\end{definition}
Finally, let $Q(t,s)\subseteq \Omega$ be the event when \off cannot follow \onl, i.e., $Q(t,s)\defeq Q(t,\pon(t,s,\Jt),s)$.
Note that $Q(t,s)$ depends on $\pon$, but we omit the indexing since $\pon$ is clear from context. 
\emph{Only under $Q(t,s)$ we need to compensate \off}, hence we obtain the following.

\begin{figure}
\includegraphics[width=0.45\textwidth]{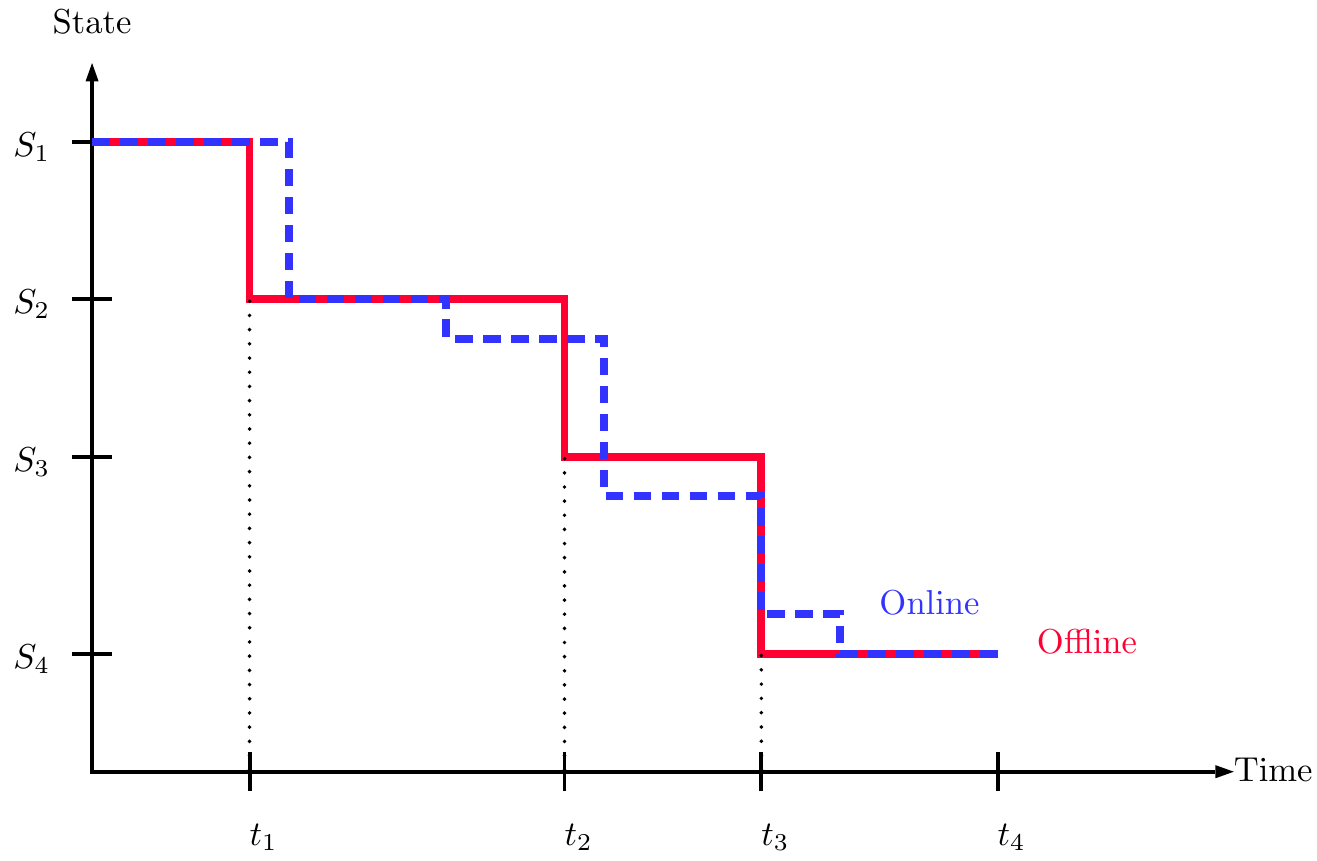}%
\includegraphics[width=0.45\textwidth]{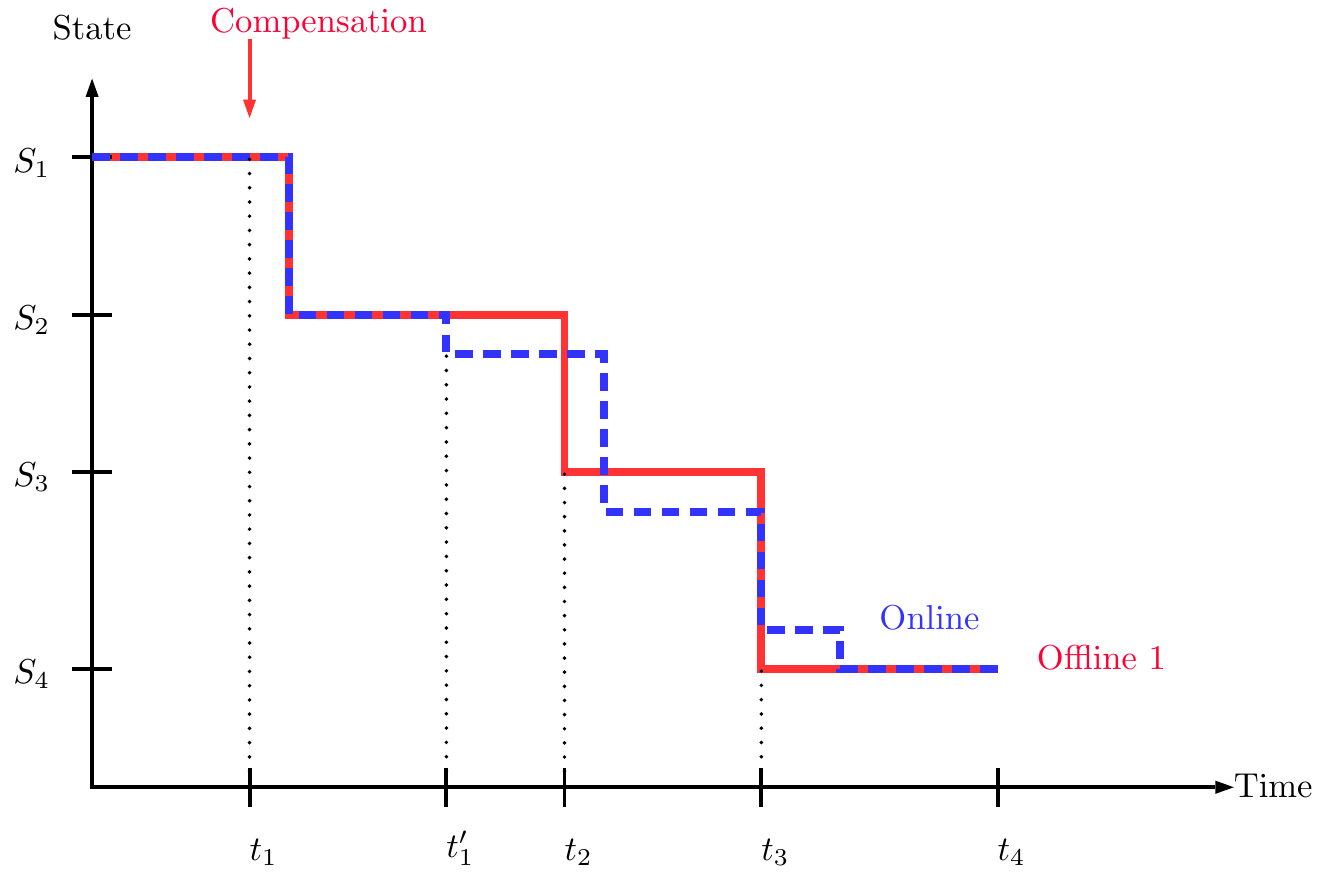}

\includegraphics[width=0.45\textwidth]{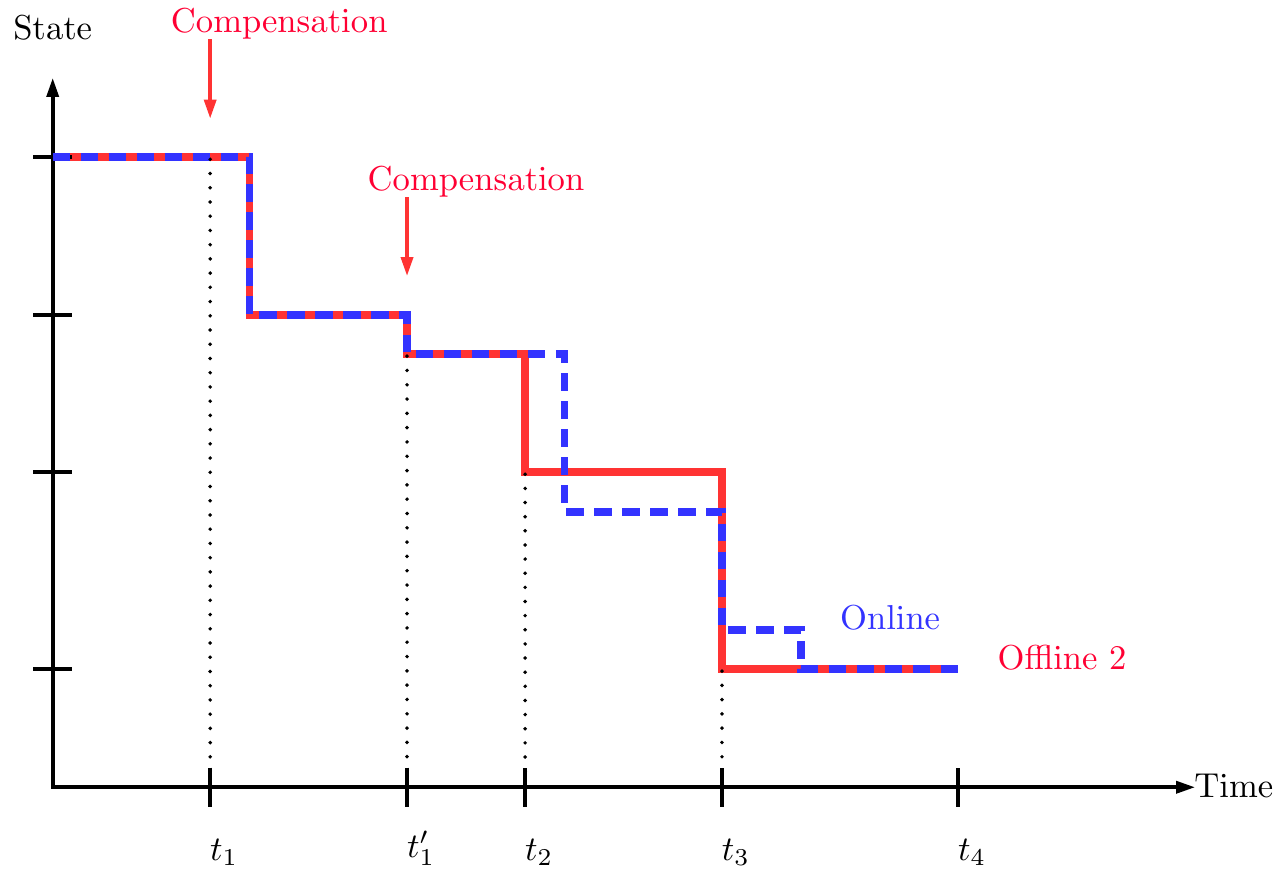}%
\includegraphics[width=0.45\textwidth]{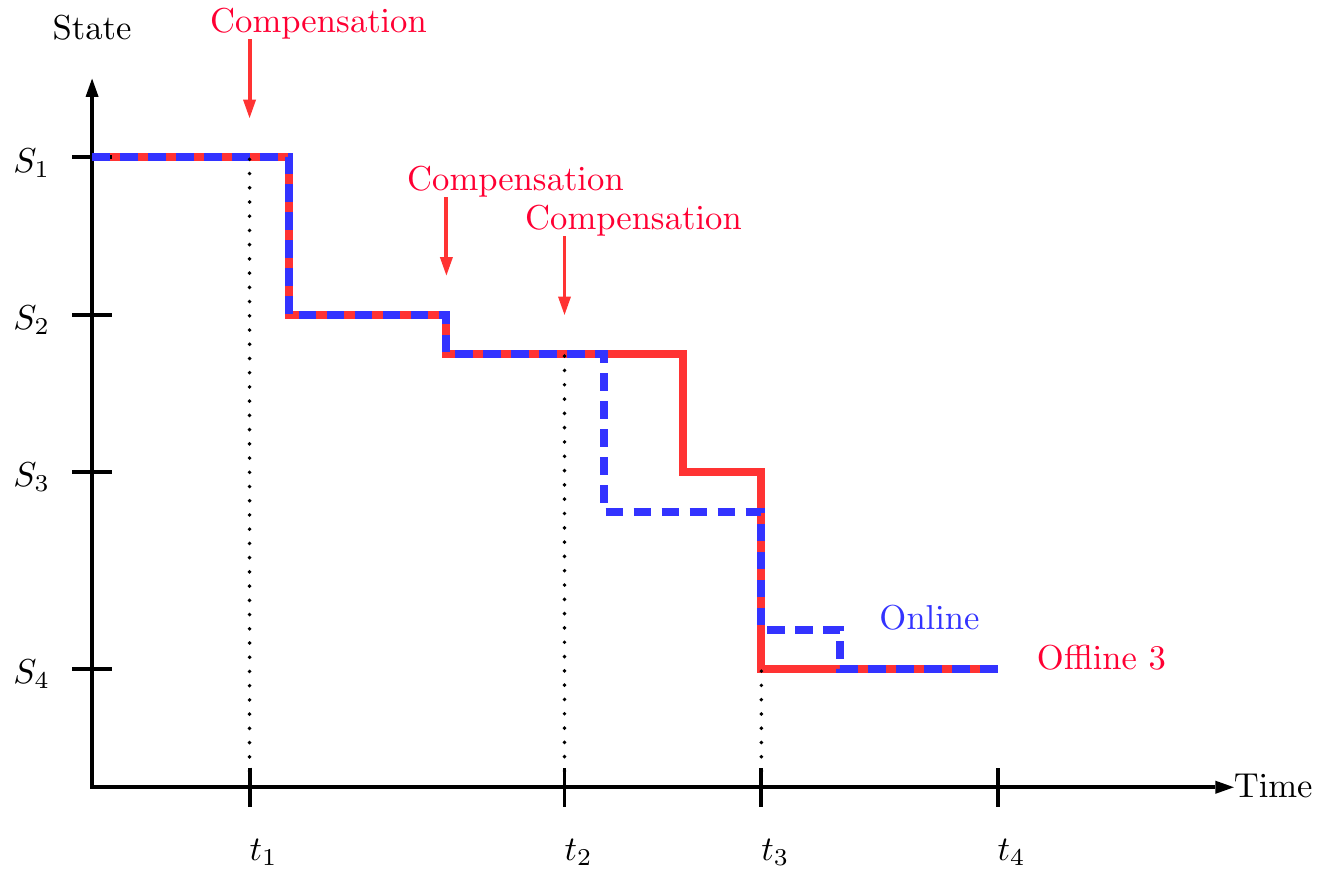}
\caption{The top-left image shows the traditional approach to regret analysis, wherein one considers a fixed offline policy (which here corresponds to a fixed trajectory characterized by accept decisions at $t_1,t_2,t_3,\ldots$) and tries to bound the loss due to ``\onl oscillating around \off''.
In contrast, the compensated coupling approach compares \onl to an \off policy which changes over time. This leads to a sequence of offline trajectories (top-right, bottom-left, and bottom-right), each ``agreeing'' more with \onl.
In particular, \off is not satisfied with \onl's action at $t_1$ (leading to divergent trajectories in the top-left figure), but is made to follow \onl by paying a compensation (top-right), resulting in a new \off trajectory, and a new disagreement at $t_1'\in(t_1,t_2)$. This coupling process is repeated at time $t_1'$ (bottom-left), and then at $t_2$ (bottom-right), each time leading to a new future trajectory for \off.
Coupling the two processes helps simplify the analysis as we now need to study a single trajectory (that of \onl), as opposed to all potential \off trajectories.
}
\label{fig:diagram}
\end{figure}

\begin{lemma}[Compensated Coupling]
\label[lemma]{lem:coupling}
For any online decision-making problem, fix any \onl policy $\pon$ with resulting state process $S^t$.
Then we have:
\begin{align*}
\reg[\omega] = \sum_{t\in[T]} \Rl(t,\pon(t,S^t,\Jt),S^t )[\omega]\cdot \Ins{Q(t,S^t)}[\omega],
\end{align*}
and thus
$\E[\reg] \leq \max_{a\in\Ac,j\in\J}\left\{\rl(a,j)\right\}\cdot\sum_{t\in[T]}\E[\Pr[Q(t,S^t)|S^t]]$.
\end{lemma}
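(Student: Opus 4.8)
The plan is to prove the sample-pathwise identity first, by a telescoping argument along \onl's own trajectory, and then to read off the expectation bound via a crude termwise estimate together with the tower property. Conceptually this is exactly the ``compensated coupling'' picture: we force \off to mimic \onl step by step, and we only charge a compensation when \off is dissatisfied with \onl's action, so there is no need to track any fixed offline trajectory.

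For the identity, I would fix $\omega\in\Omega$, abbreviate $a_t\defeq\pon(t,S^t,\Jt)$ so that $S^{t-1}=\Tr(a_t,S^t,\Jt)$ and $\Ron(t,S^t)[\omega]=\Re(a_t,S^t,\Jt)$, and simply rearrange the definition of $\Rl$ from \cref{def:margincomp} at $(t,a_t,S^t)$ to get
\[
\voff(t,S^t)[\omega]-\voff(t-1,S^{t-1})[\omega]=\Rl(t,a_t,S^t)[\omega]+\Ron(t,S^t)[\omega].
\]
Summing over $t=T,T-1,\ldots,1$, the left side telescopes to $\voff(T,S^T)[\omega]-\voff(0,S^0)[\omega]=\voff(T,S^T)[\omega]$ using the boundary condition $\voff(0,\cdot)[\omega]=0$, while the $\Ron$ terms sum to $\von(T,S^T)[\omega]$ by definition of \onl's value; hence $\reg[\omega]=\voff(T,S^T)[\omega]-\von(T,S^T)[\omega]=\sum_{t\in[T]}\Rl(t,a_t,S^t)[\omega]$. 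Next I would observe that each summand is nonnegative, since $a_t$ is an admissible choice in the Bellman maximization \cref{eq:bellman_off} for \off at $(S^t,t)$, and strictly positive exactly on the disagreement set $Q(t,a_t,S^t)=Q(t,S^t)$ (\cref{def:disagreement}); therefore $\Rl(t,a_t,S^t)[\omega]=\Rl(t,a_t,S^t)[\omega]\cdot\Ins{Q(t,S^t)}[\omega]$ for every $t$, which is precisely the first assertion of the lemma.

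For the bound, on $Q(t,S^t)$ we have $\Rl(t,a_t,S^t)[\omega]\le\rl(a_t,\Jt)\le\max_{a\in\Ac,j\in\J}\crl{\rl(a,j)}$ by the definition of $\rl$, and the $t$-th summand vanishes otherwise, so pointwise $\reg[\omega]\le\max_{a,j}\crl{\rl(a,j)}\cdot\sum_{t\in[T]}\Ins{Q(t,S^t)}[\omega]$; taking expectations and applying the tower property $\E[\Ins{Q(t,S^t)}]=\E[\Pr[Q(t,S^t)\mid S^t]]$ yields the claimed inequality.

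The hard part here is not analytic — the statement reduces to a deterministic telescoping identity followed by a trivial bound. The only thing genuinely requiring care is the measurability bookkeeping: $S^t$ is itself a function of $\omega$, so $\voff(t,S^t)[\omega]$ and the events $Q(t,S^t)$ must be seen to be well-defined and measurable. This is handled by noting that $S^t$ is $\sigma(\JT{T},\ldots,\Jt)$-measurable and $\St$ is (countably) enumerable, so conditioning on $\crl{S^t=s}$ partitions $\Omega$ into countably many measurable pieces on each of which every step above is literally the purely deterministic computation for a fixed state $s$, which is where the argument really lives.
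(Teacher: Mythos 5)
Your proof is correct and follows essentially the same route as the paper's: telescoping the per-step identity $\voff(t,S^t)-\voff(t-1,S^{t-1})=\Rl(t,a_t,S^t)+\Ron(t,S^t)$ along \onl's trajectory and then bounding termwise. The only cosmetic difference is that you obtain the indicator by noting $\Rl\geq 0$ with equality exactly off $Q(t,S^t)$, whereas the paper splits into the satisfied/unsatisfied cases — the two are interchangeable.
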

\begin{proof}
We stress that, throughout, $S^t$ denotes \onl's state.
We claim that, for every time $t$,
\begin{align}
\voff(t,S^t )[\omega]-\voff(t-1,S^{t-1} )[\omega] 
= \Ron(t,S^t )[\omega] + \Rl(t,\pon(t,S^t,\Jt),S^t )[\omega]\Ins{Q(t,S^t)}[\omega]. \label{eq:compensation}
\end{align}
To see this, let $a =\pon(t,S^t,\Jt)$. 
If \off is satisfied taking action $a$ in state $S^t$, then $\voff(t,S^t )[\omega]-\voff(t-1,S^{t-1} )[\omega]  = \Ron(t,S^t )[\omega]$.
On the other hand, if \off  is not satisfied taking action $a$, then by the definition of marginal compensation (\cref{def:margincomp}) we have,
$\voff(t,S^t )[\omega] -\voff(t-1,\Tr(a,S^t,\Jt) )[\omega] = \Rl(t,a,S^t )[\omega]+ \Re(a,S^t,\Jt)$.
Since by definition $\Tr(a,S^t,\Jt)=S^{t-1}$ and $\Re(a,S^t,\Jt)=\Ron(t,S^t)$, we obtain~\cref{eq:compensation}.
Finally, our first result follows by telescoping the summands and the second by linearity of expectation.
\end{proof}

We list a series of remarks.
\begin{itemize}[nosep,leftmargin=0.5cm]
\item \cref{lem:coupling} is a sample-path property that makes no reference to the arrival process. Though we use it primarily for analyzing MDPs, it can also be used for adversarial settings.
We do not further explore this, but believe it is a promising avenue.
\item For stochastic arrivals, the regret depends  on $\E\brk*{\sum_{t\in[T]}\Pr[Q(t,S^t)]}$;
it follows that, if the disagreement probabilities are summable over all $t$, then the expected regret is constant. In \cref{sec:regret_rm,sec:matching_type} we show how to bound $\Pr[Q(t,S^t)]$ for different problems.
\item The first part of \cref{lem:coupling} provides a distributional characterization of the regret in terms of a weighted sum of Bernoulli variables. This allows us to get high-probability bounds in \cref{ssec:highprob}.
\item \cref{lem:coupling} gives a tractable way of bounding the regret which does not require either reasoning about the past decisions of \onl, or the complicated process \off may follow.
In particular, it suffices to bound $\Pr[Q(t,S^t)]$, i.e., the probability that, \emph{given} state $S^t$ at time $t$, \off loses optimality in trying to follow \onl.

\item As mentioned before, \cref{lem:coupling} extends to the full generality of MDPs.
Indeed, from \citep[chapter 11]{online_book}, it follows that any MDP with random transitions and random rewards can be simulated by the family of MDPs we study here; since the inputs $\Jt$ are allowed to be random, we can define random transitions and rewards based on $\Jt$, see \citep{online_book} for further details.
\end{itemize}

\cref{lem:coupling} thus gives a generic tool for obtaining regret bounds against the offline optimum for any online policy. 
Note also that the compensated coupling argument generalizes to settings where the transition and reward functions are time dependent, the policies are random, etc.
The compensated coupling also suggests a natural greedy policy, which we define next. 

\paragraph{Comparison to traditional approaches.}
As discussed in related work (\cref{sec:related}), there are two main approaches.
First, the fluid (or ex ante) benchmark, which can be understood as competing against a fixed value.
This is the prevailing technique in competitive analysis \citep{alaei_bayesian} and the Online Packing literature \citep{jasin2012,Reiman_nrm,wu2015algorithms}.
We showed in \cref{prop:bad_fluid} that such an approach cannot yield better than $O(\sqrt{T})$ regret bounds, while we prove $O(1)$.
Second, the traditional sample path approach, which competes against the random trajectory of \off (as illustrated in \cref{fig:diagram}), is based on showing that \onl is ``close'' to a fixed trajectory.
This approach is capable of obtaining strong $O(1)$ guarantees \citep{itai_secretary}, but it is highly involved since it necessitates a complete characterization of \off's trajectory.
The benefit of our approach is abstracting away from the characterization of \off's trajectory and focusing only on \onl's (which is the one that the algorithm controls), while yielding strong $O(1)$ guarantees.

The next example illustrates the Compensated Coupling in a different setting.
We consider the Ski Rental problem, which is a well studied minimum cost covering (not packing) problem.
\begin{myexample}[Ski Rental]
Given $T$ days for skiing, each day we decide whether to buy skis for $b$ dollars or to rent them for $1$ dollar.
The snow may melt any day and we have a distribution over the period we may be able to ski, i.e., there is snow during the first $X\in[T]$ periods and we know the distribution of $X$.
Our aim is to explicitly write the regret of a particular policy (stated later) using the Compensated Coupling.

The optimal offline solution is trivial: if $X<b$, \off rents every day, otherwise ($X\geq b$) \off buys the first day.
In other words, \off either buys the first day (which has cost $b$), or rents every day, with cost $X$. 
Since \off knows $X$, he picks the minimum.

We map this problem to our framework as follows.
The state space is $\St=\crl{\text{skis},\text{no-skis}}$, where `skis' means we own the skis.
Arrivals are signals $\Jt\in\crl{0,1}$, where $1$ means there is snow and $0$ that the season is over.
The arrival sequence is always of the form $(\JT{T},\ldots,\JT{1})=(1,\ldots,1,0,\ldots,0)$, where $X=\sum_{t\in [T]}\Jt$ by definition.
Finally, rewards are $-1$ per day if we rent and $-b$ when we buy. 

The compensations are as follows.
If the state is `skis' or if $\Jt=0$ (season is over), then no compensation is needed, because we know that \off does nothing w.p. 1 (either he owns the skis or the problem ended).
The only case where \onl and \off may disagree is when $\Jt=1$ (can ski today) and the state is `no-skis'.

Let us denote $\Xt$ as the number of remaining skiing days (including $t$) and say we observe $\Jt=1$ at time $t$.
\off is not satisfied renting if $\Xt>b$; forcing him to rent in this event requires a compensation of $1$.
On the other hand, \off is not satisfied buying if $\Xt<b$; forcing him needs a compensation of $b-\Xt$.
Consider the following policy $\pon$: for fixed $\tau\geq 0$, rent the first $\tau$ days and buy on day $\tau+1$ contingent on seeing snow all these days, i.e., contingent on $\JT{t}=1$ for all $t=T,\ldots,T-\tau$. 
The Compensated Coupling (\cref{lem:coupling}) allows us to write
\begin{align*}
\reg = \sum_{t\in [T]}\Rl(t,\pon(t,S^t,\Jt),S^t )[\omega]\cdot \Ins{Q(t,S^t)}[\omega]
= \sum_{t=T-\tau+1}^T\In{\Xt>b} + (b-X^{T-\tau})\In{1\leq X^{T-\tau}<b}.
\end{align*}
The first term corresponds to the disagreement of the first $\tau$ days (pay one dollar each day $t$ such that $\Xt>b$), whereas the second is the disagreement of day $\tau+1$.
This is an example where the compensated coupling yields an intuitive way of writing the regret.
Furthermore, since the expression is exact, we can take expectations and optimize over $\tau$ to get the optimal policy (for example, see~\citep{power_down}).
\end{myexample}

\subsection{The Bayes Selector Policy}
\label{sec:bayes_selector}

Using the formalism defined in the previous sections, let $q(t,a,s)\defeq\Pr[Q(t,a,s)]$ be the \emph{disagreement probability} of action $a$ at time $t$ in state $s$ (i.e., the probability that $a$ is not a satisfying action).

For any $t,s,\Jt$, suppose we have an \emph{oracle} that gives us $q(t,a,s)$ for every feasible action $a$.
Given oracle access to $q(t,a,s)$ (or more generally, over-estimates $\hat q$ of $q$), a natural greedy policy suggested by \cref{lem:coupling} is that of choosing action $a$ that minimizes the probability of disagreement. 
This is similar in spirit to the \emph{Bayes selector} (i.e., hard thresholding) in statistical learning. 
\cref{alg:bayes_selector} formalizes the use of this idea in online decision-making.
The results below are essentially agnostic of how we obtain this oracle.

\begin{algorithm}
\caption{Bayes Selector}
\label{alg:bayes_selector}
\begin{algorithmic}[1]
\Require Access to over-estimates $\hat q(t,a,s)$ of the disagreement probabilities, i.e. $\hat q(t,a,s)\geq q(t,a,s)$
\Ensure Sequence of decisions for \onl.
\State Set $S^\T$ as the given initial state
\For{$t=\T,\ldots,1$}
	\State Observe the arriving type $\Jt$ .
	\State Take an action minimizing disagreement, i.e., $A^t\in\argmin\crl{\hat q(t,a,S^t): a\in\Ac}$.
	\State Update state $S^{t-1}\gets \Tr(A^t,S^t,\Jt)$.
\EndFor
\end{algorithmic}
\end{algorithm}
From \cref{lem:coupling}, we immediately have the following:

\begin{corollary}[Regret Of Bayes Selector]\label[corollary]{cor:bayes}
Consider \cref{alg:bayes_selector} with over-estimates $\hat q(t,a,s)\geq \Pr[Q(t,a,s)]\,\forall\,(t,a,s)$.
If $A^t$ denotes the policy's action at time $t$, then
\begin{align*}
\E[\reg]
%\leq\sum_{t\in[T]} \E\prn*{\rl(A^t,\Jt)\Ins{Q(t,A^t,S^t)}}
\leq \max_{a\in\Ac,j\in\J}\rl(a,j) \cdot\sum_{t\in[T]}\E[\hat q(t,A^t,S^t)].
\end{align*}
\end{corollary}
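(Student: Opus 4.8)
The plan is to read this off directly from \cref{lem:coupling}, specialized to the policy realized by \cref{alg:bayes_selector}, with the oracle over-estimates substituted for the true disagreement probabilities. First I would note that \cref{alg:bayes_selector} does induce a valid non-anticipatory online policy $\pon$: at each time $t$, given state $S^t$ and arrival $\Jt$, it plays $A^t=\pon(t,S^t,\Jt)\in\argmin\crl*{\hat q(t,a,S^t):a\in\Ac}$, using only information available at time $t$. Applying the second conclusion of \cref{lem:coupling} to this $\pon$ yields
\[
\E[\reg]\;\le\;\max_{a\in\Ac,\,j\in\J}\rl(a,j)\cdot\sum_{t\in[T]}\E\brk*{\Pr\brk*{Q(t,S^t)\mid S^t}},
\]
and, by \cref{def:disagreement}, the disagreement event at $(t,S^t)$ is precisely $Q(t,S^t)=Q\prn*{t,A^t,S^t}$. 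So the corollary reduces to the per-$t$ inequality $\E[\Pr[Q(t,A^t,S^t)\mid S^t]]\le\E[\hat q(t,A^t,S^t)]$, after which summing over $t$ finishes.

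For that per-$t$ inequality I would condition further on everything \onl has observed by the moment it commits to $A^t$ — equivalently on $(S^t,\Jt)$ — so that $A^t$ and $S^t$ become deterministic values $a,s$; the event $Q(t,a,s)$ is then governed by the arrivals not yet revealed, so its conditional probability equals the oracle quantity $q(t,a,s)=\Pr[Q(t,a,s)]\le\hat q(t,a,s)$ by hypothesis. Averaging back out gives the claim. Equivalently, decomposing over the realized $(A^t,S^t)$: $\Pr[Q(t,S^t)]=\sum_{a,s}\Pr[A^t=a,S^t=s]\,\Pr[Q(t,a,s)\mid A^t=a,S^t=s]\le\sum_{a,s}\Pr[A^t=a,S^t=s]\,\hat q(t,a,s)=\E[\hat q(t,A^t,S^t)]$; together with the display above and monotonicity of expectation this is exactly the stated bound.

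Since the real work is already inside \cref{lem:coupling}, there is no substantive obstacle; the one point that needs care — and the one I would single out — is justifying that conditioning on \onl's trajectory does not inflate the disagreement probability beyond the estimate the algorithm actually used. This is where one reads the hypothesis ``$\hat q\ge q$'' in the intended sense: $\hat q(t,a,s)$ over-estimates the disagreement probability given the information \onl has at its decision time. For the stationary independent arrival models this collapses to the plain unconditional statement, because for fixed $(a,s)$ the event $Q(t,a,s)$ is determined by the arrivals at to-go times $\le t$, which are independent of $S^t$ (a function of arrivals at larger to-go times); for the correlated processes treated later it is exactly the property that the adaptive-LP and Monte-Carlo estimators of the subsequent sections are designed to satisfy. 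Everything else is bookkeeping.
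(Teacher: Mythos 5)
Your proposal is correct and follows the same route as the paper, which derives \cref{cor:bayes} immediately from \cref{lem:coupling} by substituting the over-estimates $\hat q$ for the conditional disagreement probabilities. The one point you single out --- that conditioning on \onl's trajectory must not inflate $\Pr[Q(t,a,s)]$ beyond the estimate used, which holds trivially for independent arrivals and is folded into the hypothesis $\hat q\ge q$ otherwise --- is exactly the step the paper leaves implicit, so your write-up is if anything more careful than the original.
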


The next result states that, if we can bound the estimation error uniformly over states and actions, then the guarantee of the algorithm increases additively on the error (not multiplicatively, as one may suspect).
In more detail, our next result is agnostic of the oracle used to obtain the the estimators $\hat q$.
Examples of estimation procedures to obtain $\hat q$ include: simulation, function approximation, neural networks, etc.
Regardless of how $\hat q$ is obtained, we can give a regret guarantee based only on the accuracy of the estimators.
The following result follows as a special case of \cref{cor:bayes}; we state it to emphasize that $\hat q$ can be estimated with some error.

\begin{corollary}[Bayes Selector with Imperfect Estimators]\label[corollary]{cor:additive}
Assume we have estimators $\hat q(t,a,s)$ of the probabilities $q(t,a,s)$ such that $\abs{q(t,a,s)-\hat q(t,a,s)}\leq \Delta^t$ for all $t,a,s$.
If we run \cref{alg:bayes_selector} 
with over-estimates $\hat q(t,a,s) + \Delta^t$, and $A^t$ denotes the policy's action at time $t$, then
\begin{align*}
\E[\reg]
\leq \max_{a\in\Ac,j\in\J}\rl(a,j) \cdot\sum_{t\in[T]}(\E[\hat q(t,A^t,S^t)]
+ \Delta^t).
\end{align*}
\end{corollary}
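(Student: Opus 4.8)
The plan is to obtain \cref{cor:additive} as an immediate consequence of \cref{cor:bayes}, the only real work being to verify that the inflated quantities $\hat q(t,a,s)+\Delta^t$ are legitimate over-estimates of the true disagreement probabilities. First I would unpack the hypothesis: from $\abs{q(t,a,s)-\hat q(t,a,s)}\le \Delta^t$ we retain only the one-sided inequality $q(t,a,s)\le \hat q(t,a,s)+\Delta^t$, valid for every $t\in[T]$, $a\in\Ac$, $s\in\St$. Hence, writing $\tilde q(t,a,s)\defeq\hat q(t,a,s)+\Delta^t$, we have $\tilde q(t,a,s)\ge q(t,a,s)=\Pr[Q(t,a,s)]$, so $\tilde q$ meets the input requirement of \cref{alg:bayes_selector}. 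By construction, running the Bayes Selector with the over-estimates $\tilde q$ is exactly the policy described in the statement, and $A^t$ is its action at time $t$.

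Next I would apply \cref{cor:bayes} verbatim to this policy with over-estimates $\tilde q$, which gives
\[
\E[\reg]\le \max_{a\in\Ac,j\in\J}\rl(a,j)\cdot\sum_{t\in[T]}\E\brk*{\tilde q(t,A^t,S^t)}.
\]
Finally, substituting $\tilde q(t,A^t,S^t)=\hat q(t,A^t,S^t)+\Delta^t$ and using linearity of expectation together with the fact that $\Delta^t$ is a deterministic constant, we get $\sum_{t\in[T]}\E[\tilde q(t,A^t,S^t)]=\sum_{t\in[T]}\prn*{\E[\hat q(t,A^t,S^t)]+\Delta^t}$, which is precisely the claimed bound.

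The only step requiring even mild care — the ``main obstacle,'' such as it is — is the direction of the inequality in the first step: one must invoke the upper tail $q-\hat q\le\Delta^t$ of the absolute-value hypothesis, and observe that it is exactly the additive slack $+\Delta^t$ (rather than a multiplicative correction) that restores the over-estimate property. This is the structural reason the estimation error enters the final guarantee additively, as claimed in the discussion preceding the corollary. No new probabilistic argument is needed, since \cref{cor:bayes}, and through it \cref{lem:coupling}, already absorbs all of the coupling machinery.
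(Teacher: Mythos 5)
Your proof is correct and matches the paper's intent exactly: the paper states that \cref{cor:additive} follows as a special case of \cref{cor:bayes}, and your argument---verifying via the one-sided inequality $q\le \hat q+\Delta^t$ that the inflated quantities are valid over-estimates, then applying \cref{cor:bayes} and pulling the deterministic constant $\Delta^t$ out of the expectation---is precisely that specialization.
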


Observe that, the total error induced due to estimation is a constant if, e.g., we can guarantee $\Delta^t=1/t^2$ or $\Delta^t=1/(T-t)^2$.

It is natural to consider a more sophisticated version of \cref{alg:bayes_selector}, wherein we make decisions not only based on disagreement probabilities, but also take into account marginal compensations, i.e., the marginal loss of each decision.
While \cref{alg:bayes_selector} is enough to obtain constant regret bounds in the problems we consider, we note that such an extension is possible and can be found in \cref{sec:loss}.

\begin{remark}
We discussed in \cref{sec:coupling} that the Compensated Coupling extends to the case where transitions and rewards can be random.
By the same argument, the Bayes Selector (\cref{alg:bayes_selector}) and its guarantees (\cref{cor:bayes,cor:additive}) extend too.
Notice that \off here is a prophet who has full knowledge of all the randomness (arrivals, transitions, and rewards).
\end{remark}

\section{Regret Guarantees for Online Packing}
\label{sec:regret_rm}

We now show that for the Online Packing problem, the Bayes Selector achieves an expected regret which is \emph{independent of the number of arrivals $T$ and the initial budgets $B$}; in~\cref{sec:matching_type}, we extend this to Online Matching problems.

In more detail, we prove that the \emph{dynamic fluid relaxation} $(P_t)$ in \cref{eq:coupled_lp} provides a good estimator for the disagreement probabilities $q(t,a,s)$, and moreover, that the Bayes Selector based on these statistics reduces to a simple \emph{re-solve and threshold} policy.

In this setting, the state space corresponds to resource availability, hence $\St=\N^d$; there are two possible actions, accept or reject, hence $\abs{\Ac} = 2$; finally, transitions correspond to the natural budget reductions given by the matrix $A$.

Recall that $Z(t)\in\N^n$ denotes the cumulative arrivals in the last $t$ periods and $B^t\in\N^d$ denotes \onl's budget at time $t$.
Given knowledge of $Z(t)$ and state $B^t$, we define the \emph{ex-post} relaxation $(P^\star_t)$ and \emph{fluid} relaxation $(P_t)$ as follows.
\begin{equation}\label{eq:coupled_lp}
\begin{array}{rrl}
(P_t^\star)\, \max  & r'x& \\
\text{s.t.}&  Ax&\leq B^t \\
& x&\leq Z(t) \\
& x&\geq 0.
\end{array}
\qquad\qquad
\begin{array}{rrl}
(P_t)\, \max  & r'x& \\
\text{s.t.}&  Ax&\leq B^t \\
& x&\leq \E[Z(t)] \\
& x&\geq 0.
\end{array}
\end{equation}

\begin{remark}
\off solves $(P^\star_t)$ in \cref{eq:coupled_lp}, while \onl solves $(P_t)$.
Both problems depend on \onl's budget at $t$; this is a crucial technical point and can only be accomplished due to the coupling we have developed.
\end{remark}

Let $\Xt$ be a solution of $(P_t)$ and $\Xts$ a solution of $(P_t^\star)$.
Uniqueness of solutions is not required, see \cref{prop:lipschitz}, and $\Xts$ is for the analysis only.
Our policy is detailed in \cref{alg:fluid}.

\begin{algorithm}
\caption{Fluid Bayes Selector}
\label{alg:fluid}
\begin{algorithmic}[1]
\Require Access to solutions $\Xt$ of $(P_t)$ and resource matrix $A$.
\Ensure Sequence of decisions for \onl.
\State Set $B^\T$ as the given initial budget levels
\For{$t=\T,\ldots,1$}
	\State Observe arrival $\Jt=j$ and accept iff $\Xt_j\geq \E[Z_j(t)]/2$ and it is feasible, i.e., $A_j\leq B^t$.
	\State Update $B^{t-1}\gets B^t-A_j$ if accept, and $B^{t-1}\gets B^t$ if reject.
\EndFor
\end{algorithmic}
\end{algorithm}

Intuitively, we `front-load' (accept as early as possible) classes $j$ such that $\Xt_j\geq \E[Z_j(t)]/2$ and back-load the rest (delay as much as possible).
If \off is satisfied accepting a front-loaded class (resp. rejecting a back-loaded class), he will do so.
Accepting class $j$ is therefore an error if \off, given the same budget as \onl, picks no future arrivals of that class (i.e., $\Xts_j<1$).
On the other hand, rejecting $j$ is an error if $\Xts_j>Z_j(t)-1$.
We summarize this as follows:
\begin{enumerate}
	\item Incorrect rejection: if $\Xt_j< \frac{\E[Z_j(t)]}{2}$ and $\Xts_j>Z_j(t)-1$.
	\item Incorrect acceptance: if $\Xt_j\geq \frac{\E[Z_j(t)]}{2}$ and $\Xts_j<1$.
\end{enumerate}
Observe that a compensation is paid only when the fluid solution is far off from the correct stochastic solution.
In the proofs of \cref{theo:secretary,theo:reg_general} we formalize the fact that, since $\Xt$ estimates $\Xts$, such an event is highly unlikely -- this along with the compensated coupling provides our desired regret guarantees.

\paragraph{Disagreement probabilities and the Fluid Bayes Selector.}
The Bayes Selector (\cref{alg:bayes_selector}) runs with over-estimates $\hat q(t,a,b)$ and, at each time, picks the minimum.
On the other hand, \cref{alg:fluid} is presented as the ``simplified version'', in the sense that the decision rule is the one that minimizes suitable over-estimates $\hat q$.
More importantly, we prove the following properties
\begin{align}
\argmin\crl{\hat q_j(t,\text{accept},b),\hat q_j(t,\text{reject},b)} &= \left\{
\begin{array}{ll}
\text{accept}     & \text{ if } \Xt_j\geq \E[Z_j(t)]/2  \\
\text{reject}     & \text{ if } \Xt_j<\E[Z_j(t)]/2  
\end{array}
\right., \label{eq:fluid_rule}\\
\min\crl{\hat q_j(t,\text{accept},b),\hat q_j(t,\text{reject},b)} &\leq c_1 e^{-c_2t}. \label{eq:bound_q}
\end{align}
In other words, the property in \cref{eq:fluid_rule} shows that \cref{alg:fluid} is a Bayes Selector, while the property in \cref{eq:bound_q} yields the desired constant regret bound in virtue of the Compensated Coupling and \cref{cor:bayes}. 
We give explicit expressions for the values $\hat q$ and constants $c_1,c_2$, see e.g.\ \cref{eq:q_secre}.

\paragraph{The robustness of the Bayes Selector.}
The probability minimizing disagreement can be uniformly bounded over all budgets $b\in\N^d$, i.e., the exponential bound in \cref{eq:bound_q} does not depend on $b$.
This property has the following consequence: \emph{since the Fluid Bayes Selector has strong performance, many other Bayes Selector algorithms (using different $\hat q$) do too.}
In other words, the design of algorithms based on the Bayes Selector is robust and does not depend on ``fine tunning'' of the parameters $\hat q$.
We make this precise in \cref{cor:secretary,cor:general} and uncover the same phenomenon for matching problems, see \cref{cor:matching}.

We need some additional notation before presenting our results.
Let $\E_j[\cdot]$ ($\Pr_j[\cdot]$) be the expectation (probability) conditioned on the arrival at time $t$ being of type $j$, i.e., \ $\Pr_j[\cdot] = \Pr[\cdot|\Jt=j]$.
We denote $\rmax\defeq \max_{j\in [n]}r_j$ and $\pmin\defeq\min_{j\in [n]}p_j$.

\subsection{Special Case: Multi-Secretary with Multinomial Arrivals}
\label{sec:warm_up}

Before we proceed to the general case, we state the result for the Multi-Secretary problem.
We present this result separately because in this one dimensional problem we can obtain a better and explicit constant.
The proofs of \cref{theo:secretary} and \cref{cor:secretary} below can be found in Appendix \ref{sec:app_multi}.

\begin{theorem}\label{theo:secretary}
The expected regret of the Fluid Bayes Selector (\cref{alg:fluid}) for the multi-secretary problem with multinomial arrivals is at most $\rmax\sum_{j>1}2/p_j\leq 2(n-1)\rmax/\pmin$.     
\end{theorem}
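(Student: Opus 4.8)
The plan is to apply the compensated coupling directly. By \cref{lem:coupling}, $\E[\reg]\le \bigl(\max_{a,j}\rl(a,j)\bigr)\cdot\sum_{t\in[\T]}\E\bigl[\Pr[Q(t,S^t)\mid S^t]\bigr]$, where $Q(t,S^t)$ is the event that \off cannot follow \onl's action at time $t$; since $d=1$ and every request uses a single unit, the example following \cref{def:margincomp} gives $\rl(a,j)\le\rmax$, so it remains to bound $\sum_t\E[\Pr[Q(t,S^t)\mid S^t]]$ by $\sum_{j>1}2/p_j$. Order the types so $r_1>\cdots>r_n$. The first step is to record the structure of the fluid LP $(P_t)$: with a single capacity constraint it is a continuous knapsack, so an optimal $X^t$ fills the budget $B^t$ greedily from the highest reward down, with a critical type $\jstar(t,B^t)$ at which the fill becomes fractional; hence the rule of \cref{alg:fluid} reduces to: on a type-$j$ arrival with $B^t\ge1$, accept iff $B^t\ge\theta_j(t):=\sum_{i<j}\E[Z_i(t)]+\tfrac12\E[Z_j(t)]$, and reject otherwise (and trivially reject when $B^t=0$).

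The second step is to pin down $Q(t,S^t)$. Since \off's offline optimum is to take the $\min(B^t,t)$ largest of the remaining $t$ rewards, one verifies that, on a type-$j$ arrival with $B^t\ge1$: \off is not satisfied accepting iff $\sum_{i<j}Z_i(t)\ge B^t$ (already at least $B^t$ strictly-better remaining arrivals), and \off is not satisfied rejecting iff $\sum_{i\le j}Z_i(t)\le B^t$ (every remaining arrival worth at least $r_j$ fits, so \off wants them all). Combining with the two branches of the algorithm's rule: an incorrect acceptance forces $\sum_{i<j}Z_i(t)\ge B^t\ge\theta_j(t)$, i.e.\ $\sum_{i<j}Z_i(t)-\E[\sum_{i<j}Z_i(t)]\ge\tfrac12 p_j t$; an incorrect rejection forces $\sum_{i\le j}Z_i(t)\le B^t<\theta_j(t)$, i.e.\ $\E[\sum_{i\le j}Z_i(t)]-\sum_{i\le j}Z_i(t)>\tfrac12 p_j t$. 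In either case $Q(t,S^t)$ sits inside the event that a sum of at most $t$ Bernoulli variables deviates from its mean by at least $\tfrac12 p_j t$. For the top type $j=1$ the acceptance branch is vacuous ($\sum_{i<1}Z_i(t)\equiv0$); I would argue separately that its rejection branch contributes nothing to the final bound, which is the one place that needs the trajectory of \cref{alg:fluid} rather than pure concentration, and is why the final sum runs over $j>1$.

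The third step is the concentration estimate. Fix $j$ and condition on $S^t=b$ and $\{\Jt=j\}$. Then $b$ depends only on periods $\T,\dots,t+1$, while the counts $\sum_{i<j}Z_i(t)$ and $\sum_{i\le j}Z_i(t)$ depend only on periods $t-1,\dots,1$ (the period-$t$ arrival is of type $j$, hence contributes $0$ and a fixed $+1$ respectively), so conditionally these counts are genuine Binomials, and the two branches of the rule are mutually exclusive (determined by whether $b\ge\theta_j(t)$). Using $\sum_i p_i\le1$ to absorb the $O(1)$ corrections from the conditioning, Hoeffding's inequality gives, for $t\ge2$, $\Pr[Q(t,S^t)\mid S^t=b,\Jt=j]\le \exp\!\bigl(-\tfrac{p_j^2 t^2}{2(t-1)}\bigr)\le e^{-p_j^2 t/2}$, uniformly in $b$ (the $t=1$ term vanishes). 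Hence $\Pr[Q(t,S^t),\Jt=j\mid S^t]\le p_j e^{-p_j^2 t/2}$, and summing the geometric series,
\[
\sum_{t\in[\T]}\Pr[\Jt=j]\,e^{-p_j^2 t/2}\ \le\ \frac{p_j}{e^{p_j^2/2}-1}\ \le\ \frac{2}{p_j}
\]
by $e^x-1\ge x$. Summing over $j>1$ and using $\rl(a,j)\le\rmax$ yields $\E[\reg]\le\rmax\sum_{j>1}2/p_j$, and $\sum_{j>1}2/p_j\le2(n-1)/\pmin$ follows from $p_j\ge\pmin$.

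The step I expect to be the main obstacle is the second one: turning ``\onl's action is not a Bellman maximizer for \off'' into the two clean count inequalities. The offline optimum is simple to describe, but the translation must correctly handle ties (several arrivals may share a reward, so \off can be satisfied both accepting and rejecting), the slack regime $B^t\ge t$, and --- crucially --- the verification that the highest-reward type triggers no compensation, which is precisely what produces a sum over $j>1$ rather than $j\ge1$. With that combinatorial picture in place, the remaining work is the routine Chernoff-plus-geometric-series computation above; it also makes transparent why \cref{alg:fluid} is an instance of the Bayes Selector, with $e^{-p_j^2 t/2}$ playing the role of the over-estimate $\hat q$.
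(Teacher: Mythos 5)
Your proposal is correct and follows essentially the same route as the paper's proof in Appendix~\ref{sec:app_multi}: the same greedy characterization of the fluid knapsack solution and the resulting threshold rule, the same translation of ``\off is not satisfied'' into the binomial count events $\sum_{i<j}Z_i(t)\ge B^t$ (bad accept) and $\sum_{i\le j}Z_i(t)\le B^t$ (bad reject), the same Hoeffding bound $e^{-p_j^2t/2}$ uniform in the budget, and the same geometric-series summation to $2/p_j$ before invoking \cref{lem:coupling}. The one subtlety you flag --- that type $1$ must contribute nothing so the sum runs over $j>1$ --- is resolved in the paper by observing that the rounded policy always accepts class $1$ when $B^t\ge 1$, which is always a satisfying action for \off; your treatment of the conditioning at period $t$ is in fact slightly more careful than the paper's.
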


This recovers the best-known expected regret bound for this problem shown in a recent work~\citep{itai_secretary}.
However, while the result in \citep{itai_secretary} depends on a complex martingale argument, our proof is much more succinct, and provides explicit and stronger guarantees; in particular, in Section~\ref{ssec:highprob}, we provide concentration bounds for the regret.

Moreover,  \cref{theo:secretary}, along with \cref{cor:additive}, provides a critical intermediate step for characterizing the performance of \cref{alg:bayes_selector} for the multi-secretary problem.

\begin{corollary}\label[corollary]{cor:secretary}
For the multi-secretary problem with multinomial arrivals, the expected regret of the Bayes Selector (\cref{alg:bayes_selector}) with any imperfect estimators $\hat q$ is at most $2\rmax\prn*{\sum_{j>1}1/p_j+\sum_{t\in[T]}\Delta^t}$, where $\Delta^t$ is the accuracy defined by $\abs{q(t,a,b)-\hat q(t,a,b)}\leq \Delta^t$ for all $t\in [T],a\in\Ac,b\in \N$.
\end{corollary}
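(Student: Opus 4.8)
The plan is to combine the additive error-propagation of \cref{cor:additive} with one structural fact established en route to \cref{theo:secretary}: the exponential disagreement bound $\eqref{eq:bound_q}$ for the fluid estimators holds \emph{uniformly over the budget $b\in\N$}. Because it is uniform in $b$, it bounds $\min_{a\in\Ac}q(t,a,b)$ no matter which policy generated the state $b$, which is precisely what lets us analyze \cref{alg:bayes_selector} when it is run with an \emph{arbitrary} estimator $\hat q$.

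First I would record two easy reductions. Since the multi-secretary problem has $d=1$ and each acceptance consumes a single unit of the lone resource, the marginal-compensation estimate from the Online Packing example specializes to $\rl(a,j)\le d\,\rmax=\rmax$ for all $a\in\Ac,j\in\J$, so $\max_{a,j}\rl(a,j)\le\rmax$. And running \cref{alg:bayes_selector} with the over-estimates $\hat q(t,a,b)+\Delta^t$ (which dominate $q$ since $\hat q\ge q-\Delta^t$), \cref{cor:additive} gives
\[
\E[\reg]\ \le\ \rmax\sum_{t\in[T]}\bigl(\E[\hat q(t,A^t,S^t)]+\Delta^t\bigr).
\]
Adding the constant $\Delta^t$ does not move the minimizer, so $A^t$ minimizes $\hat q(t,\cdot,S^t)$ over feasible actions; evaluating that minimum at a true minimizer $a^\star\in\argmin_a q(t,a,S^t)$ and using $\hat q(t,a^\star,S^t)\le q(t,a^\star,S^t)+\Delta^t$ gives $\hat q(t,A^t,S^t)\le\min_{a}q(t,a,S^t)+\Delta^t$. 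Substituting,
\[
\E[\reg]\ \le\ \rmax\sum_{t\in[T]}\E\Bigl[\min_{a\in\Ac}q(t,a,S^t)\Bigr]\ +\ 2\,\rmax\sum_{t\in[T]}\Delta^t .
\]

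It remains to bound $\rmax\sum_t\E[\min_a q(t,a,S^t)]$ by $2\rmax\sum_{j>1}1/p_j$, and this is where I would invoke uniformity in $b$. Writing $\hat q^{\mathrm{fl}}$ for the fluid over-estimates, we have $q(t,a,b)\le\hat q^{\mathrm{fl}}(t,a,b)$ pointwise, hence $\min_a q(t,a,b)\le\min_a\hat q^{\mathrm{fl}}(t,a,b)$, which by $\eqref{eq:bound_q}$ (with the explicit secretary constants) is at most an exponentially small quantity in $t$ for \emph{every} $b\in\N$. Therefore $\E[\min_a q(t,a,S^t)]$ is bounded by that same quantity for each $t$, irrespective of the law of $S^t$, and summing over $t$ exactly as in the proof of \cref{theo:secretary} gives $\rmax\sum_t\E[\min_a q(t,a,S^t)]\le 2\rmax\sum_{j>1}1/p_j$. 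Combining the two displays proves $\E[\reg]\le 2\rmax\bigl(\sum_{j>1}1/p_j+\sum_{t\in[T]}\Delta^t\bigr)$.

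The main obstacle is that last step: one cannot merely quote the final bound of \cref{theo:secretary}, because the state trajectory $S^t$ induced by the arbitrary estimator $\hat q$ differs from the one induced by the fluid rule, so the per-type accounting in that proof does not apply to it verbatim. The argument closes only because the key intermediate estimate $\eqref{eq:bound_q}$ is stated uniformly over the budget --- the ``robustness'' property flagged after \cref{alg:fluid} --- so it can be evaluated along the trajectory of \emph{any} Bayes Selector; everything else is the bookkeeping of \cref{cor:additive}, which costs an additive $2\Delta^t$ per step rather than a multiplicative blow-up.
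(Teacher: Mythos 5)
Your proposal is correct and follows essentially the same route as the paper's proof: apply \cref{cor:additive}, use the minimizing property of $A^t$ together with $\abs{q-\hat q}\le\Delta^t$ to reduce to $\E[\min_a q(t,a,B^t)]+2\Delta^t$ per step, and then bound $\min_a q(t,a,b)$ uniformly over $b$ by the exponential fluid estimates from the proof of \cref{theo:secretary}. You also correctly identify the crucial point the paper emphasizes, namely that the bound $e^{-p_j^2 t/2}$ is independent of the state $b$, which is exactly what allows the argument to apply along the trajectory of an arbitrary estimator.
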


Observe that, if $\Delta^t$ is summable, e.g., $\Delta^t=1/t^2$ or $\Delta^t=1/(T-t)^2$, then \cref{cor:secretary} implies constant expected regret for all these types of estimators we can use in \cref{alg:bayes_selector}.

\subsection{Online Packing with General Arrivals}
\label{sec:warm_up2}

We consider now the case $d>1$ and arrival processes other than Multinomial.
We assume the following condition on the process $Z(t)$, which we refer to as \emph{All-Time Deviation}.

\begin{definition}[All-Time Deviation]\label[definition]{def:all_time}
Let $\mu$ be a given norm in $\R^n$ and $\kappa\in\Rp^n$ a constant parameter.
An $n$ dimensional process $Z(t)$ satisfies the All-Time Deviation bound w.r.t.\ $\mu$ and $\kappa$ if, for all $j\in [n]$, there are constants $c_j\geq 0$ and naturals $\tau_j$ such that
\begin{equation}
\label{eq:tail_z}
\Pr\brk*{\norm{Z(t)-\E(Z(t))}_\mu \geq \frac{\E[Z_j(t)]}{2\kappa_j}} \leq \frac{c_j}{t^2} \quad \forall t >\tau_j.
\end{equation}
\end{definition}

We remark that we do not need exponential tails, as it is common to assume, but rather a simple quadratic tail.
Additionally, some common tail bounds are valid only for large enough samples; the parameters $\tau_j$ capture this technical aspect. 
In this section we will use the definition with $\kappa_j$ the same entry for all $j$, thus denoted simply by $\kappa>0$.
In \cref{sec:matching_type} we require the definition with the more general form.

\begin{myexample}[Multinomial and Poisson tails]
In these examples we actually have the stronger exponential tails, so we do not elaborate on the constant $c_j$.

For multinomial arrivals,  \citep[Lemma~3]{multinomial_bound} guarantees 
\begin{equation}
\label{eq:tail_multi}
\Pr[\norm{Z(t)-\E(Z(t))}_1>t\varepsilon]\leq e^{-t\varepsilon^2/25}, \, \forall 0<\varepsilon<1, t\geq \frac{\varepsilon^2n}{20}.
\end{equation}
By setting $\varepsilon=p_j/2\kappa$, we conclude that \cref{def:all_time} is satisfied  with constants $\tau_j=(p_j/2\kappa)^2n/20$.

For Poisson arrivals, from the proof of \citep[Lemma~3]{multinomial_bound}, $\Pr(\abs{X-\lambda}\geq\varepsilon\lambda)\leq 2e^{-\lambda\varepsilon^2/4}$ is valid for $X\sim\Poiss(\lambda)$ and any $\varepsilon>0$.
Using this, we can simply take $\tau_j=0$.
\end{myexample}

In the remainder of this subsection we generalize our ideas to prove the following.

\begin{theorem}\label{theo:reg_general}
Assume the arrival process $(Z(t):t\in [T])$ satisfies the conditions in \cref{eq:tail_z}.
The expected regret of the Fluid Bayes Selector (\cref{alg:fluid}) for Online Packing is at most $d\rmax M$, where $M$ is independent of $T$ and $B$.
Specifically, for $\kappa=\kappa(A)$ we have
\begin{enumerate}
    \item For Multinomial arrivals: $M\leq 103\kappa^2\sum_{j\in[n]}1/p_j$.
    \item For general distributions satisfying \cref{eq:tail_z}: $M\leq\sum_{j\in [n]}p_j(2c_j+\max\crl{\tau_j,\tilde\tau_j})$, where $p_j$ is an upper bound on $\Pr[\Jt=j]$ and $\tilde\tau_j$ is such that $\E[Z_j(\tilde\tau_j)]\geq 2$, i.e., it is large enough.
\end{enumerate}
\end{theorem}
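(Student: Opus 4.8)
The plan is to feed the Fluid Bayes Selector into the Compensated Coupling and reduce everything to a tail bound on the arrival process. Concretely, \cref{cor:bayes} gives $\E[\reg]\le \max_{a\in\Ac,\,j\in\J}\rl(a,j)\cdot\sum_{t\in[T]}\E[\Pr[Q(t,B^t)\mid B^t]]$, and for Online Packing with $0/1$ requirements the example following \cref{def:margincomp} bounds $\rl(a,j)\le d\rmax$. So the theorem reduces to producing a constant $M$, independent of $T$ and $B$, with $\sum_{t\in[T]}\E[\Pr[Q(t,B^t)\mid B^t]]\le M$; the stated bound $d\rmax M$ is then immediate.

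First I would make precise the ``front-load / back-load'' reduction sketched before the theorem statement. Fix $t$, condition on $B^t=b$ and on the arrival being of type $j$, and let $\Xt,\Xts$ be optimal for $(P_t),(P^\star_t)$ at budget $b$. Since \off is coupled to hold the \emph{same} budget $b$, a rejection forced by infeasibility is never an error; a voluntary rejection (which happens exactly when $\Xt_j<\E[Z_j(t)]/2$) is an error only if every optimum of $(P^\star_t)$ uses more than $Z_j(t)-1$ units of type $j$, and an acceptance (which happens when $\Xt_j\ge\E[Z_j(t)]/2$ and $A_j\le b$) is an error only if every optimum of $(P^\star_t)$ uses fewer than one unit of type $j$. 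Choosing $\Xts$ to be the optimum of $(P^\star_t)$ closest to $\Xt$, this shows $Q(t,b)\subseteq\{\Xts_j>Z_j(t)-1\}$ in the first case and $Q(t,b)\subseteq\{\Xts_j<1\}$ in the second.

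Next I would invoke the (set-valued) Lipschitz property of parametric LP optima, \cref{prop:lipschitz}: for a suitable norm $\mu$ and a constant $\kappa=\kappa(A)$ depending only on $A$ (and \emph{uniform in} $b$), the pair above can be chosen with $\abs{\Xt_j-\Xts_j}\le\kappa\norm{Z(t)-\E[Z(t)]}_\mu$ for all $j$; this is precisely why the threshold $\E[Z_j(t)]/2$ appears in \cref{alg:fluid}. On an incorrect acceptance, $\E[Z_j(t)]/2\le\Xt_j<1+\kappa\norm{Z(t)-\E[Z(t)]}_\mu$. On an incorrect rejection, $Z_j(t)-1<\Xts_j<\E[Z_j(t)]/2+\kappa\norm{Z(t)-\E[Z(t)]}_\mu$, which together with $Z_j(t)\ge\E[Z_j(t)]-\norm{Z(t)-\E[Z(t)]}_\mu$ again forces $\norm{Z(t)-\E[Z(t)]}_\mu$ to be of order $\E[Z_j(t)]$. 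In both cases, once $t\ge\tilde\tau_j$ (so that $\E[Z_j(t)]$ clears the relevant $O(1)$ threshold), rearranging yields $\norm{Z(t)-\E[Z(t)]}_\mu\ge\E[Z_j(t)]/(2\kappa)$, and hence for $t>\max\{\tau_j,\tilde\tau_j\}$ the All-Time Deviation bound \cref{eq:tail_z} gives $\Pr[Q(t,B^t)\mid B^t,\,\Jt=j]\le c_j/t^2$. Summing then finishes: writing $\E[\Pr[Q(t,B^t)\mid B^t]]=\sum_j\Pr[\Jt=j]\,\E[\Pr[Q(t,B^t)\mid B^t,\,\Jt=j]]$, using $\Pr[\Jt=j]\le p_j$, and splitting the sum over $t$ at $\max\{\tau_j,\tilde\tau_j\}$, the early part contributes at most $p_j\max\{\tau_j,\tilde\tau_j\}$ (bounding each term by $1$) and the tail at most $p_jc_j\sum_t t^{-2}\le 2p_jc_j$, giving part (2). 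For part (1) I would instead substitute the multinomial concentration \cref{eq:tail_multi} with $\varepsilon=p_j/(2\kappa)$ — so its threshold equals $\E[Z_j(t)]/(2\kappa)$, $\tau_j=(p_j/2\kappa)^2n/20$, and the tail is exponential — and carry the geometric sum and the crude bounds $p_j\le1$, $n\le\sum_j1/p_j$ through to collapse $M$ to $103\kappa^2\sum_{j}1/p_j$.

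The main obstacle is \cref{prop:lipschitz} itself: one needs a Lipschitz modulus $\kappa(A)$ for the possibly non-unique optimal set of $(P^\star_t)$ under the right-hand-side perturbation $Z(t)-\E[Z(t)]$, \emph{holding uniformly over the budget $b$}, and its value must be calibrated against the factor $1/2$ hard-coded into \cref{alg:fluid} so that the disagreement event lands exactly in the regime controlled by \cref{eq:tail_z}. Everything else — formalizing the reduction of $Q(t,b)$ to a deviation event, and summing the quadratic (or exponential) tails, including pinning down the explicit constant $103$ — is routine bookkeeping.
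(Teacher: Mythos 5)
Your proposal follows essentially the same route as the paper's proof: reduce via the Compensated Coupling and $\rl(a,j)\le d\rmax$ to bounding $\sum_t \E[\Pr[Q(t,B^t)\mid B^t]]$, characterize disagreement through the incorrect-acceptance/rejection conditions on $\Xt_j$ versus $\Xts_j$, convert these to a deviation event $\norm{Z(t)-\E[Z(t)]}_\mu \gtrsim \E[Z_j(t)]/2\kappa$ via \cref{prop:lipschitz}, and sum the All-Time Deviation (or multinomial) tails, splitting at $\max\{\tau_j,\tilde\tau_j\}$. The only cosmetic differences — handling infeasible acceptances by noting \off shares \onl's budget rather than paying $r_j$ for $t<\tilde\tau_j$, and phrasing the Lipschitz step as its contrapositive — do not change the argument, so this is correct and matches the paper.
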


The constant $\kappa(A)$ is given by \cref{prop:lipschitz} below.
Just as before, \cref{theo:reg_general}, along with \cref{cor:additive}, provides a performance guarantee for  \cref{alg:bayes_selector}.
We state the corollary without proof, since it is identical to that of \cref{cor:secretary}.
\begin{corollary}\label[corollary]{cor:general}
For the Online Packing problem, if the arrival process satisfies the conditions in \cref{eq:tail_z}, the expected regret of the Bayes Selector (\cref{alg:bayes_selector}) with any imperfect estimators $\hat q$ is at most $d\rmax (M+2\sum_{t\in[T]}\Delta^t)$, where $M$ is as in \cref{theo:reg_general} and $\Delta^t$ is the accuracy defined by $\abs{q(t,a,b)-\hat q(t,a,b)}\leq \Delta^t$ for all $t\in [T],a\in\Ac,b\in \N^d$.
\end{corollary}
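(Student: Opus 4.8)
The plan is to derive this corollary from \cref{theo:reg_general} in exactly the way \cref{cor:secretary} was derived from \cref{theo:secretary}: plug the regret bound for the Fluid Bayes Selector into the imperfect‑estimator wrapper of \cref{cor:additive}. First I would record the Online Packing instantiation of the marginal compensation from the example following \cref{def:margincomp}, namely $\max_{a\in\Ac,j\in\J}\rl(a,j)\le d\rmax$ (rejecting costs at most $\rmax$ per consumed resource, so at most $d\rmax$; accepting costs at most $\rmax$). Feeding this into \cref{cor:additive} — which handles estimators of uniform accuracy $\Delta^t$ by running \cref{alg:bayes_selector} with the inflated over‑estimates $\hat q(t,a,s)+\Delta^t$ — gives
\[
\E[\reg]\;\le\;d\rmax\sum_{t\in[T]}\bigl(\E[\hat q(t,A^t,S^t)]+\Delta^t\bigr),
\]
so all that remains is to bound $\sum_{t\in[T]}\E[\hat q(t,A^t,S^t)]$ by $M+\sum_{t\in[T]}\Delta^t$.

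For this I would use the defining property of the policy's action, $A^t\in\argmin_{a\in\Ac}\hat q(t,a,S^t)$ (the additive inflation by $\Delta^t$ does not change the minimizer), together with the accuracy hypothesis $\hat q(t,a,s)\le q(t,a,s)+\Delta^t$, valid for every action $a$. Evaluating the latter at the action that minimizes the true disagreement probability $q(t,\cdot,S^t)$ yields $\hat q(t,A^t,S^t)=\min_{a\in\Ac}\hat q(t,a,S^t)\le\min_{a\in\Ac}q(t,a,S^t)+\Delta^t$. Hence it suffices to prove $\sum_{t\in[T]}\E\bigl[\min_{a\in\Ac}q(t,a,S^t)\bigr]\le M$, with $M$ the constant of \cref{theo:reg_general}.

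This last inequality is the only step that is not pure bookkeeping, and it is where the substance of \cref{theo:reg_general} is reused. The key is the robustness property made explicit around \cref{eq:bound_q}: the fluid‑based estimators $\hat q^{\mathrm{fl}}$ underlying \cref{alg:fluid} are genuine over‑estimates of $q$, and by \cref{eq:fluid_rule,eq:bound_q} they satisfy a per‑type exponential tail $\min_{a\in\Ac}\hat q^{\mathrm{fl}}(t,a,b)\le c_1 e^{-c_2 t}$ that holds \emph{uniformly over all budget vectors} $b\in\N^d$. Since over‑estimation gives $\min_{a}q(t,a,b)\le\min_{a}\hat q^{\mathrm{fl}}(t,a,b)$ for each $b$, the same uniform tail controls $\min_{a}q(t,a,S^t)$ even though $S^t$ is now the trajectory of the imperfect‑estimator policy rather than of \cref{alg:fluid}; summing over $t$, using the All‑Time Deviation bound of \cref{def:all_time} exactly as in the proof of \cref{theo:reg_general}, reassembles the constant $M$ in its two stated forms. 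Combining with the display above, $\E[\reg]\le d\rmax\bigl(M+2\sum_{t\in[T]}\Delta^t\bigr)$.

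I expect the main obstacle to be precisely this uniform‑in‑$b$ control of the minimal disagreement probability: it is what lets the analysis of \cref{theo:reg_general} be transported, without change, from the state process of \cref{alg:fluid} to that of any other Bayes Selector using $\Delta^t$‑accurate estimators. Everything else — the compensation bound $d\rmax$, the telescoping already performed in \cref{cor:additive}, and the $\min_a\hat q\le\min_a q+\Delta^t$ slack — is routine.
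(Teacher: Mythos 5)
Your proposal is correct and follows essentially the same route as the paper, which states that the proof of \cref{cor:general} is identical to that of \cref{cor:secretary}: apply \cref{cor:additive} with the compensation bound $d\rmax$, use $\hat q(t,A^t,S^t)=\min_a\hat q(t,a,S^t)\le\min_a q(t,a,S^t)+\Delta^t$, and then control $\sum_t\E[\min_a q(t,a,S^t)]$ by the uniform-in-$b$ disagreement bound already established in the proof of \cref{theo:reg_general}. You correctly identify the one non-trivial point — that the bound on $\min_a q(t,a,b)$ holds for every budget $b$, so it transfers from the Fluid Bayes Selector's trajectory to that of any $\Delta^t$-accurate Bayes Selector — which is exactly the "robustness" observation the paper relies on.
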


To prove~\cref{theo:reg_general}, we need to quantify how the change in the right-hand side of an LP impacts optimal solutions.
Indeed, as stated in \cref{eq:coupled_lp}, the solutions $\Xt$ and $\Xts$ correspond to perturbed right-hand sides ($\E[Z(t)]$ and $Z(t)$ respectively).
The following proposition implies that small changes in the arrivals vector do not change the solution by much and it is based on a more general result from~\citep[Theorem 2.4]{mangasarian}.

\begin{proposition}[LP Lipschitz Property]
\label[proposition]{prop:lipschitz}
Given $b\in \R^d$, and any norm $\norm{\cdot}_\mu$ in $\R^n$, consider the following LP
\begin{align*}
P(y) \quad \max\crl{r'x: Ax\leq b, 0\leq x\leq y, y\in \Rp^n}.
\end{align*}
Then $\exists$ constant $\kappa=\kappa_{\mu}(A)$ such that, for any $y,\hat y\in\Rp^n$  and any solution $x$ to $P(y)$, there exists a solution $\hat x$ solving $P(\hat y)$ such that $\norm{x- \hat x}_\infty\leq \kappa\norm{y - \hat y}_{\mu}$.
\end{proposition}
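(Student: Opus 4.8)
The plan is to reduce the statement to a known Lipschitz result for parametric linear programs, specifically the Hoffman-type bound of \citet[Theorem 2.4]{mangasarian}. The key observation is that $P(y)$ can be rewritten with the box constraint $0 \le x \le y$ absorbed into the constraint matrix: define the enlarged system $\tilde A x \le \tilde b(y)$ where $\tilde A = \begin{pmatrix} A \\ I \\ -I \end{pmatrix}$ and $\tilde b(y) = \begin{pmatrix} b \\ y \\ 0 \end{pmatrix}$. Only the middle block of the right-hand side depends on the parameter $y$, and it does so linearly (indeed, it equals $y$). Thus $P(y) = \max\{r'x : \tilde A x \le \tilde b(y)\}$ is a linear program whose feasible polyhedron varies with $y$ only through a right-hand side perturbation.

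The main steps, in order, are as follows. First I would write $P(y)$ in the standard form above and note that the objective $r'x$ is fixed (independent of $y$), so this is precisely the setting of a right-hand-side-perturbed LP. Second, I would invoke the result of \citet{mangasarian} (or the classical Hoffman/Robinson bound for solution-set Lipschitz continuity of LPs): there is a constant depending only on the constraint matrix $\tilde A$ — hence only on $A$ and the dimensions — such that for any two right-hand sides, any optimal solution of one LP is within that constant times the norm of the right-hand side difference of some optimal solution of the other. Third, I would specialize this: the difference $\tilde b(y) - \tilde b(\hat y)$ has its first block ($b$) and third block ($0$) equal to zero, and its middle block equal to $y - \hat y$; measuring this in the appropriate norm gives $\|y - \hat y\|_\mu$ (up to the norm equivalence bookkeeping, which only affects the constant). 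Collecting constants, set $\kappa = \kappa_\mu(A)$ to be the resulting factor. Since the bound from \citet{mangasarian} is stated for a general norm pair, I would track which norm appears on the solution side; the claim asks for $\|\cdot\|_\infty$ there, so I would either use the version of the Hoffman bound stated in the $\ell_\infty$ norm directly, or absorb the finite-dimensional norm conversion constant into $\kappa$.

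The main obstacle is bookkeeping rather than conceptual: making sure the constant genuinely depends only on $A$ (and the fixed dimensions $d,n$) and not on $b$ or on $y, \hat y$. This is exactly what the Hoffman-type bounds guarantee — the modulus of the bound is a property of the constraint matrix (the relevant quantity is something like $\max$ over the submatrices of $\tilde A$ of the reciprocal of the smallest singular value, or Robinson's condition number), and the right-hand side enters only through the perturbation magnitude — so the dependence structure is correct, but one must cite the precise form carefully. A secondary subtlety is that optimal solution sets of LPs are polyhedra, not singletons, so the statement must be (and is) phrased as ``for every optimal $x$ of $P(y)$ there exists an optimal $\hat x$ of $P(\hat y)$ nearby,'' which is exactly the asymmetric form in which \citet{mangasarian} proves it; I would make sure not to overclaim a two-sided Hausdorff bound, which is all that is needed downstream anyway. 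Finally, one should check the hypotheses of the cited theorem apply here — feasibility and boundedness of $P(y)$ for all $y \in \Rp^n$, which holds since $x = 0$ is always feasible and $x \le y$ with $Ax \le b$ keeps $x$ bounded whenever, e.g., the relevant rows of $A$ are nonnegative, or more robustly since the excerpt's setting has $A \in \N^{d\times n}$ with no zero column after the stated reductions.
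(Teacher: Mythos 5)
Your proposal is correct and follows essentially the same route as the paper: the paper proves this proposition by reformulating $P(y)$ as a right-hand-side-perturbed LP and invoking \citep[Theorem 2.4]{mangasarian}, exactly as you do, with the constant $\kappa_\mu(A)$ determined by the enlarged constraint matrix alone (the paper's Appendix proof of the matching analogue, \cref{prop:lipschitz_matching}, carries out precisely the multiplier-based computation of this constant that your sketch alludes to). One minor simplification: boundedness of $P(y)$ needs no condition on $A$ at all, since $0\leq x\leq y$ already confines the feasible region to a box.
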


\begin{proofof}{\cref{theo:reg_general}}
Recall the two conditions derived from our decision rule:
(1) Incorrect rejection of $j$ means $\Xt_j< \E[Z_j(t)]/2$ and $\Xts_j>Z_j(t)-1$.
(2) Incorrect acceptance of $j$ means $\Xt_j\geq \E[Z_j(t)]/2$ and $\Xts_j<1$.
We have to additionally account for feasibility, i.e., we can only accept a request $j$ if $B_i^t\geq a_{ij}$ for all $i\in [d]$.
In case there are not enough resources, our decision rule is feasible if either $\Xt_j<\E[Z_j(t)]/2$ (reject) or $\Xt_j\geq 1$ (since $\Xt$ is feasible for $(P_t)$).
Only in the case $\Xt_j\geq \E[Z_j(t)]/2$ and $\Xt_j< 1$ we need to disregard our decision rule and are forced to reject; under such a condition we must pay a compensation of $r_j$.
Observe that this condition is never met if $\E[Z_j(t)]\geq 2$, i.e., it is vacuous for $t\geq \tilde\tau_j$.

The disagreement sets (\cref{def:disagreement}) are thus $Q(t,b)=\crl{\omega\in\Omega: \text{either } (1), (2) \text{ or } t<\tilde \tau_j}$, where (1) and (2) are the previous conditions.
Now we can upper bound the probability of paying a compensation as follows.
Call $E_j$ the event $\crl{\omega\in\Omega:\norm{Z(t)-\E(Z(t))}_1\leq \E[Z_j(t)]/2\kappa}$.
In this event, \cref{prop:lipschitz} implies $\abs{\Xt_j-\Xts_j}\leq \E[Z_j(t)]/2$, hence conditions (1) and (2) do not happen when $E_j$ occurs, i.e., $\Pr_j[Q(t,b)|E_j] \leq \In{t<\tilde\tau_j}$.
Observe that $\Pr[\bar E_j]\leq f_j(t)+\In{t<\tau_j}$, where $f_j(t)=c_j/t^2$ for general processes satisfying \cref{eq:tail_z} and $f_j(t)= e^{-t(p_j/2\kappa)^2/25}$ for the Multinomial process (see \cref{eq:tail_multi}).
Finally,  
\begin{align}
\label{eq:expec}
q_j(t,B^t) \leq \Pr[\bar E_j] + \Pr_j[Q(t,B^t)|E_j]
\leq \Pr[\bar E_j] +\In{t<\tilde\tau_j}
\leq f_j(t) + \In{t < \tau_j \text{ or } t<\tilde\tau_j}.
\end{align}
Summing up over time, we get
\[
\sum_{t\in[T]} q(t,B^t) \leq \sum_{j\in [n]}p_j\prn*{\sum_{t\in[T]} f_j(t) + \max\crl{\tau_j,\tilde \tau_j}}
\]
Since $\sum_{t\in[T]} 1/t^2\leq  \pi^2/6 \leq 2$, this finishes the proof for general processes.
For the case of Multinomial arrivals, we can be more refined.
Indeed, $\tilde \tau_j$ is defined by $\E[Z_j(\tilde\tau_j)]\geq 2$, i.e., $\tilde\tau_j \geq 2/p_j$ and $\tau_j = (p_j/2\kappa)^2n/20$ (see \cref{eq:tail_multi}).
From the previous equation, with the stronger exponential bound $f_j(t)= e^{-t(p_j/2\kappa)^2/25}$ we get
\begin{align*}
\sum_{t\in[T]} q(t,B^t) &\leq \sum_{j\in [n]}p_j\prn*{\frac{25}{(p_j/2\kappa)^2} + \max\crl{(p_j/2\kappa)^2n/20,2/p_j}} \\
&\leq 100\kappa^2\sum_{j\in [n]}\frac{1}{p_j}+3n.
\end{align*}
Since $n\leq \sum_{j\in [n]}\frac{1}{p_j}$, we arrive a the desired bound.
The result follows via the compensated coupling (\cref{lem:coupling}) and \cref{cor:bayes}.
\end{proofof}
\begin{remark}
In the multi-secretary problem it is easy to conclude $\kappa(A)=1$, thus this analysis recovers the same bound up to absolute constants (namely $103$ vs $2$).
The larger constant comes exclusively from the larger constants in the tail bounds of Multinomial compared to Binomial r.v.
\end{remark}

\begin{remark}
More refined bounds on $M$ can be obtained by not bounding $\Pr[\Jt=j]\leq p_j$, but rather by $\Pr[\Jt=j] \leq p_j(t)$.
For example, a time-varying version of a Multinomial process easily fits in our framework and the proof does not change.
\end{remark}

\begin{remark}
The theorem holds even under Markovian correlations (see \cref{ex:markov} below), where the distribution of $Z(t-1)$ depends on $\Jt$.
It is interesting that in this case it is impossible to run the optimal policy for even moderate instance sizes, since the state space is huge, while the Bayes Selector still offers bounded expected regret.
\end{remark}

We now give two examples of other arrival processes that satisfy the All-Time Deviation (\cref{def:all_time}).
The proofs of the bounds are short, but we relegate them to Appendix \ref{sec:app_arrival}.
We emphasize that \cref{ex:heavy} below has quadratic tails (instead of exponential), hence we term it heavy tailed.

\begin{myexample}[Markovian Arrival Processes]\label{ex:markov}
We consider the case where $\Jt$ is drawn from an ergodic Markov chain.
Let $P\in \Rp^{n\times n}$ be the corresponding matrix of transition probabilities.
The process unfolds as follows: at time $t=T$ an arrival $\JT{T}\in [n]$ is drawn according to an arbitrary distribution, then for $t=T,\ldots,2$ we have $\Pr[\JT{t-1}=j|\Jt] = P_{\Jt j}$.
Let $\nu\in\Rp^n$ be the stationary distribution.
\emph{We do not require long-run or other usual stationary assumptions; the process is still over a finite horizon $T$}.
This process satisfies All-Time Deviation with exponential tails.
Specifically, with the norm $\mu = \norm{\cdot}_\infty$, for some constants $c_j,c'$ that depend on $P$ only we have
\begin{equation}\label{eq:markov_all_time}
\Pr[\norm{Z(t)-\nu t}_\infty \geq \nu_jt/2\kappa_j] \leq n c' e^{-c_jt}, \quad \forall t\in [T], j\in [n].    
\end{equation}
\end{myexample}

\begin{myexample}[Heavy Tailed Poisson Arrivals]\label{ex:heavy}
We consider the case where the arrival process is governed by independent time varying Poisson processes with arrival rates $\lambda_j(t)>0$, which we assume for simplicity have finitely many discontinuity points (so that all the expectations are well defined).
Under the following conditions, the process satisfies the All-Time Deviation  with \emph{quadratic tails} and norm $\mu=\norm{\cdot}_\infty$.
\begin{align}
\max_{j,k\in [n]}\max_{s\in [0,t]}\frac{\lambda_j(s)}{\lambda_k(s)} &\leq g(t) \quad \forall t\geq 0 \label{eq:heavy_ratio}    \\
\min_{j\in[n]}\min_{s\in [0,t]} \lambda_j(s) &\geq g(t)f(t)\frac{\log(t)}{t}, \quad \text{ where } \quad \lim_{t\to\infty}f(t) = \infty. \label{eq:heavy_min}
\end{align}
In other words, we require $f(t) = \omega(1)$ and $g(t)$ is any function.
Intuitively, \cref{eq:heavy_ratio} guarantees that no type $j$ ``overwhelms'' all other types; observe that, when the rates are constant, this is trivially satisfied with $g(t)$ constant.
On the other hand, \cref{eq:heavy_min} controls the minimum arrival rate, which can be as small as $\omega(\log(t)/t)$.
Observe that our conditions allow for the intensity to increase closer to the end ($t=0$), i.e., we incorporate the case where agents are more likely to arrive closer to the deadline.
\end{myexample}
\subsection{High-Probability Regret Bounds}
\label{ssec:highprob}

We have proved that $\E[\reg]$ is constant for packing problems.
One may worry that this is not enough because, since it is a random variable, \reg may still realize to a large value.
We present a bound for the distribution of \reg showing that it has light tails.

\begin{proposition}\label[proposition]{prop:distr_reg}
For packing problems, there are constants $\tau$ and $c_j$ for $j\in[n]$, depending on $A,p$ and the distribution of $Z$ only, such that
\begin{enumerate}
    \item For Multinomial or Poisson arrivals: $\forall x>\tau$,  $\Pr[\reg > x]\leq \sum_{j}p_je^{-c_jx/\rmax}/c_j$.
    \item For general distributions satisfying \cref{eq:tail_z}: $\forall x>\tau$,    $\Pr[\reg > x]\leq \frac{\rmax}{x}\sum_{j}p_jc_j$.
\end{enumerate}

\end{proposition}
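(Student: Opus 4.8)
The plan is to upgrade the \emph{expected}-regret bound of \cref{theo:reg_general} into a tail bound by combining the sample-path identity of the Compensated Coupling with the large-deviation estimates that the proof of \cref{theo:reg_general} already produces, via a ``the last disagreement happens early'' pigeonhole argument. First, by \cref{lem:coupling}, for the Fluid Bayes Selector (\cref{alg:fluid}) we have $\reg[\omega]=\sum_{t\in[T]}\Rl(t,\pon(t,S^t,\Jt),S^t)[\omega]\,\Ins{Q(t,S^t)}[\omega]$, and since for Online Packing $0\le\Rl(t,a,s)\le\rl(a,j)\le d\rmax$ (the bound noted for packing right after \cref{def:margincomp}), it is enough to control the tail of $N\defeq\sum_{t\in[T]}\Ins{Q(t,S^t)}$, because $\reg\le d\rmax\,N$.

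Second, I extract from the proof of \cref{theo:reg_general} the following structural fact: for each type $j$ there are thresholds $\tau_j,\tilde\tau_j$, depending only on $A$, $p$ and the law of $Z$, such that on $\{\Jt=j\}$ and for every $t\ge\max\{\tau_j,\tilde\tau_j\}$ the disagreement event is contained in $\bar E_j(t)\defeq\{\norm{Z(t)-\E[Z(t)]}_\mu\ge\E[Z_j(t)]/(2\kappa)\}$, and moreover $\Pr[\bar E_j(t)]\le f_j(t)$, where $f_j(t)=c_j/t^2$ under \cref{eq:tail_z} in general and $f_j(t)=e^{-c_j t}$ for Multinomial or Poisson arrivals, by the Chernoff bounds of \cref{eq:tail_multi} and its Poisson analogue. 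A short conditioning argument---given $\Jt=j$ the vector $Z(t)$ differs by a single coordinate from an unconditioned arrival count, which for large $t$ is negligible against the deviation threshold $\Theta(p_j t)$---turns this into $\Pr[\bar E_j(t)\cap\{\Jt=j\}]\le p_j f_j(t)$, after harmlessly relabeling the $c_j$.

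Third comes the pigeonhole step. Set $\tau_0\defeq\max_j\max\{\tau_j,\tilde\tau_j\}$ and $\tau\defeq 2d\rmax\tau_0$ (this is the constant $\tau$ in the statement), and fix $x>\tau$. Since $\Ins{Q(t,S^t)}\le\Ins{\bar E_{\Jt}(t)}$ for every $t>\tau_0$, on the event $\{\reg>x\}$ we get $d\rmax\tau_0+d\rmax\sum_{t>\tau_0}\Ins{\bar E_{\Jt}(t)}\ge\reg>x$, so strictly more than $x/(2d\rmax)$ distinct times $t>\tau_0$ satisfy $\bar E_{\Jt}(t)$, and hence the largest of them exceeds $x/(2d\rmax)$. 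Therefore $\{\reg>x\}\subseteq\{\exists\,t>x/(2d\rmax):\bar E_{\Jt}(t)\}$, and a union bound over times and over the arriving type gives
\[
\Pr[\reg>x]\le\sum_{t>x/(2d\rmax)}\Pr[\bar E_{\Jt}(t)]\le\sum_{j\in[n]}p_j\sum_{t>x/(2d\rmax)}f_j(t).
\]
Summing the tails---$\sum_{t>s}c_j/t^2\le c_j/s$ in general, and $\sum_{t>s}e^{-c_j t}\le e^{-c_j s}/(1-e^{-c_j})=O(e^{-c_j s}/c_j)$ in the Multinomial/Poisson case---and folding the factor $2d$ and the absolute constants into the $c_j$, yields items (2) and (1) of the statement respectively.

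The main obstacle, and the only genuinely load-bearing point, is the pigeonhole conversion of ``$\reg$ large'' into ``a single disagreement occurred at a late time $t\gtrsim x$'': this is where the linear growth $\E[Z_j(t)]=\Theta(t)$, and thus the rapid decay of $f_j$, is exploited, and it relies crucially on the uniform (in $b,t$) compensation bound $\Rl\le d\rmax$ and on the inclusion $Q(t,S^t)\subseteq\bar E_{\Jt}(t)$ inherited from \cref{theo:reg_general}. Everything else---the conditioning that recovers the $p_j$ prefactors, and the elementary tail sums---is routine, and no concentration estimate beyond those already used for \cref{theo:reg_general} is required.
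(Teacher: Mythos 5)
Your proof is correct and follows essentially the same route as the paper: bound $\reg$ by a constant times the number of disagreements, observe that many disagreements force a disagreement at a late time, and union-bound over those late times using the per-period probability bounds already established for \cref{theo:reg_general}. The paper packages your pigeonhole step as \cref{lem:bound_bernoulli} (a tail bound for sums of dependent Bernoullis), but the argument and the resulting tail sums are the same up to the choice of constants.
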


The proof is based on the following simple lemma.
The idea is to first bound the disagreements of our algorithm, as defined in \cref{sec:bayes_selector}.
The total number of disagreements is a sum of dependent Bernoulli variables, which we bound next.

\begin{lemma}\label[lemma]{lem:bound_bernoulli}
Let $\crl{\Xt:t\in[\T]}$ be a sequence of dependent r.v.\ such that $\Xt\sim\Ber(p_t)$ and let $\crl{q_t:t\in[\T]}$ be numbers such that $q_t\geq p_t$.
If we define $D\defeq\sum_{t=1}^\T\Xt$, then
\[
\Pr[D\geq d] \leq \sum_{t=d}^\T q_t
\]
\end{lemma}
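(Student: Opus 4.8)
The plan is to prove the statement by a simple union-bound argument exploiting the structure of the event $\crl{D\geq d}$. The key observation is that if $D = \sum_{t=1}^{\T} \Xt \geq d$, then in particular at least one of the indicators among $\Xt$ for $t \in \crl{d, d+1, \ldots, \T}$ must equal $1$: indeed, there are only $d-1$ indices in $\crl{1,\ldots,d-1\}$, so if all of $X^{d}, X^{d+1}, \ldots, X^{\T}$ were zero, then $D \leq d-1 < d$, a contradiction. Hence $\crl{D\geq d} \subseteq \bigcup_{t=d}^{\T}\crl{\Xt = 1}$.

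From here the proof is immediate: applying the union bound and then the hypothesis $q_t \geq p_t = \Pr[\Xt=1]$ gives
\[
\Pr[D\geq d] \;\leq\; \Pr\brk*{\textstyle\bigcup_{t=d}^{\T}\crl{\Xt=1}} \;\leq\; \sum_{t=d}^{\T}\Pr[\Xt=1] \;=\;\sum_{t=d}^{\T} p_t \;\leq\; \sum_{t=d}^{\T} q_t,
\]
which is exactly the claimed bound. Note that no independence is used anywhere; the argument is purely a counting-plus-union-bound observation, which is why it applies to dependent variables.

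There is essentially no obstacle here — the only subtlety worth flagging is the indexing. One must be careful that the indices run over $\crl{1,\ldots,\T}$ and that we are selecting the $\T-d+1$ ``top'' indices $d,\ldots,\T$; any window of $\T-d+1$ indices would work, but choosing $\crl{d,\ldots,\T}$ is what matches the statement (and, in the intended application, corresponds to the fact that the relevant disagreement probabilities $q_t$ are summable precisely as a tail sum, so one wants the large-$t$ indices in the bound). Once this bookkeeping is in place, the argument is complete in three lines as above.
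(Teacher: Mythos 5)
Your argument is correct and is essentially identical to the paper's proof: both establish the inclusion $\crl{D\geq d}\subseteq\bigcup_{t=d}^{\T}\crl{\Xt=1}$ by noting that only $d-1$ indices lie below $d$, then apply a union bound and the hypothesis $q_t\geq p_t$. No differences worth noting.
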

\begin{proof}
Fix $d\in [\T]$ and observe that 
\[
\crl{\omega\in\Omega:D\geq d} \subseteq
\crl{\omega\in\Omega: \exists t\geq d, \Xt=1}.
\]
Indeed, if the condition $(\exists t\geq d, \Xt=1)$ fails, then at most $d-1$ variables $\Xt$ can be one.

Finally, a union bound shows $\Pr[D\geq d] \leq \sum_{t\geq d}\Pr[\Xt=1]$.
Since $q_t\geq p_t$, the proof is complete.
\end{proof}

\begin{proofof}{\cref{prop:distr_reg}}
As described in the previous subsections, we can write $\reg \leq \rmax D$, with $D$ the number of disagreements.
Additionally, $D$ is a sum of $\T$ Bernoulli r.v.\ $\Xt$, each with parameter bounded by $q_t$.

In the case of Multinomial and Poisson r.v., as described in \cref{sec:warm_up2}, we have exponential bounds $q_t\leq \sum_{j\in[n]}p_je^{-c_jt}$ for $t\geq\tau=\max_{j\in[n]}\tau_j$.
We conclude invoking \cref{lem:bound_bernoulli} and upper bounding $\sum_{t=x+1}^\T e^{-c_jt} \leq e^{-c_jx}/c_j$.

For general distributions, as described in \cref{sec:regret_rm}, we have the bounds $q_t\leq \sum_{j\in [n] }p_j\frac{c_j}{t^2}$ for $t\geq \tau$.
Using \cref{lem:bound_bernoulli} and bounding $\sum_{t=x+1}^\T t^{-2}\leq 1/x$ finishes the proof.
\end{proofof}

\section{Regret Guarantees for Online Matching}
\label{sec:matching_type}

We turn to an alternate setting, where each incoming arrival corresponds to a \emph{unit-demand} buyer -- in other words, each arrival wants a unit of a single resource, but has different valuations for different resources. 
This is essentially equivalent to the online bipartite matching problem with edge weights (weights correspond to rewards) where there can be multiple copies of each node.

As before, we are given a matrix $A\in\crl{0,1}^{d\times n}$ characterizing the demand for resources, which can be interpreted as the adjacency matrix in the Online Matching problem.
Define $S_j\defeq\crl{i\in[d]:a_{ij}=1}$.
If we allocate any resource $i\in S_j$ to an agent type $j$, we obtain a reward of $r_{ij}$, whereas allocating $i\not\in S_j$ has no reward.
We can allocate at most one item to each agent.

Given resource availability $B\in\N^d$ and total arrivals $Z\in\N^n$, we can formulate \off's problem as follows, where the variable $x_{ij}$ denotes the number of items $i$ allocated to agents of type $j$.
\begin{equation}\label{eq:off_matching_problem}
\begin{array}{rrll}
(P[Z,B])\quad \max  & \sum_{i,j}x_{ij}r_{ij}a_{ij}& \\
\text{s.t.}&  \sum_{j}x_{ij}&\leq B_i &\forall i\in [d] \\
& \sum_{i\in [d]}x_{ij}&\leq Z_j &\forall j\in [n] \\
& \vx&\geq 0.
\end{array}
\end{equation}

We assume that the  process $Z(t)$ satisfies the All-Time Deviation bound (see \cref{def:all_time}) w.r.t.\ the one-norm and parameters $\kappa_j=(\abs{S_j}+1)/2$.
This condition can be restated as follows.
For every $j\in [n]$, there are constants $c_j\geq 0$ and naturals $\tau_j$ such that
\begin{equation}\label{eq:tail_z2}
\Pr\brk*{\norm{Z(t)-\E(Z(t))}_1 \geq \frac{\E[Z_j(t)]}{\abs{S_j}+1}} \leq \frac{c_j}{t^2} \quad \forall t >\tau_j.
\end{equation}

We now state the main result of this section, which is based on an instantiation of the Bayes Selector.
As before, the theorem readily implies performance guarantees for \cref{alg:bayes_selector}, which we state without proof, since it is identical to that of \cref{cor:secretary}.

\begin{theorem}\label{theo:matching}
For the Online Matching problem, if the arrival process satisfies the conditions in \cref{eq:tail_z2}, then the expected regret of the Fluid Bayes Selector (\cref{alg:fluid_matching}) is at most $\rmax\sum_{j\in [n]}p_j(c_j+\tau_j)$, where $p_j$ is an upper bound on $\Pr[\Jt=j]$.
\end{theorem}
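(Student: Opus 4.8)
The plan is to mirror the structure of the proof of \cref{theo:reg_general}, but accounting for the richer action space: in the matching problem, on each arrival of type $j$ we do not merely accept/reject, but must also choose \emph{which} resource $i\in S_j$ to allocate (or reject). First I would pin down the Fluid Bayes Selector for matching (\cref{alg:fluid_matching}): at time $t$, solve the fluid relaxation $(P_t)$ obtained from \cref{eq:off_matching_problem} by replacing $Z$ with $\E[Z(t)]$ and $B$ with $B^t$; on arrival $j$, read off the (fractional) allocation $\Xt_{ij}$ for each $i\in S_j$ and allocate a resource $i$ for which $\Xt_{ij}$ is ``large'' relative to $\E[Z_j(t)]$ (with the natural threshold scaled by $1/(\abs{S_j}+1)$, matching the choice $\kappa_j=(\abs{S_j}+1)/2$), rejecting if no coordinate crosses the threshold, and deferring to a feasible fallback when $B^t$ forbids the chosen $i$. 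As in the packing case, the Bayes Selector interpretation says this thresholding rule is exactly $\argmin_a \hat q(t,a,B^t)$ for suitable over-estimates $\hat q$, so by \cref{cor:bayes} it suffices to bound $\sum_{t}\E[q(t,A^t,S^t)]$ and to observe that the marginal compensation $\rl(a,j)$ is at most $\rmax$ (rejecting a chosen edge, or re-routing, costs at most one foregone reward since each type consumes at most one unit total).

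Next I would carry out the disagreement analysis. Let $\Xts$ be an optimal solution of the ex-post relaxation $(P_t^\star)$ (with $Z(t)$ and $B^t$), used only in the analysis. The decision at $j$ is ``wrong'' only when the fluid solution $\Xt_{\cdot j}$ and the ex-post solution $\Xts_{\cdot j}$ point to incompatible choices — e.g.\ we allocate $i$ but $\Xts_{ij}<1$, or we reject while some $\Xts_{ij}>Z_j(t)-1$; in each case the $\ell_\infty$-discrepancy $\norm{\Xt_{\cdot j}-\Xts_{\cdot j}}_\infty$ must exceed roughly $\E[Z_j(t)]/(\abs{S_j}+1)$. By the LP Lipschitz property (\cref{prop:lipschitz}, applied to the matching LP with the one-norm), $\norm{\Xt-\Xts}_\infty \le \kappa\norm{\E[Z(t)]-Z(t)}_1$ for a constant $\kappa=\kappa(A)$; choosing the threshold so that $\kappa \cdot \E[Z_j(t)]/(\abs{S_j}+1)$-type slack is absorbed, we get that disagreement on $j$ at time $t$ forces the deviation event $\{\norm{Z(t)-\E[Z(t)]}_1 \ge \E[Z_j(t)]/(\abs{S_j}+1)\}$, which has probability at most $c_j/t^2$ for $t>\tau_j$ by \cref{eq:tail_z2}. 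Summing, $\sum_t q(t,A^t,S^t) \le \sum_j p_j\big(\sum_{t>\tau_j} c_j/t^2 + \tau_j\big) \le \sum_j p_j(c_j+\tau_j)$, using $\sum_t 1/t^2\le 2$ and bounding the first $\tau_j$ terms by $1$. Feeding this and $\rl(a,j)\le\rmax$ into \cref{cor:bayes} yields $\E[\reg]\le \rmax\sum_j p_j(c_j+\tau_j)$.

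The main obstacle I anticipate is the case analysis for the \emph{choice among resources} together with feasibility, which has no analogue in packing. One must verify that the thresholding rule always leaves a feasible action (either the chosen $i$ has $B_i^t\ge 1$, or some coordinate with $\Xt_{ij}\ge 1$ is feasible since $\Xt$ is feasible for $(P_t)$, or rejection is the rule), and that in the residual bad case — chosen coordinate infeasible yet rule insists on allocating — the forced rejection costs only $r_{ij}\le\rmax$, and this residual case sits inside the same deviation event or becomes vacuous once $\E[Z_j(t)]\ge\abs{S_j}+1$ (i.e.\ for $t$ past a $\tilde\tau_j$ that can be folded into $\tau_j$). Getting the constant $\kappa_j=(\abs{S_j}+1)/2$ to line up exactly with the threshold in the definition of $\hat q$ — so that the Lipschitz bound kills both the ``$\Xts<1$'' and ``$\Xts>Z_j(t)-1$'' branches simultaneously — is the delicate bookkeeping step, but it is the direct generalization of the $\kappa$ argument already used for \cref{theo:reg_general}.
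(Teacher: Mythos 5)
Your overall architecture matches the paper's: compensated coupling plus \cref{cor:bayes}, a pigeonhole bound showing the chosen coordinate satisfies $\Xt_{Kj}\geq \E[Z_j(t)]/(\abs{S_j}+1)$, disagreement forcing a large $\ell_\infty$ gap between $\Xt$ and $\Xts$, an LP Lipschitz property converting that into a deviation of $Z(t)$ from its mean, and the All-Time Deviation bound \cref{eq:tail_z2} to sum the probabilities. (The paper implements the selection rule via a fictitious resource $d+1$ with an equality constraint $\sum_{i\in[d+1]}x_{ij}=\E[Z_j(t)]$, so that "reject" is just another coordinate of the argmax and the threshold $\E[Z_j(t)]/(\abs{S_j}+1)$ is crossed automatically; your separate feasibility fallback is unnecessary here since unit consumption plus $\Xt_{Kj}>0$ already forces $B^t_K\geq 1$.)

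There is, however, one genuine gap. You invoke the generic Lipschitz property \cref{prop:lipschitz}, which gives $\norm{\Xt-\Xts}_\infty\leq\kappa\norm{Z(t)-\E[Z(t)]}_1$ for some instance-dependent $\kappa=\kappa(A)$, and then say the factor $\kappa$ can be "absorbed" by choosing the threshold. It cannot: the threshold $\E[Z_j(t)]/(\abs{S_j}+1)$ is dictated by the pigeonhole over the $\abs{S_j}+1$ coordinates and is not a free parameter, so with a generic $\kappa$ the disagreement event only forces $\norm{Z(t)-\E[Z(t)]}_1\geq \E[Z_j(t)]/(\kappa(\abs{S_j}+1))$, which is \emph{not} the event controlled by \cref{eq:tail_z2} unless $\kappa\leq 1$. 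The paper closes this gap with a separate lemma, \cref{prop:lipschitz_matching}, which exploits the fact that in the formulation \cref{eq:coupled_lp_matching} the matrix $A$ appears only in the objective (the constraint matrix is the block structure of a transportation polytope) to prove that the Lipschitz constant is \emph{exactly} $1$: $\norm{\vx^1-\vx^2}_\infty\leq\norm{z^1-z^2}_1$. That proof is a nontrivial computation bounding the dual multipliers $\norm{v}_\infty$ over linearly independent supports of the constraint matrix, and it is precisely what makes the deviation event in \cref{eq:tail_z2} (with no $\kappa$) the right one and yields the clean constant $\rmax\sum_j p_j(c_j+\tau_j)$. Your proposal as written would need either this lemma or a strengthened hypothesis with $\kappa$ inserted into \cref{eq:tail_z2}. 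A minor additional note: your arithmetic $\sum_{t>\tau_j}c_j/t^2+\tau_j\leq 2c_j+\tau_j$ gives a slightly larger constant than stated, but that is cosmetic.
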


\begin{corollary}\label[corollary]{cor:matching}
For the Online Matching problem, if the arrival process satisfies the conditions in \cref{eq:tail_z2}, then the expected regret of the Bayes Selector (\cref{alg:bayes_selector})
with any imperfect estimators $\hat q$ is at most $\rmax (M+2\sum_{t\in[T]}\Delta^t)$.
The constant $M=\sum_{j\in [n]}p_j(c_j+\tau_j)$ is as in \cref{theo:matching} and $\Delta^t$ is the accuracy defined by $\abs{q(t,a,s)-\hat q(t,a,s)}\leq \Delta^t$.
\end{corollary}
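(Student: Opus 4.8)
The plan is to derive \cref{cor:matching} by combining \cref{theo:matching} with \cref{cor:additive}, in exactly the way \cref{cor:secretary} is obtained from \cref{theo:secretary}. The first step is to extract from the proof of \cref{theo:matching} the intermediate fact --- the matching analogue of \eqref{eq:fluid_rule}--\eqref{eq:bound_q} --- that the Fluid Bayes Selector (\cref{alg:fluid_matching}) is an instance of \cref{alg:bayes_selector} run with over-estimates $\hat q^{\mathrm{fluid}}(t,a,s)\ge q(t,a,s)$ that admit a \emph{state-uniform}, summable-in-$t$ bound: there is $h(t)$ with $\min_{a\in\Ac}\hat q^{\mathrm{fluid}}(t,a,s)\le h(t)$ for every reachable state $s$ and $\sum_{t\in[T]}h(t)\le M$, where $M=\sum_{j\in[n]}p_j(c_j+\tau_j)$. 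This is precisely the ``robustness'' property flagged after \eqref{eq:bound_q}: the bound does not depend on the budget $s$, and it follows from the All-Time Deviation assumption \eqref{eq:tail_z2} together with the matching version of \cref{prop:lipschitz} applied to the LP \eqref{eq:off_matching_problem}.

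The second step is to bound the marginal compensations for Online Matching. Since each arriving agent consumes at most one unit of a single resource, forcing \off to accept a request costs at most $\rmax$ (the largest reward \off could ever extract from that unit going forward), and forcing \off to reject costs at most $\rmax$ (the reward foregone). Hence $\max_{a\in\Ac,j\in\J}\rl(a,j)\le\rmax$; this is the matching counterpart of the $d\rmax$ bound from the Online Packing example, and the disappearance of the factor $d$ is exactly why the statement carries $\rmax$ rather than $d\rmax$ in front.

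The third step is to apply \cref{cor:additive}. Given estimators $\hat q$ with $\abs{q(t,a,s)-\hat q(t,a,s)}\le\Delta^t$, run \cref{alg:bayes_selector} with the over-estimates $\hat q(t,a,s)+\Delta^t$. Adding the constant $\Delta^t$ does not change the $\argmin$, so the chosen action $A^t$ satisfies $\hat q(t,A^t,S^t)=\min_{a\in\Ac}\hat q(t,a,S^t)\le\min_{a\in\Ac}q(t,a,S^t)+\Delta^t\le\min_{a\in\Ac}\hat q^{\mathrm{fluid}}(t,a,S^t)+\Delta^t\le h(t)+\Delta^t$, where the first inequality in this chain uses $\hat q\le q+\Delta^t$ pointwise and the second uses $q\le\hat q^{\mathrm{fluid}}$. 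Substituting into the bound of \cref{cor:additive} and using Step 2 yields
\[
\E[\reg]\le\max_{a\in\Ac,j\in\J}\rl(a,j)\cdot\sum_{t\in[T]}\prn*{\E[\hat q(t,A^t,S^t)]+\Delta^t}\le\rmax\sum_{t\in[T]}\prn*{h(t)+2\Delta^t}\le\rmax\prn*{M+2\sum_{t\in[T]}\Delta^t},
\]
which is the claimed inequality.

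The calculations above are routine; the only genuinely delicate point is ensuring that Step 1 supplies the \emph{uniform-over-states} bound and not merely a bound on the regret of \cref{alg:fluid_matching}. With imperfect estimators the policy's trajectory $\crl{S^t}$ generally differs from the Fluid Bayes Selector's, so one cannot invoke \cref{theo:matching} as a black box --- instead one must use the stronger statement that $\min_{a\in\Ac}\hat q^{\mathrm{fluid}}(t,a,s)\le h(t)$ holds at \emph{every} reachable state $s$, which is exactly what the proof of \cref{theo:matching} establishes (and which is why the proof of \cref{theo:matching} is phrased in terms of over-estimates rather than just the regret of a fixed policy).
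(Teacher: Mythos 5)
Your proposal is correct and follows essentially the same route as the paper: the paper states \cref{cor:matching} without proof precisely because the argument is identical to that of \cref{cor:secretary}, namely apply \cref{cor:additive}, note that adding $\Delta^t$ does not change the $\argmin$ so the chosen action satisfies $\hat q(t,A^t,S^t)\le\min_{a}q(t,a,S^t)+\Delta^t$, and then invoke the state-uniform, summable-in-$t$ bound on $\min_a q$ extracted from the proof of \cref{theo:matching}. Your emphasis on needing the uniform-over-states bound (rather than the fluid policy's regret as a black box) is exactly the ``robustness'' point the paper makes after \cref{eq:bound_q}.
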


\subsection{Algorithm and Analysis}

We start from the LP in \cref{eq:off_matching_problem}, then add a fictitious item $d+1$ which no agent wants with initial budget $B^\T_{d+1}=\T$; now all agents are matched, but, if we match an agent to $d+1$, there is no reward.
Using the Compensated Coupling, we can write two coupled optimization problems, $(P_t^\star)$ for \off and $(P_t)$ for \onl as follows.
\begin{equation}\label{eq:coupled_lp_matching}
\begin{array}{rrll}
(P_t^\star)\max&  \multicolumn{2}{l}{\sum_{i\in [d],j\in [n]}x_{ij}r_{ij}a_{ij} }\\
\text{s.t.}&  \sum_{j\in [n]}x_{ij}&\leq B_i^t &\forall i\in [d+1] \\
& \sum_{i\in [d+1]}x_{ij}&=Z_j(t) &\forall j\in [n] \\
& \vx&\geq 0.
\end{array}
\qquad
\begin{array}{rrll}
(P_t)\max  & \multicolumn{2}{l}{\sum_{i\in [d],j\in [n]}x_{ij}r_{ij}a_{ij}} \\
\text{s.t.}&  \sum_{j\in [n]}x_{ij}&\leq B_i^t &\forall i \in [d+1] \\
& \sum_{i\in [d+1]}x_{ij}&= \E[Z_j(t)] &\forall j \in [n] \\
& \vx&\geq 0.
\end{array}
\end{equation}

Recall that $B^t$ represents \onl's budget with $t$ periods to go.
We solve $(P_t)$ in \cref{eq:coupled_lp_matching} and obtain an optimizer $\Xt$.
If $\Jt=j$, let $K\in\argmax\crl{\Xt_{i,j}:i\in [d+1]}$ be the maximal entry, breaking ties arbitrarily, then match $j$ to $K$.
The resulting policy is presented in \cref{alg:fluid_matching}.
Observe that, matching an agent to  $K=d+1$ (fictitious resource) is equivalent to rejecting him.

\begin{algorithm}
\caption{Fluid Bayes Selector For Online Matching}
\label{alg:fluid_matching}
\begin{algorithmic}[1]
\Require Access to solutions $\Xt$ of $(P_t)$ in \cref{eq:coupled_lp_matching}.
\Ensure Sequence of decisions for \onl.
\State Set $B^\T$ as the given initial budget levels
\For{$t=\T,\ldots,1$}
	\State Observe arrival $\Jt=j$ and let $K\gets \argmax\crl{\Xt_{ij}:i\in [d+1]}$, breaking ties arbitrarily.
	\State Match $\Jt$ to $K$.  
	\State Update $B^{t-1}_i\gets B_i^t$ for $i\neq K$ and $B^{t-1}_K \gets B^{t}_K -1$.
\EndFor
\end{algorithmic}
\end{algorithm}

\paragraph{Disagreement Sets.}
At each time $t$, matching $\Jt=j$ to $K$ requires a compensation only if \off never matches a type $j$ to $K$, i.e., $\Xts_{K,j}<1$.
On the other hand, \cref{alg:fluid_matching} picks $K$ to be the largest component, hence we should have $\Xt_{K,j}>>1$ (precisely, $\Xt_{K,j}\geq \frac{\E[Z_j(t)]}{d+1}$).
More formally, the constraint $\sum_{i\in [d+1]}x_{ij}=\E[Z_j(t)]$ in \cref{eq:coupled_lp_matching} and the definition of $S_j$ imply $\Xt_{K,j}\geq \E[Z_j(t)]/(\abs{S_j}+1)$.
We conclude that, if matching to $K$ is not satisfying (see \cref{def:satisfying}), it must be that $\norm{\Xt-\Xts}_\infty> \E[Z_j(t)]/(\abs{S_j}+1)$. 
\cref{prop:lipschitz_matching} below characterizes exactly this deviation.

Observe that, in \cref{eq:off_matching_problem,eq:coupled_lp_matching}, the matrix $A$ appears only on the objective function; this is not the usual LP formulation for this problem, but it allows us to obtain the following result.
We remark that not only we have a Lipschitz property, but the Lipschitz constant is exactly $1$.
We present the proof of \cref{prop:lipschitz_matching} in \cref{sec:proof_prop}. 

\begin{proposition}[Lipschitz Property for Matching]\label[proposition]{prop:lipschitz_matching}
Take any $z^1,z^2\in\Rp^d$ and $b\in\Rp^d$.
If $\vx^1$ is a solution of $P[z^1,b]$, then there exists $\vx^2$ solving $P[z^2,b]$ such that $\norm{\vx^1-\vx^2}_\infty\leq \norm{z^1-z^2}_1$.
\end{proposition}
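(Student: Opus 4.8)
The plan is to exhibit, for any two right-hand sides $z^1,z^2$ and a fixed solution $\vx^1$ of $P[z^1,b]$, an explicit feasible candidate $\vx^2$ for $P[z^2,b]$ that is close to $\vx^1$ in $\infty$-norm and then argue it can be taken optimal. First I would reduce to the case where $z^2 \leq z^1$ coordinatewise: since $P[\cdot,b]$ is a transportation-type LP with the demand vector $z$ appearing only as the upper/equality bound on $\sum_i x_{ij}$, I can handle general $z^1,z^2$ by passing through the coordinatewise minimum $z^1\wedge z^2$, using the triangle inequality $\norm{z^1-z^2}_1 = \norm{z^1 - z^1\wedge z^2}_1 + \norm{z^2 - z^1\wedge z^2}_1$ (actually $\norm{z^1-z^2}_1$ equals that sum only when one dominates; more carefully I would go $z^1 \to z^1\wedge z^2 \to z^2$ and add the two bounds, noting $\norm{z^1 - z^1\wedge z^2}_1 + \norm{z^1\wedge z^2 - z^2}_1 = \norm{z^1 - z^2}_1$). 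So it suffices to prove the statement when $z^2$ is obtained from $z^1$ by decreasing a single coordinate $j_0$ by some amount $\delta \geq 0$, and separately when it is increased.

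For the decrease step: given $\vx^1$ feasible for $P[z^1,b]$ with $\sum_{i\in[d+1]} x^1_{i j_0} = z^1_{j_0}$ (recall the constraints in \cref{eq:coupled_lp_matching} are equalities, with the fictitious slack resource $d+1$), I would form $\vx^2$ by removing a total mass of $\delta$ from column $j_0$, taken greedily from the entries $x^1_{i j_0}$ with smallest reward first; this decreases each $\sum_j x_{ij}$ for the affected resources, so the resource-capacity constraints $\leq b_i$ stay satisfied, the column-sum constraint becomes $z^2_{j_0}$, and $\norm{\vx^1 - \vx^2}_\infty \leq \delta = \norm{z^1 - z^2}_1$. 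For the increase step I would add mass $\delta$ to the entry $x^1_{d+1,\, j_0}$ only: this respects $\leq b_{d+1} = B^\T_{d+1} = T$ as long as we are within budget (and since the fictitious resource never binds at optimum for reasonable parameters — or one simply works with the relaxed polytope — this is fine), leaves every real resource constraint untouched, and again gives $\norm{\vx^1-\vx^2}_\infty \leq \delta$. Chaining the two directions through $z^1\wedge z^2$ yields a feasible $\vx^2$ for $P[z^2,b]$ with $\norm{\vx^1-\vx^2}_\infty \leq \norm{z^1-z^2}_1$.

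Finally I must upgrade this feasible $\vx^2$ to an \emph{optimal} one while keeping it close to $\vx^1$ when $\vx^1$ is optimal. The cleanest route: apply the construction with the roles reversed, or better, invoke \cref{prop:lipschitz} / Theorem 2.4 of \citep{mangasarian} applied to this LP — the Lipschitz bound there says that for \emph{some} optimal solution of the perturbed LP the distance is controlled, and a careful bookkeeping of the constants for the transportation polytope gives constant exactly $1$; alternatively, since the feasible region of $P[z^2,b]$ differs from that of $P[z^1,b]$ by the above explicit perturbation, one shows the optimal value is Lipschitz and that any optimal $\vx^2$ can be connected to the constructed feasible point along an edge of the polytope without increasing the $\infty$-distance beyond $\norm{z^1-z^2}_1$. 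The main obstacle is this last step: producing a \emph{near-by optimal} solution rather than merely a near-by feasible one. I expect the key idea to be that because $A$ enters only the objective (as emphasized after \cref{eq:coupled_lp_matching}), the feasible polytope is a classical (capacitated) transportation polytope whose vertices have a simple flow structure, so moving from the feasible $\vx^2$ to an optimal vertex can be done by cycle-canceling augmentations that each only shuffle mass within column $j_0$ and the fictitious resource, preserving the $\infty$-bound; making that cycle-canceling argument precise, or else carefully extracting the constant $1$ from Mangasarian's general theorem, is where the real work lies.
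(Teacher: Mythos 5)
There is a genuine gap, and you have correctly located it yourself: your construction produces a \emph{feasible} point of $P[z^2,b]$ within $\ell_\infty$-distance $\norm{z^1-z^2}_1$ of $\vx^1$, but the proposition requires a nearby \emph{optimal} solution, and nothing in your argument delivers that. After decreasing $z_{j_0}$ by $\delta$, the mass you remove from column $j_0$ frees resource capacity that the new optimum may reallocate to entirely different columns; conversely, after increasing $z_{j_0}$, parking the extra mass on the fictitious resource keeps feasibility but the new optimum may instead displace other types from real resources. In both cases the optimal $\vx^2$ can differ from your constructed feasible point in coordinates far from column $j_0$, and the claim that a cycle-canceling path to optimality "only shuffles mass within column $j_0$ and the fictitious resource" is unsubstantiated --- an augmenting cycle in a transportation polytope can traverse many rows and columns, and you would need to prove that the cumulative change at every single entry stays below $\delta$, which is essentially the whole theorem restated. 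Your fallback of invoking Mangasarian's Theorem 2.4 and "carefully extracting the constant $1$" is the right instinct but is left entirely undone, and that extraction is where all the content lives.

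For comparison, the paper's proof contains no explicit perturbation at all. It applies (a slight strengthening of) Mangasarian's Lipschitz theorem, which already guarantees the existence of an optimal $\hat\vx$ of the perturbed LP with $\norm{\vx-\hat\vx}_\infty\leq\kappa\norm{z^1-z^2}_1$, where $\kappa$ is a supremum of $\norm{v}_\infty$ over dual-type vectors $(u,v)$ with $\norm{C'u+D'v}_1=1$ supported on linearly independent rows of the constraint matrix. Writing out $C=[I_d|I_d|\cdots|I_d]$ and the block-of-ones matrix $D$ for \cref{eq:off_matching_problem}, the paper shows $\abs{v_j}\leq 1$ by a two-case analysis (either some $u_i=0$, or all $u_i\neq 0$ forcing some $v_j=0$ by the independence restriction, followed by a triangle inequality), which gives $\kappa\leq 1$. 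If you want to salvage your combinatorial route, the missing lemma you must prove is: for the capacitated transportation polytope of \cref{eq:off_matching_problem}, any optimal solution of $P[z^2,b]$ can be reached from some optimal solution of $P[z^1,b]$ by a flow whose per-coordinate displacement is at most $\norm{z^1-z^2}_1$; as it stands, that lemma is asserted, not proved.
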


From here, the proof of \cref{theo:matching} is applying the Compensated Coupling (\cref{lem:coupling}) and \cref{cor:bayes} in the same way as we did in \cref{sec:regret_rm}, hence we omit it.

\subsection{Online Stochastic Matching}
\label{sec:stoch_matching}

A classical problem that fits naturally into the above framework is that of online bipartite matching problem with stochastic inputs~\cite{manshadi2012online}.
The reader unfamiliar with the problem can find the details of the setup in \cref{sec:details_matching}.
For this setting, the bound obtained via compensated coupling surprisingly holds with equality:
\begin{lemma}\label[lemma]{lemm:matching_equal}
For the stochastic online bipartite matching, given an online policy, if $U^t$ denotes the node matched at time $t$ by \onl and $S^t$ the available nodes, then
\[
\voff-\von = \sum_{t\in[T]}\Ins{Q(t,U^t,S^t)}.
\]
\end{lemma}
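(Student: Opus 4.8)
The plan is to specialize the Compensated Coupling identity (\cref{lem:coupling}) to the stochastic online bipartite matching problem and show that the marginal compensation $\Rl$ is \emph{always exactly one} on the disagreement event, so that the weighted sum of Bernoullis collapses to a plain count. Recall that in this problem every incoming agent is matched to something (using the fictitious resource $d+1$ to encode ``reject''), and all rewards are $0/1$-valued in the unweighted stochastic matching setting: matching agent $j$ to an available neighbor yields reward $1$, and matching to $d+1$ (or to a depleted resource) yields reward $0$. The claim is then that whenever \off is not satisfied with \onl's action, forcing \off to follow it costs precisely one unit of reward, never more and never less.

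**Key steps.** First I would recall from \cref{lem:coupling} that $\voff - \von = \sum_{t\in[T]} \Rl(t,\pon(t,S^t,\Jt),S^t)[\omega]\cdot \Ins{Q(t,S^t)}[\omega]$, so it suffices to prove that $\Rl(t, a, S^t)[\omega] = 1$ whenever $\omega \in Q(t,S^t)$, where $a = \pon(t,S^t,\Jt)$ is \onl's action and $U^t$ is the resource it selects. Second, I would establish the upper bound $\Rl \le 1$: if \onl matches $\Jt=j$ to resource $U^t$, then \off, starting from the same available set $S^t$, can always ``imitate'' this by also matching $j$ to $U^t$; the only loss is that $U^t$ is no longer available to \off in the future, which in an unweighted matching can cost at most one unit of total reward (a standard augmenting-path / exchange argument: removing one node from one side of a bipartite graph decreases the maximum matching by at most one). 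Third, I would establish the lower bound $\Rl \ge 1$ on the disagreement event: on $Q(t,S^t)$, action $a$ is by definition \emph{not} satisfying, i.e. $\voff(t,S^t)[\omega] > \Re(a,S^t,\Jt) + \voff(t-1,\Tr(a,S^t,\Jt))[\omega]$; since all quantities here are integers (value functions of an unweighted matching LP with integral data, which has integral optima), this strict inequality forces a gap of at least one, so $\Rl(t,a,S^t)[\omega] = \voff(t,S^t)[\omega] - [\Re(a,S^t,\Jt) + \voff(t-1,\Tr(a,S^t,\Jt))[\omega]] \ge 1$. Combining, $\Rl = 1$ exactly on $Q(t,U^t,S^t)$, and substituting into the identity of \cref{lem:coupling} gives $\voff - \von = \sum_{t\in[T]} \Ins{Q(t,U^t,S^t)}$.

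**Main obstacle.** The delicate point is the upper bound $\Rl \le 1$ — more precisely, arguing that making \off match $j$ to $U^t$ costs it at most one unit of future reward, \emph{uniformly} over the sample path. This needs the matching-polytope structure: one should invoke that for bipartite graphs, deleting a single vertex changes the size of the maximum matching by at most one, and that the offline LP relaxation $(P[Z,B])$ in \cref{eq:off_matching_problem} has integral optimal solutions (totally unimodular constraint matrix), so value functions take integer values and the LP value coincides with the integer matching value. I would also need to handle the two sub-cases carefully: (i) $U^t \in S_j$ is a genuine resource still available to \off, where the exchange argument applies directly; and (ii) $U^t = d+1$ or $U^t$ is otherwise a resource \onl holds but which is irrelevant to the remaining offline value, in which case matching $j$ there costs \off nothing and $Q(t,U^t,S^t)$ simply doesn't occur. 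Once integrality is in hand, the ``exactly one'' phenomenon is forced from both sides and the lemma follows immediately from \cref{lem:coupling}.
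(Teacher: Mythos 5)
Your proof is correct and follows essentially the same route as the paper's: identify $\voff(t,s)$ with the maximum-matching size $M(G_t)$ on the remaining static nodes and future arrivals, observe that $\Rl\le 1$ because forcing \off to imitate \onl's match deletes at most the two matched endpoints from a maximum matching while gaining the reward of the new edge, and that integrality plus the strict Bellman inequality forces $\Rl\ge 1$ on the disagreement event, so the weighted sum in \cref{lem:coupling} collapses to a plain count of disagreements. The paper leaves most of this implicit (``a straightforward application of the compensated coupling''); you supply the same argument in full, including the handling of the reject/fictitious action.
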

Based on this, it is tempting to conjecture that the Bayes Selector does in fact lead to an optimal policy for this setting. 
This however is not the case, although showing this is surprisingly subtle; in Appendix~\ref{sec:details_matching}, we discuss this in more detail. Moreover, it is known that this problem cannot admit an expected regret that has better than linear scaling with $T$ (in particular, \citep{manshadi2012online} proves a constant upper bound on the competitive ratio for this setting).
That said, the strength of the above bound suggests that the Bayes selector may have strong approximation guarantees -- showing this remains an open problem.

\section{Regret Guarantees for Online Allocation}
\label{sec:general_allocation}

We now give the algorithm and analysis for the general Online Allocation problem defined in \cref{sec:def_allocation}.
As before, let us introduce a fictitious resource $i=d+1$ with initial capacity $B_{d+1}=T$, zero rewards ($r_{\crl{d+1}j}=0$ for all $j\in [n]$) and such that $\crl{d+1}\in S_j$ for all $j\in [n]$.
Now we can assume w.l.o.g.\ that each agent gets assigned a bundle.
Finally, for a bundle $s$, we denote $a_{is}\in\N$ the number of times the resource $i$ appears in $s$ (recall that bundles are multisets).

Given resource availability $B\in\N^{d+1}$ and total arrivals $Z\in\N^n$, we can formulate the coupled problems for \off and \onl as follows, where the variable $x_{sj}$ denotes the number of times a bundle $s\in S_j$ is allocated to a type $j$. 
\begin{equation}\label{eq:coupled_lp_bundle}
\begin{array}{rrll}
(P_t^\star)\max&  \multicolumn{2}{l}{\sum_{j\in [n],s\in S_j}x_{sj}r_{sj} }\\
\text{s.t.}&  \sum_{j\in [n],s\in S_{j}}a_{is}x_{sj}&\leq B_i^t &\forall i\in [d+1] \\
& \sum_{s\in S_j}x_{sj}&=Z_j(t) &\forall j\in [n] \\
& \vx&\geq 0.
\end{array}
\qquad
\begin{array}{rrll}
(P_t)\max&  \multicolumn{2}{l}{\sum_{j\in [n],s\in S_j}x_{sj}r_{sj} }\\
\text{s.t.}&  \sum_{j\in [n],s\in S_{j}}a_{is}x_{sj}&\leq B_i^t &\forall i\in [d+1] \\
& \sum_{s\in S_j}x_{sj}&=\E[Z_j(t)] &\forall j\in [n] \\
& \vx&\geq 0.
\end{array}
\end{equation}

We assume that the  process $Z(t)$ satisfies the All-Time Deviation bound (see \cref{def:all_time}) w.r.t.\ some norm $\mu$ and parameters $\kappa_j=(d+1)\kappa$, where $\kappa=\kappa_\mu(A)$ depends only on $A$ and $\mu$.
This condition can be restated as follows.
For every $j\in [n]$, there are constants $c_j\geq 0$ and naturals $\tau_j$ such that
\begin{equation}\label{eq:tail_zgap}
\Pr\brk*{\norm{Z(t)-\E(Z(t))}_\mu \geq \frac{\E[Z_j(t)]}{\kappa(\abs{S_j}+1)}} \leq \frac{c_j}{t^2} \quad \forall t >\tau_j.
\end{equation}

We present the resulting policy in \cref{alg:fluid_bundle} with its guarantee in \cref{theo:bundle}.
We remark that the constant $\kappa$ depends only on the constraint matrix defining the LP in \cref{eq:coupled_lp_bundle}, i.e., it does depend on the choices of bundles $S_j$, but it is independent of $T$ and $B$.

\begin{algorithm}
\caption{Fluid Bayes Selector For Online Allocation}
\label{alg:fluid_bundle}
\begin{algorithmic}[1]
\Require Access to solutions $\Xt$ of $(P_t)$ in \cref{eq:coupled_lp_bundle}.
\Ensure Sequence of decisions for \onl.
\State Set $B^\T$ as the given initial budget levels
\For{$t=\T,\ldots,1$}
	\State Observe arrival $\Jt=j$ and let $K\gets \argmax\crl{\Xt_{sj}:s\in S_j}$, breaking ties arbitrarily.
	\State If it is not feasible to assign bundle $K$, then reject. Otherwise assign $K$ to $\Jt$.  
	\State Update $B^{t-1}_i\gets B_i^t$ for $i\not\in K$ and $B^{t-1}_i \gets B^{t}_i -a_{iK}$ for $i\in K$.
\EndFor
\end{algorithmic}
\end{algorithm}

\begin{theorem}\label{theo:bundle}
For the Online Allocation problem, there exists a constant $\kappa$ that depends on $(S_j:j\in [n])$ only such that, if the arrival process satisfies the conditions in \cref{eq:tail_zgap}, then the expected regret of the Fluid Bayes Selector (\cref{alg:fluid_bundle}) is at most $\rmax\sum_{j\in [n]}p_j(c_j+\tau_j)$, where $p_j$ is an upper bound on $\Pr[\Jt=j]$.
\end{theorem}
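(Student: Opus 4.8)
The plan is to reuse the template of the Online Matching and Online Packing analyses. By the Compensated Coupling (\cref{lem:coupling}) it suffices to control two quantities, since $\E[\reg]\le \max_{a\in\Ac,j\in\J}\rl(a,j)\cdot\sum_{t\in[T]}\E[\Pr[Q(t,S^t)\mid S^t]]$: first, a uniform bound on the marginal compensation $\rl(a,j)$, which will supply the $\rmax$ factor; and second, a summable-in-$t$ bound on the disagreement probability of \cref{alg:fluid_bundle}'s action, which will supply $\sum_{j}p_j(c_j+\tau_j)$. Along the way I would also check, exactly as for \cref{eq:fluid_rule}, that \cref{alg:fluid_bundle} is an instance of the Bayes Selector (\cref{alg:bayes_selector}) for suitable over-estimates $\hat q$, so nothing is lost relative to \cref{cor:bayes}.

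For the marginal compensation, the key structural fact is that we added the fictitious resource $d+1$ with $\crl*{d+1}\in S_j$ for every $j$, so the constraint $\sum_{s\in S_j}x_{sj}=Z_j(t)$ forces \off to assign \emph{every} agent some bundle, with rejection corresponding to $\crl*{d+1}$. Hence a compensation is needed only when \onl assigns bundle $K$ to a type-$j$ agent but \off's ex-post optimum assigns no type-$j$ agent to $K$, i.e.\ $\Xts_{K,j}<1$ (by \cref{def:satisfying}). In that case one reroutes \off's solution: the current type-$j$ agent, assigned some bundle $K^\star$ by \off, is reassigned to $K$; this consumes $a_{\cdot K}$ and releases $a_{\cdot K^\star}$, restoring feasibility costs at most $\rmax$ per resource unit that must be freed, with at most $\norm{a_{\cdot K}}_1$ such units, while the reward change at the current agent is $r_{K^\star j}-r_{Kj}\le\rmax$. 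A careful version of this rerouting (as in the matching case, cf.\ \cref{lemm:matching_equal}) gives $\rl(a,j)\le\rmax$.

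Next I would bound the disagreement probabilities. \cref{alg:fluid_bundle} picks $K\in\argmax_{s\in S_j}\Xt_{sj}$, so the equality constraint of $(P_t)$ in \cref{eq:coupled_lp_bundle} forces $\Xt_{K,j}\ge \E[Z_j(t)]/(\abs{S_j}+1)$. A disagreement on a type-$j$ arrival therefore requires either (i) $\Xts_{K,j}<1$ while $\Xt_{K,j}\ge \E[Z_j(t)]/(\abs{S_j}+1)$, which forces $\norm{\Xt-\Xts}_\infty> \E[Z_j(t)]/(\abs{S_j}+1)-1$; or (ii) bundle $K$ is infeasible for \onl's current budget so we fall back to rejecting — but $\Xt$ is feasible for $(P_t)$, so this can only happen when $\Xt_{K,j}<1$, hence when $\E[Z_j(t)]<\abs{S_j}+1$, which occurs only for $t$ below a fixed threshold that we absorb into $\tau_j$. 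In case (i), $(P_t)$ and $(P_t^\star)$ differ only in the right-hand side ($\E[Z(t)]$ versus $Z(t)$), so the LP Lipschitz property (\cref{prop:lipschitz}, in the form applicable to the constraint matrix of \cref{eq:coupled_lp_bundle}) lets us choose solutions with $\norm{\Xt-\Xts}_\infty\le\kappa\norm{Z(t)-\E[Z(t)]}_\mu$ for $\kappa=\kappa_\mu(A)$. Hence a type-$j$ disagreement at sufficiently large $t$ implies $\norm{Z(t)-\E[Z(t)]}_\mu> \E[Z_j(t)]/(\kappa(\abs{S_j}+1))$, which by \cref{eq:tail_zgap} has probability at most $c_j/t^2$.

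Conditioning on the arriving type and summing, $\sum_{t\in[T]}\E[\Pr[Q(t,S^t)\mid S^t]]\le\sum_{j\in[n]}p_j\prn*{\tau_j+\sum_{t>\tau_j}c_j/t^2}\le\sum_{j\in[n]}p_j(\tau_j+c_j)$, using $\sum_{t>\tau_j}t^{-2}\le 1$; combining this with $\rl(a,j)\le\rmax$ in \cref{lem:coupling} yields $\E[\reg]\le\rmax\sum_{j\in[n]}p_j(c_j+\tau_j)$. I expect the main obstacle to be the marginal-compensation step: unlike the single-resource matching case, a bundle may consume several units, so the rerouting argument must be set up carefully to keep the compensation at order $\rmax$ (and to pin down the right Lipschitz constant $\kappa$ for the group-equality LP \cref{eq:coupled_lp_bundle}, rather than for the box-constrained LP in the statement of \cref{prop:lipschitz}).
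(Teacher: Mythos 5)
Your proposal matches the paper's own argument for \cref{theo:bundle}, which is only sketched there: the argmax rule and the equality constraint force $\Xt_{Kj}\geq \E[Z_j(t)]/(\abs{S_j}+1)$, a disagreement forces $\Xts_{Kj}<1$, the LP Lipschitz property (\cref{prop:lipschitz}) converts the gap into a large deviation of $Z(t)$ from its mean, and \cref{eq:tail_zgap} together with the Compensated Coupling yields $\sum_{j}p_j(c_j+\tau_j)$. The one step you rightly flag as delicate---that the marginal compensation is $\rmax$ rather than scaling with the number of resource units in the assigned bundle---is not resolved by the paper either (its own packing example gives $\rl(a,j)\leq d\rmax$ and \cref{theo:reg_general} carries a $d\rmax$ prefactor), so your explicit accounting of the rerouting cost is, if anything, more careful than the source.
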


The proof of \cref{theo:bundle} is analogous to that of \cref{theo:reg_general}, hence we omit it and provide here only the key steps.
Recall that, for request $j$, since we include the fictitious item, there are are $\abs{S_j}+1$ possible bundles.
Crucially, incorrect allocation of $s$ to $j$  necessitates $\Xt_{sj}\geq \E[Z_j(t)]/(\abs{S_j}+1)$ (because \cref{alg:fluid_bundle} takes the maximum entry)  and $\Xts_j< 1$ (\off never allocates $s$ to $j$). 
By the Lipschitz property of LPs (see \cref{prop:lipschitz}), this event requires a large deviation of $Z(t)$ w.r.t. its mean, which has low probability.
More formally, the disagreement sets (\cref{def:disagreement}) are $Q(t,b)=\crl{\omega\in\Omega: \Xt_{sj}\geq \E[Z_j(t)]/(\abs{S_j}+1) \text{ and } \Xts_j< 1}$.
By the Lipschitz property, $Q(t,b)\subseteq \crl{\omega\in\Omega: \norm{Z(t)-\E(Z(t))}_\mu \geq \frac{\E[Z_j(t)]}{\kappa(\abs{S_j}+1)}}$.
The probability of this last event is bounded by \cref{eq:tail_zgap}, hence the Compensated Coupling concludes the proof.

\section{Numerical Experiments}
\label{sec:numerics}

The theoretical results we have presented, together with known lower bounds for previous algorithms, show that our approach outperforms existing heuristics for Online Packing and Online Matching problems.
We now re-emphasize these results via simulation with synthetic data, which demonstrates both the sub-optimality of existing heuristics (in terms of expected regret which scales with $T$), as well as the fact that the Bayes selector has constant expected regret. 

We run experiments for both Online Packing and Online Matching with multinomial arrivals.
For each problem we consider two instances, i.e., two sets of parameters $(r,A,p)$, then we scale each instance to obtain a family of ever larger systems.
For each scaling we run 100 simulations.
In conclusion, we run four sets of parameters (two for packing and two for matching), each scaled to generate many systems.
The code for all the algorithms can be found in \url{https://github.com/albvera/bayes_selector}.

\subsection{Online Packing}

\begin{table}[ht]
\centering
\begin{tabular}{lllllll}
                                & \multicolumn{6}{c}{Type $j$}      \\
\multicolumn{1}{l|}{}           & 1   & 2   & 3   & 4   & 5   & 6   \\ \hline
\multicolumn{1}{l|}{Resource $i=1$} & 1   & 1   & 0   & 0   & 1   & 1   \\
\multicolumn{1}{l|}{Resource $i=2$} & 0   & 0   & 1   & 1   & 1   & 1   \\
$p_j$                           & 0.2 & 0.2 & 0.2 & 0.2 & 0.1 & 0.1 \\
$r_j$                           & 10  & 6   & 10  & 5   & 9   & 8  
\end{tabular}
\caption{Parameters for the first Online Packing instance.
Coordinates $(i,j)$ represent the consumption $A_{ij}$. 
}\label{tab:packing_one}
\end{table}

We compare the Bayes Selector against three policies:
(i) Static Randomized (SR) is the first known policy with regret guarantees, it is based on solving the fluid LP once and using the solution as a randomized acceptance rule~\citep{talluri2006theory}.
(ii) Re-solve and Randomize is based on re-solving the fluid LP at each time and using the solution as a randomized acceptance rule \citep{jasin2012}.
(iii) Infrequent Re-solve with Thresholding (IRT) is based on re-solving the fluid LP at carefully chosen times, specifically at times $\crl{T^{(5/6)^u}:u=0,1,..., \log\log(T)/\log(6/5)}$, then either randomize or threshold depending on the value of the solution \citep{wang_resolve}.

Our first instance has $d=2$ resources and $n=6$ agent types.
Types $j\in\crl{1,2}$ require one unit of resource $i=1$, types $j\in\crl{3,4}$ require one unit of $i=2$, and types $j\in\crl{5,6}$ require one unit of each resource.
All the parameters are presented in \cref{tab:packing_one}. 
We consider a base system with capacities $B_1=B_2=40$ and horizon $T=200$.
The base system is chosen such that the problem is near dual-degenerate (which is the regime where heuristics based on the fluid benchmark are known to have poor performance; see \cref{prop:bad_fluid}).
Finally, for a scaling $k\in\N$, the $k$-th system has capacities $kB$ and horizon $(k+k^{0.7})T$.
We remark that traditionally the horizon is scaled as $kT$, but we chose this slightly different scaling to emphasize that our result does not depend on the specific way the system is scaled.

The results for the first instance are summarized in \cref{fig:packing_avg}, where we also present a log-log plot which allows better to appreciate how the regret grows. 
Static Randomized has the worst performance in our study; indeed, we do not include it in the plot since it is orders of magnitude higher.
We note that not only the Bayes Selector outperforms previous methods, but the regret is very small (both in average and sample-path wise), specially in comparison with the overall reward which grows linearly with $k$, i.e. $\voff = \Omega(k)$ (in expectation and w.h.p.).

\begin{figure}[ht]
    \centering
    \includegraphics[scale=0.68]{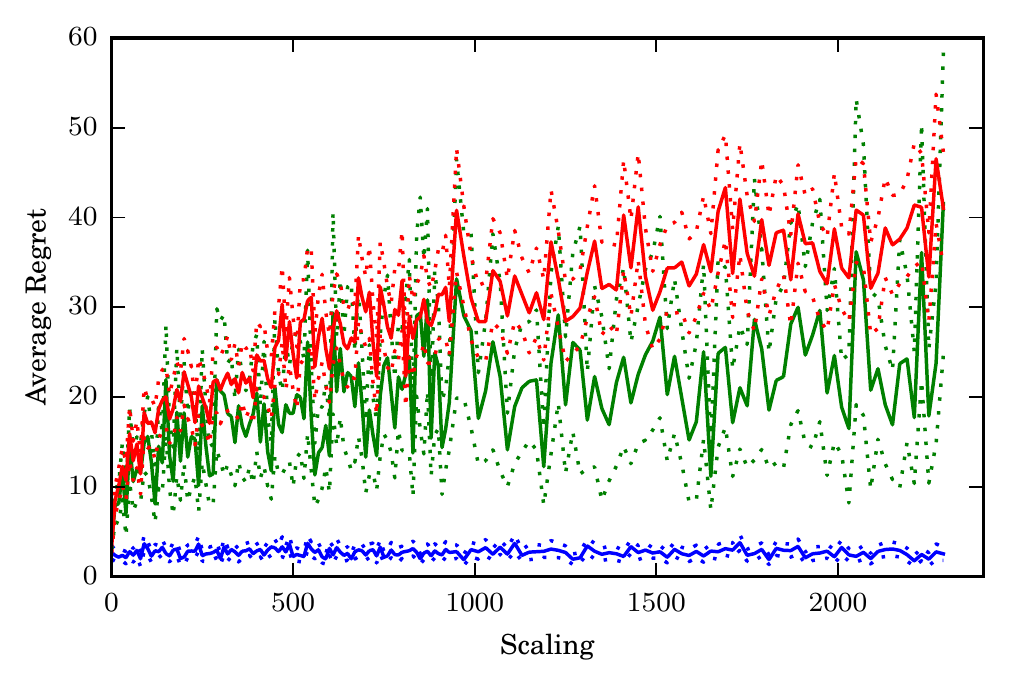}%
    \includegraphics[scale=0.68]{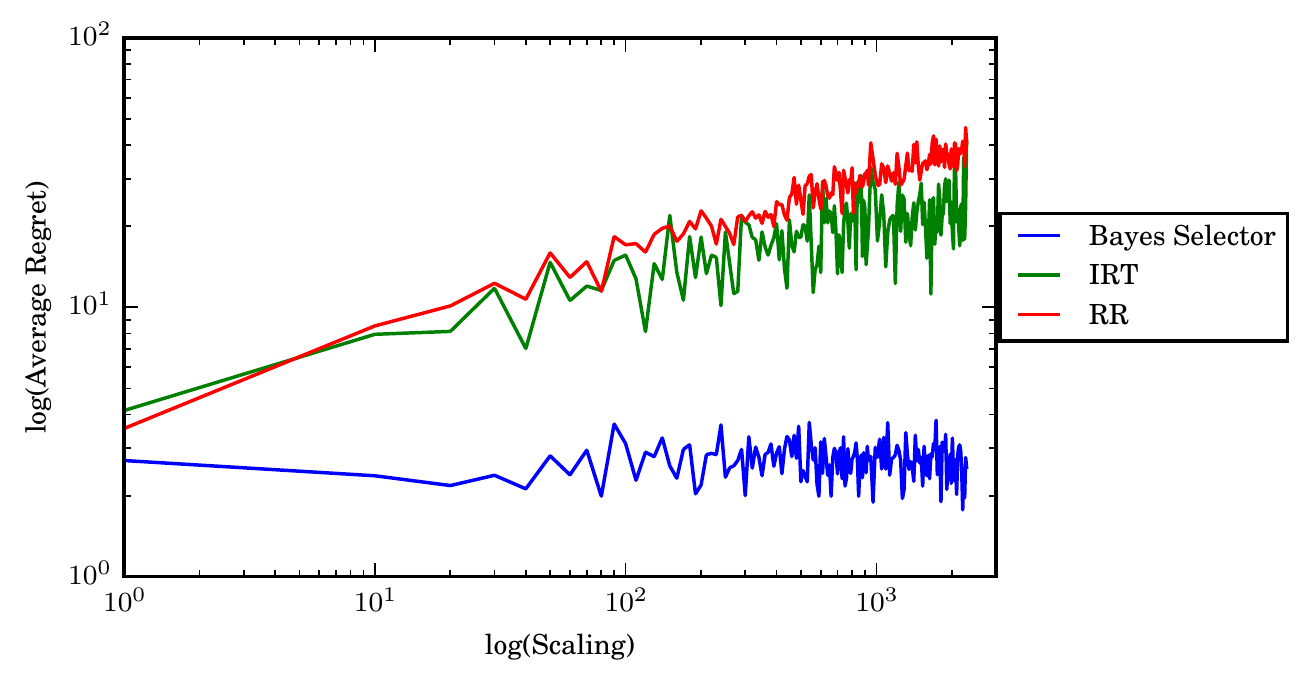}
    \caption{Average regret of different policies for Online Packing in the first instance.
    We present a plot on the left and a log-log plot on the right.
    We run the Bayes Selector, Infrequent Re-solve with Thresholding (IRT) \citep{wang_resolve}, Re-solve and Randomize (RR) \citep{jasin2012}, and Static Randomized (SR) \citep{talluri2006theory} (this last one is not reported because its high regret distorts the figures).
    The plot shows the regret incurred by the policies versus the offline optimum, for different scalings.
    Dotted lines represent 90\% confidence intervals.
    }
    \label{fig:packing_avg}
\end{figure}

The second instance has $n=15$ agent types and $d=20$ resources, the specific parameters are presented in \cref{tab:packing_two} in \cref{appen:numerics} and were generated randomly.
We take a base system with horizon $T=50$ and capacities $B_i=10$ for all $i\in[20]$, then the $k$-th system has horizon $kT$ and capacities $kB$.
The performance of different algorithms is presented in \cref{fig:packing_two}.
We notice that this instance is not degenerate and we are scaling linearly, hence all the algorithms except Static Randomize (which we again omit from the plots) are known to achieve constant regret.
Nevertheless, we observe that the Bayes Selector has the best performance by a large margin.

\begin{figure}[ht]
    \centering
    \includegraphics[scale=0.9]{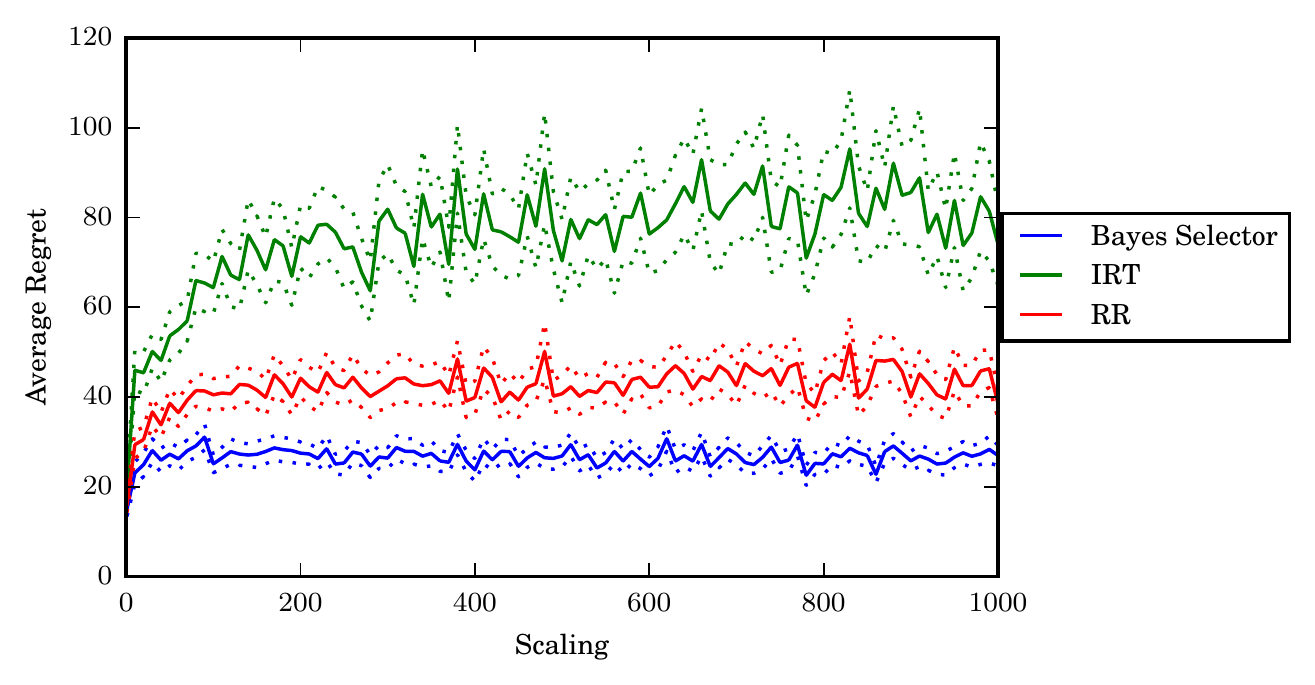}
    \caption{Average regret of different policies for Online Packing in the second instance.
    We run the Bayes Selector, Infrequent Re-solve with Thresholding (IRT) \citep{wang_resolve}, Re-solve and Randomize (RR) \citep{jasin2012}, and Static Randomized (SR) \citep{talluri2006theory} (this last one is not reported because its high regret distorts the figures).
    The plot shows the regret incurred by the policies versus the offline optimum, for different scalings.
    Dotted lines represent 90\% confidence intervals.
    }
    \label{fig:packing_two}
\end{figure}

\subsection{Online Matching}

As we mentioned in \cref{sec:matching_type}, our problem corresponds to stochastic matching with edge weights.
There has been previous work studying constant factor approximations for worst-case distributions.
In particular, the state of the art is a $0.705$ competitive ratio \citep{matching_new}, while a previous algorithm achieved a $0.667$ competitive ratio \citep{matching_old}.
Both algorithms are impractical since they require a sampling procedure over $\text{poly}(T\cdot\max_{i\in[d]}B_i)$ many matchings.
To the best of our knowledge, the best guarantee of a practical algorithm is a $1-1/e\approx 0.63$ competitive ratio and is achieved by the base algorithm in \citep{matching_old} (that when built upon achieves the $0.667$ guarantee).
We therefore benchmark against this algorithm, which we call ``Competitive".

Competitive is based on solving a big LP once (it has $\Omega(T\cdot\max_{i\in [d]}B_i)$ variables) and using the solution as a probabilistic allocation rule.
We also compare against a contemporaneous algorithm, called Marginal Allocation, that is based on bid-prices \citep{wang_scheduling}.
Marginal Allocation uses approximate dynamic programming to obtain the marginal benefit of a matching, then uses this marginal value as a bid price so that, if the reward exceeds it, then we match the request.
We give further details for both Marginal Allocation and Competitive in \cref{appen:numerics}.

The first instance we consider has $d=2$ resources and $n=6$ agent types.
The specific parameters are presented in \cref{tab:matching_one}, where reward $r_{ij}=0$ implies that type $j$ cannot be matched to that resource $i$, i.e., $A_{ij}=0$.
We consider a base system with horizon $T=20$ and capacities $B=(4,5)'$ and then scale it linearly so that the $k$-th system has horizon $kT$ and capacities $kB$. 
Our second instance has $d=6$ resources and $n=10$ agent types, the specific parameters are presented in \cref{tab:matching_two} in \cref{appen:numerics}.
We consider a base system with horizon $T=200$ and capacities $B=(40,50,40,30,20,40)'$ and then scale it linearly so that the $k$-th system has horizon $kT$ and capacities $kB$.

\begin{figure}[ht]
    \centering
    \includegraphics[scale=0.68]{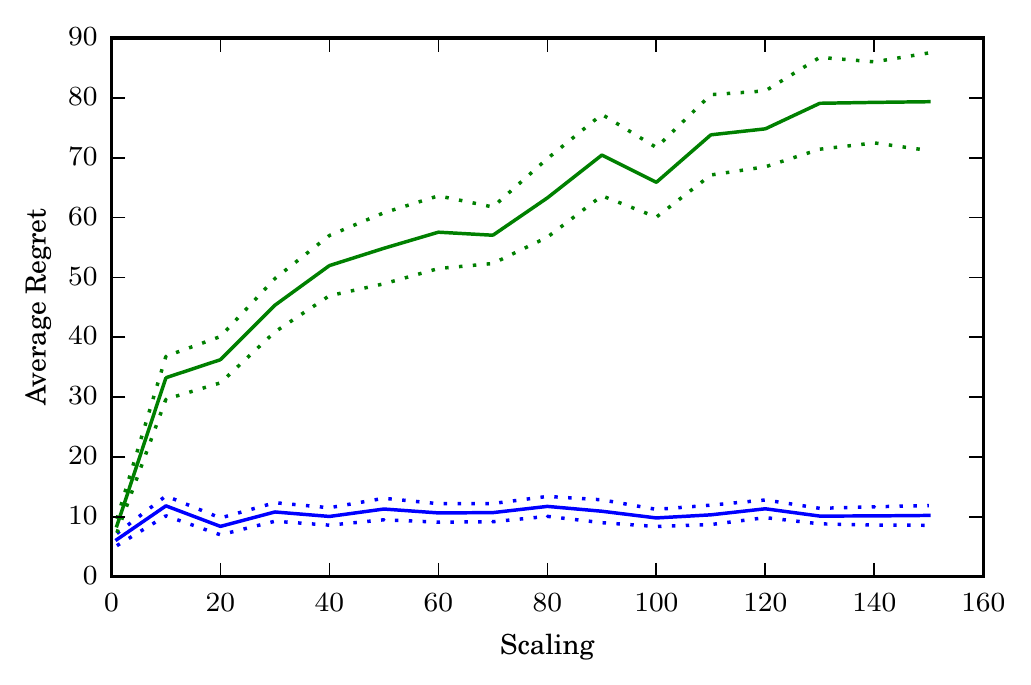}%
    \includegraphics[scale=0.68]{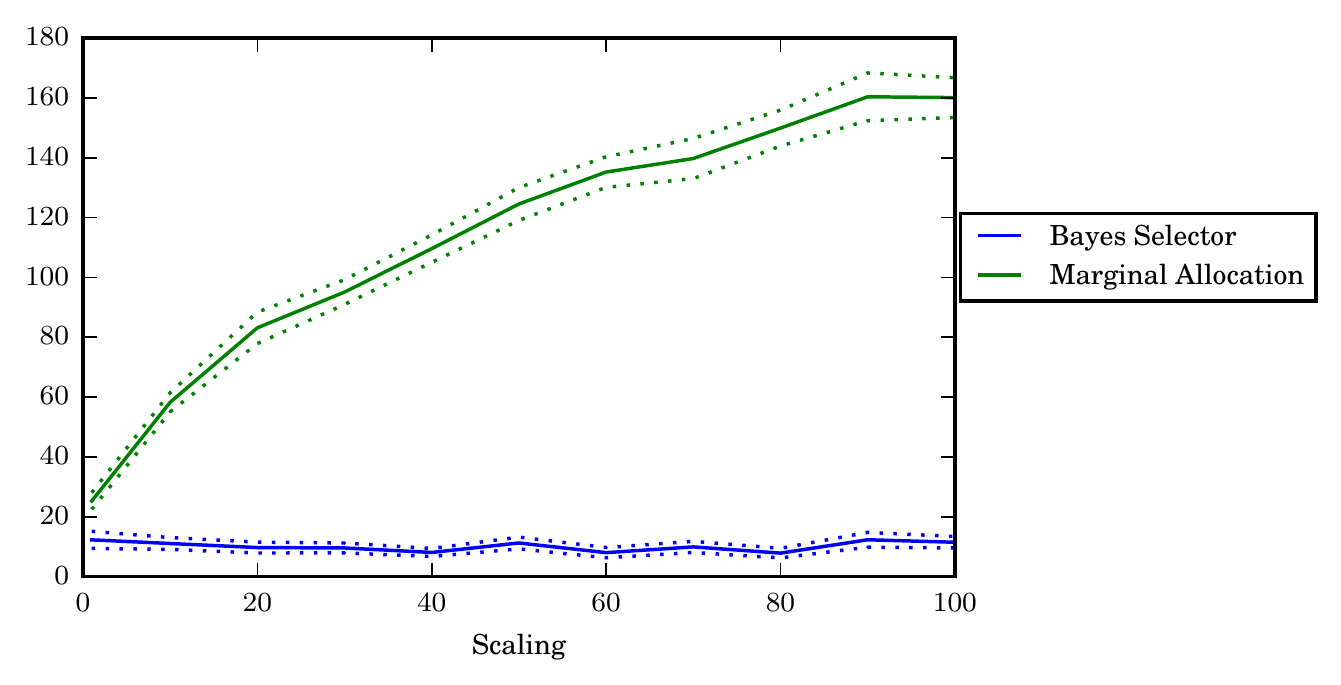}
    \caption{Average regret of different policies for Online Matching.
    First instance on the left and second on the right.
    We run the Bayes Selector, Marginal Allocation \citep{wang_scheduling}, and Competitive \citep{matching_old} (this last one is not reported because its high regret distorts the figures).
    The plot shows the regret incurred by the policies versus the offline optimum, for different scalings.
    }
    \label{fig:matching}
\end{figure}

\begin{table}[ht]
\centering
\begin{tabular}{lllllll}
                               & \multicolumn{6}{c}{Type $j$}      \\
\multicolumn{1}{l|}{}          & 1   & 2   & 3   & 4   & 5   & 6   \\ \hline
\multicolumn{1}{l|}{Resource 1}  & 10   & 6   &  0  & 0   & 9 & 8 \\
\multicolumn{1}{l|}{Resource 2} & 0  & 0   & 5  & 10   & 20   & 20   \\
\multicolumn{1}{l|}{$p_j$}     & 0.2 & 0.2 & 0.2 & 0.2 & 0.1 & 0.1
\end{tabular}
\caption{Parameters used for the first Online Matching instance.
Coordinates $(i,j)$ represent the reward $r_{ij}$ and $r_{ij}=0$ implies that it is not possible to match $i$ to $j$.}\label{tab:matching_one}
\end{table}

The results are presented in \cref{fig:matching}.
We do not include the regret of Competitive, since it is so high that it distorts the plots (it starts at 80 times the regret of the other algorithms and then grows linearly with $k$).
We can confirm that the Bayes Selector has constant regret and, additionally, offers the best performance.
Marginal Allocation offers a much better performance than Competitive, but its regret still grows and seems to scale as $\Omega(\sqrt{T})$.

\section{Conclusions}

We reiterate that our contributions in this paper are both to develop new online policies that achieve constant regret for a large class of online resource allocation problems, and also, a new technique for analyzing online decision-making heuristics.

Our work herein has developed a new technical tool---the Compensated Coupling---for analyzing online decision making policies with respect to offline benchmarks.
In short, the main insight is that, through the use of compensations, we can couple \off's state to that of \onl on every sample path.
This simplifies the analysis of online policies since, in contrast to existing approaches, we do not need to track the complicated offline process.

Next, we presented a general class of problems, which we referred to as Online Allocation, wherein different agents request different bundles of resources.
This problem captures, among others, Online Packing and Online Matching.
For all of these problems we present a tractable policy, the Bayes Selector, based on re-solving an LP, that achieves constant regret.

Our analysis is based on the Compensated Coupling and, thanks to its versatility, we can accommodate a large class of arrival processes including: correlated processes, heavy tailed, and the classical Poisson and Multinomial (i.i.d.).

Although we instantiate the Bayes Selector for Online Allocation, we defined it for general MDPs; we hope this policy is useful for other types of problems too.
We remark two properties of the Bayes Selector: (i) it works on interpretable quantities, namely the estimation of disagreement probabilities $\hat q$, and (ii) it is amenable to simulation, since $\hat q$ can be estimated by running offline trajectories.
We therefore think that a promising avenue for further research is to apply this policy to other problems using modern estimation techniques.

The assumption of finite types of agents is well founded in revenue management problems, but there are settings where the number of types could be very large or even continuous.
Based on reported numerical results~\citep{itai_secretary}, the Bayes Selector appears to have good performance even in this setting.
An interesting problem is to obtain parametric guarantees (not worst case) in the case where the number of types is very large or continuous.

\section*{Acknowledgement}
We gratefully acknowledge support from the ARL under grant W911NF-17-1-0094, and the NSF under grant DMS-1839346 and ECCS-1847393. A preliminary version of this work was presented at SIGMETRICS 2019 and accompanied by a one-page extended abstract \citep{vera2019bayesian}. We thank the anonymous reviewers for their helpful feedback.

% References here (outcomment the appropriate case)

% CASE 1: BiBTeX used to constantly update the references
%   (while the paper is being written).
\bibliographystyle{plain} % outcomment this and next line in Case 1
%\bibliography{<your bib file(s)>} % if more than one, comma separated

% CASE 2: BiBTeX used to generate mypaper.bbl (to be further fine tuned)
%\input{mypaper.bbl}
% outcomment this line in Case 2

%If you don't use BiBTex, you can manually itemize references as shown below.

\bibliography{biblio}

% Appendix here
% Options are (1) APPENDIX (with or without general title) or
%             (2) APPENDICES (if it has more than one unrelated sections)
% Outcomment the appropriate case if necessary

%\begin{APPENDIX}{Additional Proofs}
%\input{sec_appen.tex}%\end{APPENDIX}
%
%   or
%
\begin{APPENDICES}
\section{The Fluid Benchmark}\label[appendix]{sec:proofs}
\begin{proofof}{\cref{prop:bad_fluid}}
To build intuition, we start with a description of dual degeneracy  for the online knapsack problem with budget $B\leq\T$.
We assume w.l.o.g.\ $r_1\geq r_2\geq \ldots \geq r_n$ and denote $Z=Z(T)$.
The primal and dual are given by
\begin{align*}
\begin{array}{rrl}
(P[Z])\, \max  & r' x& \\
\text{s.t.}&  \sum_{j\in [n]}x_j&\leq B \\
& x&\leq Z \\
& x&\geq 0,
\end{array}
\qquad \qquad
\begin{array}{rrl}
(D[Z])\, \min  & \alpha B + \beta' Z& \\
\text{s.t.}&  \alpha + \beta_j &\geq r_j \, \forall j  \\
& \alpha &\geq 0\\
& \beta &\geq 0.
\end{array}
\end{align*}

Let us denote $\mu\defeq \E[Z]$.
If the fluid $(P[\mu])$ is degenerate, then we have $n+1$ active constraints.
It is straightforward to conclude that there must be an index $\jstar$ such that $\sum_{j\leq \jstar}\E[Z_j] = B$.
The fluid solution is thus $x_j= \E[Z_j]$ for $j\leq \jstar$ and $x_j=0$ for $j>\jstar$.
We can construct two dual solutions as follows.
Let $\alpha^1=r_\jstar$ and $\alpha^2 = r_{\jstar+1}$, these correspond to the shadow prices for alternative budgets $B-\varepsilon$ and $B+\varepsilon$ respectively.
The corresponding variables $\beta^1,\beta^2$ are given by $\beta^k_j =(r_j-\alpha^k)_+$ for $k=1,2$.
Intuitively, the fluid is indifferent between these two dual bases, but, given a realization of $Z$, \off will prefer one over the other; this causes a discrepancy between the expectations.

Now we turn to the case of any packing problem, the assumption is that we are given two optimal dual solutions $(\alpha^k,\beta^k)$, with $\beta^1\neq\beta^2$.
The dual is a minimization problem and $(\alpha^k,\beta^k)$ are always dual feasible, thus defining $\beta\defeq \beta^1-\beta^2$ and $\alpha\defeq \alpha^1-\alpha^2$,
\begin{align*}
v(D[Z]) \leq \min_{k=1,2}\crl{B'\alpha^k+Z'\beta^k}
= (B'\alpha^1+Z'\beta^1)\In{ B'\alpha +Z'\beta <0} + (B'\alpha^2+Z'\beta^2)\In{B'\alpha +Z'\beta \geq 0}.
\end{align*}

The rest of the proof is reasoning that interchanging expectations $\E[\min_{k=1,2}\crl{B'\alpha^k+Z'\beta^k}]$ for $\min_{k=1,2}\crl{B'\alpha^k+\E[Z]'\beta^k}$ induces a $\Omega(\sqrt{\T})$ error.

Since the two dual solutions have the same dual value,  $B'\alpha^1+\mu'\beta^1 = B'\alpha^2+\mu'\beta^2$, we conclude $ B'\alpha = -\mu'\beta$.
We can use this condition to rewrite our bound as
\begin{align*}
v(P[Z]) \leq v(D[Z]) 
\leq (B'\alpha^1+Z'\beta^1)\In{ (\mu-Z)'\beta >0} + (B'\alpha^2+Z'\beta^2)\In{(\mu-Z)'\beta\leq 0}.
\end{align*}
Since $v(P[\mu]) = B'\alpha^k+\mu'\beta^k $ for $k=1,2$, we take a random convex combination to obtain 
\begin{align*}
v(P[\mu]) = ( B'\alpha^1 + \mu'\beta^1)\In{(\mu-Z)'\beta >0} +  ( B'\alpha^2+\mu'\beta^2)\In{(\mu-Z)'\beta \leq 0}.
\end{align*}
Now combine the last with our upper bound for $v(P[Z])$ and take expectations to obtain
\begin{align*}
v(P[\mu]) - \E[v(P[Z])] 
&\geq \E[(\mu-Z)'\beta^1\In{(\mu-Z)'\beta >0}] + \E[(\mu-Z)'\beta^2\In{(\mu-Z)'\beta \leq 0}]\\
&= \E[(\mu-Z)'\beta^1\In{(\mu-Z)'\beta >0}] + \E[(\mu-Z)'\beta^2(1-\In{(\mu-Z)'\beta> 0})]\\
&= \E[(\mu-Z)'\beta \In{(\mu-Z)'\beta >0}].
\end{align*}
Let us define $\xi$ as the normalized vector $Z$, i.e., $\xi\defeq \frac{1}{\sqrt{T}}(\mu-Z)$.
We conclude that
\begin{align*}
v(P[\mu]) - \E[v(P[Z])] \geq \sqrt{\T}\E[\xi'\beta \In{\xi'\beta>0}].
\end{align*}
Reducing by the standard deviation and applying the Central Limit Theorem, we arrive at a half-normal (also known as folded normal), which has constant expectation.
This concludes the desired result.
\end{proofof}

\section{Additional Details and Proofs} 

\subsection{Poisson Process in Discrete Periods} \label[appendix]{sec:poisson}

We explain how a continuous time Poisson process can be reduced to our setting.
We are given a time horizon $T$, where time $t\in [0,T]$ still denotes time to go
and, according to an exponential clock, arrivals occur at some times $t_1>t_2>\ldots >t_N \in [0,T]$, where $N$ is random and corresponds to the total number of arrivals, i.e., $N=\sum_{j\in [n]}Z_j(T)$.

Treating times $t_k$ as periods, there is one arrival per period.
Observe that \off knows $N$, therefore his Bellman Equation is well defined.
\onl acts on these discrete periods, i.e., he is event-driven, thus making at most $N$ decisions.
Finally, we note that, at some time $t_k$, $\E[Z_j(t_k)]=\lambda_j t_k$ if the process is homogeneous or $\E[Z_j(t_k)] = \int_0^{t_k}\lambda_j(t)dt$ if the process is non-homogeneous
In conclusion, \onl can compute all the required expectations without knowing $N$, but rather the knowledge of $t_k$ and $\lambda(\cdot)$ is enough.

\subsection{Bayes Selector Based on Marginal Compensations}\label[appendix]{sec:loss}
A somewhat more powerful oracle is one which, for every time $t$, state $s$ and action $a$, returns estimates of the \emph{marginal compensation} $\Rl(t,a,s)\cdot\Ins{Q(t,a,s)}$.
This suggests a stronger form of the Bayes selector based on marginal compensations, as summarized in \cref{alg:loss}.

The following result follows directly from \cref{lem:coupling} and gives a performance guarantee for this algorithm.

\begin{algorithm}
\caption{Marginal-Compensation Bayes Selector}
\label{alg:loss}
\begin{algorithmic}[1]
\Require Access to over-estimates $\hat l(t,a,s)$ of the expected compensation, i.e.,  $\hat l(t,a,s)\geq \E[\Rl(t,a,s)\Ins{Q(t,a,s)}]$
\Ensure Sequence of decisions for \onl.
\State Set $S^\T$ as the given initial state
\For{$t=\T,\ldots,1$}
	\State Observe arrival $\Jt$, and take any action that minimizes marginal compensation, i.e., $a\in\argmin\crl{\hat l(t,a,S^t): a\in\Ac}$.
	\State Update state $S^{t-1}\gets \Tr(a,S^t,\Jt)$.
\EndFor
\end{algorithmic}
\end{algorithm}

\begin{corollary}[Regret Of Marginal-Compensation Bayes Selector]
Consider \cref{alg:loss} with overestimates $\hat l(t,a,s)$,  i.e.,  $\hat l(t,a,s)\geq \E[\Rl(t,a,s)\Ins{Q(t,a,s)}]$.
If $A^t$ denotes the policy's action at time $t$, then 
\begin{align*}
\E[\reg]\leq 
\sum_t \E[\hat l(t,A^t,S^t)].
\end{align*}
\end{corollary}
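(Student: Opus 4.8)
The plan is to read this off directly from the Compensated Coupling, in exactly the way \cref{cor:bayes} was obtained from \cref{lem:coupling}. First I would apply the sample-path identity of \cref{lem:coupling} with $\pon$ taken to be the policy produced by \cref{alg:loss}: then the action at time $t$ is $A^t=\pon(t,S^t,\Jt)$, the event ``\off cannot follow \onl'' at time $t$ is $Q(t,S^t)=Q(t,A^t,S^t)$, and the lemma gives, for every $\omega\in\Omega$,
\[
\reg[\omega]\;=\;\sum_{t\in[T]}\Rl(t,A^t,S^t)[\omega]\cdot\Ins{Q(t,A^t,S^t)}[\omega].
\]
Each summand is nonnegative, since $\Rl(t,a,s)\geq 0$ is just the optimality gap of action $a$ in the Bellman \cref{eq:bellman_off}.

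Next I would take expectations and use linearity, obtaining $\E[\reg]=\sum_{t\in[T]}\E[\Rl(t,A^t,S^t)\Ins{Q(t,A^t,S^t)}]$, and bound the $t$-th term by $\E[\hat l(t,A^t,S^t)]$. For this I would condition on the randomness up to time $t$ (which determines the realized state $S^t$ and, together with $\Jt$, the action $A^t$), plug the realized pair $(A^t,S^t)$ into the oracle's defining guarantee $\hat l(t,a,s)\geq\E[\Rl(t,a,s)\Ins{Q(t,a,s)}]$, and invoke the tower property; summing over $t\in[T]$ then yields the claimed inequality. Note that, just as in \cref{cor:bayes}, the $\argmin$ rule of \cref{alg:loss} plays no role in this derivation — it serves only to make the right-hand side $\sum_t\E[\hat l(t,A^t,S^t)]$ small, which is where a well-chosen oracle and the structure of the particular problem enter.

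The one point requiring care is the substitution step: $\Rl(t,a,s)\Ins{Q(t,a,s)}$ is a function of the future arrivals $\JT{t},\dots,\JT{1}$, whereas $S^t$ (and hence $A^t$) is a function of the past arrivals $\JT{T},\dots,\JT{t+1}$. For the independent arrival models of \cref{sec:prelim} these are independent, so the relevant conditional expectation coincides with the unconditional one and the substitution is immediate; in the correlated case (e.g.\ Markovian arrivals) one simply interprets the oracle guarantee as an overestimate of the appropriate conditional expectation of the marginal compensation given the current state. This is precisely the same bookkeeping that underlies the role of $\hat q$ in \cref{cor:bayes}, so no genuinely new difficulty arises.
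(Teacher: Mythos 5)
Your proposal is correct and follows exactly the route the paper intends: the paper offers no explicit proof beyond ``follows directly from \cref{lem:coupling}'', and your argument is precisely that derivation spelled out (apply the sample-path identity with $\pon$ the policy of \cref{alg:loss}, take expectations, and replace each term $\E[\Rl(t,A^t,S^t)\Ins{Q(t,A^t,S^t)}]$ by $\E[\hat l(t,A^t,S^t)]$ via the oracle guarantee and the tower property). Your closing caveat about the conditioning/measurability of the substitution step is a fair point that the paper glosses over, and your resolution — reading $\hat l$ as an overestimate of the relevant conditional expectation given the current state — matches the paper's implicit treatment of $\hat q$ in \cref{cor:bayes}.
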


\subsection{Multi-Secretary Problem}\label{sec:app_multi}

\begin{proofof}{\cref{theo:secretary}}
Assume w.l.o.g.\ that $r_1\geq r_2\geq\ldots\geq r_n$.
This one dimensional version can be written as follows.
\begin{equation*}%\label{eq:secre_lp}
\begin{array}{rrl}
(P_t^\star)\; \max & r'x &\hspace{-0.2cm}
 \\
\text{s.t.}& \sum_{j\in [n]}x_j &\leq B^t \\
& x_j&\leq Z_j(t)\; \forall j \\
& x&\geq 0.
\end{array}
\qquad\qquad
\begin{array}{rrl}
(P_t)\; \max  & r'x& \hspace{-0.2cm}
\\
\text{s.t.}&  \sum_{j\in [n]}x_j&\leq B^t \\
& x_j&\leq tp_j\; \forall j \\
& x&\geq 0.
\end{array}
\end{equation*}
The optimal solution to $(P_t^\star)$ is to sort all the arrivals by reward and pick the top ones.
The solution to $(P_t)$ is similar, except that it can be fractional; we saturate the variable $x_1$ to $tp_1$, then $x_2$ to $tp_2$, and continue as long as  $\sum_{i\leq j} tp_i \leq B^t$ for some $j$.  
Define the probability of `arrival $j$ or better' by $\pb_j\defeq \sum_{i\leq j}p_{i}$.
Observe that we can saturate all variables $1,\ldots,j$ iff $t\pb_j \leq B^t$.
The solution to $(P_t)$ is therefore to pick the largest $j$ such that $t\pb_j\leq B^t$, then make $\Xt_i=tp_i$ for $i\leq j$ and $\Xt_{j+1}=B^t-t\pb_j$.
When we round this solution according to \cref{alg:fluid}, we arrive at the following policy:
First, if $B^t=0$, end the process.
Second (assuming $B^t \geq 1$), always accept class $j=1$.
Third (assuming $B^t\geq 1$), if class $j>1$ arrives, accept if $B^t/t\geq \pb_{j}-p_j/2$ and reject if $B^t/t<\pb_{j}-p_j/2$.

Recall that $q(t,b)$ is the probability that \off is not satisfied with \onl's action at time $t$ if the budget is $b$.
We denote $q_j(t,b)$ as the probability conditioned on $\Jt=j$.
Our aim in the rest of the section is to show that $q_j(t,b)$ is summable over $t$.

As we observed before:  (1) \off is not satisfied rejecting a class $j$ iff he accepts all the future arrivals type $j$, i.e., $\Xts_j>Z_j(t)-1$.
(2) \off is not satisfied accepting class $j$ iff he rejects all future type $j$ arrivals, i.e., $\Xts_j<1$.
We use the following standard Chernoff bound in \citep[Theorem 1.1]{dubhashi}.
For any $\alpha\in [0,1]$, if $X\sim\text{Bin}(t,\alpha)$:
\begin{align}
\label{eq:hoef_bino}
\Pr[X-\E[X]\leq -t\varepsilon] \leq e^{-2\varepsilon^2t}\;, \; 
\Pr[X-\E[X]\geq t\varepsilon] \leq e^{-2\varepsilon^2t}.
\end{align}

We now bound the disagreement probabilities $q_j(t,B^t)$.
Take $j$ rejected by \onl, i.e., it must be that $j>1$ and $B^t/t<\pb_j-p_j/2$.
Since we are rejecting, a compensation is paid only when condition (1) applies, thus $\Xts_j=Z_j(t)$.
By the structure of \off's solution, all classes $j'\leq j$ are accepted in the last $t$ rounds, i.e., it must be that $\Xts_{j'}=Z(t)_{j'}$ for all $j'\leq j$.
We must be in the event $\sum_{j'\leq j} Z(t)_{j'}\leq B^t$.
We know that $\sum_{j'\leq j} Z(t)_{j'}\sim \Bin(t,\pb_j)$.
Since $B^t/t<\pb_j-p_j/2$, the probability of error is:
\begin{align*}
q_j(t,B^t)
\leq
\Pr\brk*{\sum_{j'\leq j} Z(t)_{j'}\leq B^t} 
= 
\Pr[\Bin(t,\pb_j)\leq B^t]
\leq \Pr[\Bin(t,\pb_j)\leq t\pb_{j}-tp_j/2].
\end{align*}
Using \cref{eq:hoef_bino}, it follows that $q_j(t,B^t)\leq e^{-p_j^2t/2}$.

Now let us consider when $j$ is accepted by \onl.
A compensation is paid only when $j>1$ and condition (2) applies, thus $\Xts_j=0$.
Again, by the structure of $\Xts$, necessarily $\Xts_{j'}=0$ for $j'\geq j$.
Therefore we must be in the event $\sum_{j'< j} Z(t)_{j'}\geq B^t$.
Recall that $j$ is accepted iff $B^t/t\geq \pb_j-p_j/2=\pb_{j-1}+p_j/2$, thus
\begin{align*}
q_j(t,B^t)\leq
\Pr\brk*{\sum_{j'< j} Z(t)_{j'}\geq B^t} = 
\Pr[\Bin(t,\pb_{j-1})\geq B^t]
\leq \Pr[\Bin(t,\pb_{j-1})\geq t\pb_{j-1}+tp_j/2]. \end{align*}
This event is also exponentially unlikely.
Using \cref{eq:hoef_bino}, we conclude $q_j(t,B^t)\leq e^{-p_j^2t/2}$.
Overall we can bound the total compensation as:
\begin{align*}
\sum_{t\leq T} q(t,B^t) \leq \sum_{j>1}p_j\sum_{t\leq T}e^{-p_j^2t/2}
\leq \sum_{j>1}p_j\frac{2}{p_j^2}.
\end{align*}
Using compensated coupling (\cref{lem:coupling}), we get our result.
\end{proofof}

\begin{proofof}{\cref{cor:secretary}}
By \cref{cor:additive}, if $A^t$ is the action using over-estimates $\hat q$, then $\E[\reg] \leq \rmax \sum_{t\in [T]} (\E[\hat q(t,A^t,B^t)]+\Delta^t)$.
Recall that $A^t$ is chosen to minimize disagreement, hence, given the condition $\abs{q(t,a,b)-\hat q(t,a,b)}\leq \Delta^t$, we have $\hat q(t,A^t,B^t) \leq \min_{a\in\Ac} q(t,a,B^t)+\Delta^t$. 
In conclusion,
\[
\E[\reg] \leq \rmax \sum_{t\in [T]} \prn*{\E\brk*{\min_{a\in\Ac} q(t,a,B^t)}+2\Delta^t}.
\]
We prove that $\min_{a\in\Ac}q_j(t,a,b) \leq e^{-p_j^2t/2}$ for all $t\in [T],j\in [n], b\in\N$, hence the corollary follows by summing all the terms.

Let us denote $a=1$ the action accept and $a=0$ reject.
In the proof of \cref{theo:secretary} we concluded that the following are over-estimates of the disagreement probabilities $q$:
\begin{equation}\label{eq:q_secre}
\hat q_j(t,1,b) = \left\{
\begin{array}{ll}
e^{-p_j^2t/2}     & \text{ if } \frac{\Xt_j}{tp_j}\geq 1/2 \\
1     & \text{otherwise}. 
\end{array}
\right.
\quad \text{ and } \quad
\hat q_j(t,0,b) = \left\{
\begin{array}{ll}
e^{-p_j^2t/2}    & \text{ if } \frac{\Xt_j}{tp_j}< 1/2 \\
1     & \text{otherwise}. 
\end{array}
\right.
\end{equation}
Crucially, observe that the term $e^{-p_j^2t/2}$ is \emph{independent of the state} $b$.
This proves that $\sup_{b\in\N}\min\crl{q_j(t,0,b),q_j(t,1,b)}\leq e^{-p_j^2t/2}\,\forall\, t\in[T],\,\forall\, j\in\J$.
The proof is completed.
\end{proofof}

\subsection{Other Arrival Processes}\label{sec:app_arrival}

\begin{proofof}{\cref{ex:markov}}
This follows from an application of \citep[Theorem 3.1]{markov_chernoff}, which guarantees that, for some constants $c',m$ that depend on $P$ only,  
\begin{equation}\label{eq:markov_chernoff}
\Pr[\abs{Z_k(t)-\nu_kt}\geq \delta \nu_k t] \leq c' e^{-\frac{\delta^2 \nu_k t}{72m}}, \quad \forall t\in [T], \delta\in [0,1], k \in [n].
\end{equation}
To obtain \cref{eq:markov_all_time}, we fix $j\in [n]$ and use a union bound taking the worst case in \cref{eq:markov_chernoff}; we let $\nu_{\min} \defeq \min_{k\in [n]}\nu_k$, $\nu_{\max} \defeq \max_{k\in [n]}\nu_k$ and set $\delta = \nu_j/2\kappa_j\nu_{\max}$ in \cref{eq:markov_chernoff} to obtain the result.
The constants are thus $c_j = (\nu_j/2\kappa_j\nu_{\max})^2\nu_{\min}/72m$.
Finally, we mention that the constants $c'$ and $m$ are related to the spectral gap and mixing time of $P$, for details see \citep{markov_chernoff}.
\end{proofof}

\begin{proofof}{\cref{ex:heavy}}
To prove the All-Time Deviation, we use that, from the proof of \citep[Lemma~3]{multinomial_bound}, $\Pr(\abs{X-\E[X]}\geq\varepsilon\E[X])\leq 2e^{-\E[X]\varepsilon^2/4}$ is valid for any Poisson r.v. $X$ and any $\varepsilon>0$.
Now we proceed as in \cref{ex:markov}: taking $X_k=Z_k(t)$ and $\varepsilon = \frac{\E[Z_j(t)]}{2\kappa_j\E[Z_k(t)]}$ we obtain
\[
\Pr\brk*{\norm{Z(t)-\E[Z(t)]}_\infty\geq \frac{\E[Z_j(t)]}{2\kappa_j}} \leq 2\sum_{k\in [n]}e^{-\frac{\E[Z_j(t)]^2}{8\kappa_j^2\E[Z_k(t)]}}.
\]
Finally, from \cref{eq:heavy_ratio} we have $\E[Z_k(t)]\leq g(t)\E[Z_j(t)]$ and from \cref{eq:heavy_min} we have $\E[Z_j(t)]\geq g(t)f(t)\log(t)$.
From these bounds, we conclude $\Pr\brk*{\norm{Z(t)-\E[Z(t)]}_\infty\geq \frac{\E[Z_j(t)]}{2\kappa_j}} \leq 2 n e^{-f(t)\log(t)/8\kappa_j^2}$ and the existence of constants $\tau_j,c_j$ satisfying the All-Time Deviation follows.

\end{proofof}

\subsection{Proof of \texorpdfstring{\cref{prop:lipschitz_matching}}{}}\label[appendix]{sec:proof_prop}

We denote $\vx\in\R^{nd}$ the vector of the form $\vx=(x_{11},x_{21}\ldots,x_{d1},x_{12},\ldots)'$, i.e., we concatenate the components $x_{ij}$ by $j$ first. 
We can write the feasible region of $P[z,b]$ as $\crl{\vx:C\vx\leq b, D\vx\leq z, \vx\geq 0}$, where $C\in\R^{d\times nd}$ and $D\in\R^{n\times nd}$.
It follows from a slight strengthening of \cite[Theorem 2.4]{mangasarian} that $\norm{\vx^1-\vx^2}_\infty\leq \kappa\norm{z^1-z^2}_1$, where
%\anote{This follows from modifying the proof in that paper, not from the result. First modify eq $(2.11)$ to obtain an improved Theo 2.2, then plug in this to Theo 2.4}
\begin{align*}
\kappa= \sup\crl*{
\norm{v}_\infty:\norm{C'u+D'v}_1=1, \mbox{ support} \begin{pmatrix} u \\ v\end{pmatrix} \text{corresponds to linearly independent rows of } \begin{pmatrix} C \\ D\end{pmatrix} }
\end{align*}
If we study \cref{eq:off_matching_problem}, denoting $I_d$ the $d$-dimensional identity and $1_d,0_d$ $d$-dimensional row vectors of ones and zeros, we can write the matrices $C,D$ as follows.
We sketched the multipliers $u_i,v_j$ next to the rows,
\begin{align*}
C = [I_d|I_d|\ldots|I_d] 
=\begin{pmatrix} 
1 & 0&\cdots& 0 &1&\cdots\\ 
0 &1&\cdots&0&0&\cdots \\ 
\vdots &\vdots&\ddots & \vdots&\vdots &\cdots\\
 0&0&\cdots&1&0&\cdots
\end{pmatrix}
\begin{matrix}\leftarrow u_1 \\ \leftarrow u_2 \\ \vdots \\ \leftarrow u_d\end{matrix}
\end{align*}
and similarly
\begin{align*}
D = \begin{pmatrix} 1_d & 0_d& \cdots& 0_d \\ 0_d &1_d&\cdots&0_d \\ \vdots &\vdots &\ddots & \vdots \\ 0_d&0_d&\cdots&1_d\end{pmatrix}
\begin{matrix}\leftarrow v_1 \\ \leftarrow v_2 \\ \vdots \\ \leftarrow v_n\end{matrix}
\end{align*}
We have two cases: either $u_i=0$ for some $i\in [d]$ or $u_i\neq 0$ for all $i\in [d]$.
On the first case, say w.l.o.g.\ $u_1=0$ and take any $j\in [n]$. 
Observe that the constraint $\norm{C'u+D'v}_1=1$ implies (studying all the components involving $j$) $\sum_{i\in [d]}\abs{u_i+v_j}\leq 1$.
Since $u_1=0$, this reads as $\abs{v_j}+\sum_{i>1}\abs{u_i+v_j}\leq 1$, thus $\abs{v_j}\leq 1$ as desired.

For the other case we assume $u_i\neq 0$ for all $i$, hence $v_j=0$ for some $j$, since otherwise we would violate the l.i.\ restriction on the support.
Assume w.l.o.g.\ $v_1=0$ and let us study some $v_j$.
The constraint $\norm{C'u+D'v}_1=1$ implies (looking at the first $n$ components and the components involving $j$) $\sum_{i \in [d]}\abs{u_i}+\sum_{i \in [d]}\abs{v_j+u_i}\leq 1$.
By triangle inequality,
\begin{align*}
d\abs{v_j} = \abs*{\sum_{i \in [d]}(u_i+v_j) - \sum_{i \in [d]}u_i}
\leq \sum_{i \in [d]}\abs{v_j+u_i}+\sum_{i \in [d]}\abs{u_i}\leq 1.
\end{align*}
This shows $\abs{v_j}\leq 1$ and the proof is complete.  \hfill $\square$

\subsection{Additional Details for Online Stochastic Matching}\label[appendix]{sec:details_matching}

The stochastic bipartite matching is defined by a set of static nodes $U$, $\abs{U}=d$, and a random set of nodes arriving sequentially.
At each time a node $\Jt$ is chosen from a set $V$, $\abs{V}=n$, and we are given its set of neighbors in $U$.
We identify the online bipartite matching problem in our framework as follows.
The state $S^t$ encodes the available nodes from $U$, an action corresponds to matching the arrival $\Jt\in V$ to a neighbor $u\in U$ of $\Jt$ or to discard the arrival.
In the latter case we say that it is matched to $u=\varnothing$.

For a graph $G$, we denote the size of its maximum matching as $M(G)\in\N\cup\crl{0}$ and $G-(u,v)$ as the usual removal of nodes; in the case $u=\varnothing$, $G-(u,v)=G-v$.
Recall that $Q(t,a,s)$ is the event when \off is not satisfied with action $a$ and $q(t,a,s)=\Pr[Q(t,a,s)]$.
Let us fix an \onl policy and define $G_t=(L,R)$ as the bipartite graph with nodes $L=S^t$ and $R=Z(t)$, i.e., the realization of future arrivals and current state.
With the convention $\Ins{\varnothing}=0$ and $\Ins{u}=1$ for $u\in U$,
\begin{align*}
\bar Q(t,u,s) = \crl{\omega\in\Omega: M(G_t) =\Ins{u}+M(G_t-(u,\Jt))}.
\end{align*}
In words, \off is satisfied matching $\Jt$ to $u$ if the size of the maximum matching with and without that edge differs by exactly $1$. 
With this observation, a straightforward application of the compensated coupling \cref{lem:coupling} yields \cref{lemm:matching_equal}.

Finally, we provide an example for a negative result.
Despite the fact that the regret is exactly the number of disagreements and the Bayes Selector minimizes each term, it is not an optimal policy.

\begin{proposition}
The Bayes Selector is sub-optimal for stochastic online bipartite matching.
\end{proposition}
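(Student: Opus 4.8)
The plan is to prove this by exhibiting an explicit small instance of stochastic online bipartite matching on which the Bayes Selector is strictly outperformed by another online policy. The conceptual reason it can fail is already visible from \cref{lemm:matching_equal}: there the regret equals \emph{exactly} the total number of disagreements, $\reg = \sum_{t\in[T]}\Ins{Q(t,U^t,S^t)}$, so an optimal online policy is precisely one that minimizes $\E\brk*{\sum_t \Ins{Q(t,U^t,S^t)}}$, whereas \cref{alg:bayes_selector} only minimizes each summand $\Pr[Q(t,\cdot,S^t)]$ in isolation, \emph{given} the current state. Since the action chosen at time $t$ changes the residual graph $G_{t-1}$, and hence the distribution of every later disagreement indicator, the Bayes rule is a greedy heuristic for a sequential objective, and there is no a priori reason it should be globally optimal. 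What remains is to produce one instance where the gap is strictly positive.

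First I would set up a minimal instance: a small static side $U$ (two or three nodes, unit capacities), a short horizon $T$, and a handful of arrival types with a carefully chosen neighbourhood structure, together with a \emph{non-degenerate} arrival law so that the relevant one-step disagreement probabilities lie strictly in $(0,1)$. To avoid any reachability subtlety I would make the \emph{first} arrival (at $t=T$) deterministically of a fixed type $j$ with (at least) two feasible matches $u_1,u_2\in U$, so that the Bayes Selector's very first decision is the critical one. Using the characterization from \cref{sec:details_matching}, namely $\bar Q(t,u,s)=\crl*{M(G_t)=\Ins{u}+M(G_t-(u,\Jt))}$, I would compute $q(T,u_1,s_0)$ and $q(T,u_2,s_0)$ from the (few) sample paths and tune the parameters so that $q(T,u_1,s_0)<q(T,u_2,s_0)$ \emph{strictly} — hence the Bayes Selector definitively matches $j$ to $u_1$ — but only by a small margin.

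Next I would analyze the continuations. Let $R_i$ denote the \emph{optimal} expected number of future disagreements from the residual state obtained by matching $j$ to $u_i$ at $t=T$. The instance must be built so that $R_1$ is large while $R_2$ is small: intuitively, $u_1$ is a downstream bottleneck — the unique viable match for a cluster of later arrival types — so consuming it early forces a disagreement whenever one of those types appears, whereas $u_2$ is not needed later. Quantifying both sides reduces to a finite enumeration of max-matching sizes $M(G_\tau)$ over the handful of sample paths, from which I would verify the strict inequality
\[
q(T,u_1,s_0) + R_1 \;>\; q(T,u_2,s_0) + R_2 .
\]
The left side is a lower bound on the Bayes Selector's total expected regret (it matches $u_1$, then does at best $R_1$ afterwards), while the right side is achieved by the online policy that matches $u_2$ at $t=T$ and plays optimally thereafter. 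Hence the Bayes Selector does not minimize $\E[\reg]$, and by \cref{lemm:matching_equal} — since $\E[\voff]$ is fixed — this is the same as failing to maximize $\E[\von]$; so it is sub-optimal.

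The main obstacle is engineering the instance so that the two effects point in \emph{opposite} directions. In naive instances the resource that is locally safest to consume (lowest one-step $q$) is also globally safe: the offline benchmark, knowing the future, can itself ``save'' a precious resource by foresight, so the one-step disagreement probability of matching into it already reflects the bad future, and the myopic and DP choices coincide. Building a genuine counterexample therefore requires a ``trap'' — $u_1$ must look innocuous for a single step (because on almost every sample path \off also wants $(u_1,\Jt)$ matched now, forcing $q(T,u_1,s_0)$ to be tiny) while simultaneously being the only feasible match for several later types, so its early removal cascades into multiple future disagreements. Once such a configuration is found, keeping $U$, $T$, and the type set as small as possible is what makes the final matching-size bookkeeping short enough to check by hand; I expect this balancing act, rather than any single calculation, to be the delicate part.
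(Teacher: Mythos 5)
Your conceptual framing is exactly the paper's: by \cref{lemm:matching_equal} the regret is \emph{exactly} the sum of disagreement indicators, the Bayes Selector greedily minimizes each term given the current state, and sub-optimality must come from an instance where the locally safest match destroys a resource that several later types depend on. The paper's proof is precisely such an explicit counterexample: $U=\crl{a,b,c}$, four types ($1\to\crl{a}$, $2\to\crl{a,b}$, $3\to\crl{c}$, $4\to\crl{b,c}$), horizon $T=3$ with a type-$2$ arrival first. Matching $2$ to $a$ disagrees with probability $p_1^2+2p_1p_3+2p_1p_4$ and to $b$ with probability $p_4^2+2p_4p_3$; choosing $p_1=p_4/2$, $p_3=p_4/4$ makes these equal, the tie-broken-to-$a$ Bayes Selector then incurs an additional $p_4\min\crl{p_2,p_3}>0$ at $t=2$, while the policy matching $2$ to $b$ incurs nothing further. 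So your approach is the right one, and your intuition about $u_1$ being a ``downstream bottleneck'' is exactly what types $1$ and $4$ implement against resource $a$ (respectively $b$).

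The gap is that you never produce the instance, and you yourself flag the construction as the delicate unfinished step; since the entire content of this proposition \emph{is} the counterexample, the argument as written proves nothing. Two further points. First, you impose a stronger requirement than necessary: you want $q(T,u_1,s_0)<q(T,u_2,s_0)$ strictly, which is what makes the ``balancing act'' hard. The paper sidesteps this by engineering an exact tie, under which \cref{alg:bayes_selector} (whose $\argmin$ breaks ties arbitrarily) may legitimately select the action with the worse continuation -- that already establishes sub-optimality of the policy class as defined. If you insist on a strict version, note that the continuation gap $p_4\min\crl{p_2,p_3}$ is bounded away from zero at the tie point, so a small perturbation of $p$ tilting the one-step probabilities toward $a$ yields your strict inequality while preserving $q(T,a)+R_a>q(T,b)+R_b$; but either way you must exhibit concrete neighborhoods and probabilities and carry out the finite max-matching bookkeeping, which is the part missing from your write-up. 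Second, your claim that the right-hand side $q(T,u_2,s_0)+R_2$ ``is achieved by'' an online policy implicitly uses that the sum-of-indicators identity holds with equality (so minimizing expected disagreements is the same as maximizing $\E[\von]$); that step is fine here, but only because of \cref{lemm:matching_equal}, and is worth stating explicitly.
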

\begin{proof}
Consider an instance with static nodes $U=\crl{a,b,c}$ and four types of online nodes $V=[4]$.
Type $1$ matches to $a$ only, $2$ to $a$ and $b$, $3$ to $c$ only, and $4$ to $b$ and $c$.
Observe that the only types inducing error are $2$ and $4$.

Assume the arrival at $t=3$ is $\JT{3}=2$.
Matching it to $a$ is an error if arrivals are $\crl{1,1},\crl{1,3},\crl{1,4}$, so the disagreement is $p_1^2+2p_1p_3+2p_1p_4$.
Matching it to $b$ is an error if arrivals are $\crl{4,4},\crl{3,4}$ with disagreement $p_4^2+2p_4p_3$.
Now assume $p_4^2+2p_4p_3=p_1^2+2p_1p_3+2p_1p_4$, so the Bayes selector is indifferent and thus say it matches to $a$.

At $t=2$, there is only an error if $\JT{2}=4$, in which case matching it to $b$ has disagreement $p_2$ and matching it to $c$ disagreement $p_3$.
In conclusion to the Bayes selector pays $p_1^2+2p_1p_3+2p_1p_4$ in the first stage plus $\min\crl{p_2,p_3}$ in the second with probability $p_4$.

The strategy that matches at $t=3$ type 2 to $b$ has disagreement $p_4^2+2p_4p_3=p_1^2+2p_1p_3+2p_1p_4$, thus lower than the Bayes selector.
To see this, note that if we match to $b$ there is no error at $t=2$.

Finally, the equation $p_4^2+2p_4p_3=p_1^2+2p_1p_3+2p_1p_4$ is satisfied, e.g., with $p_1=p_4/2$ and $p_3=p_4/4$.
\end{proof}

\section{Additional Details from Numerical Experiments}\label[appendix]{appen:numerics}

Competitive is described as follows.
For a given horizon $T$, let $K_j\defeq\lceil p_jT\rceil$.
We create a bipartite graph $G=(U,V,E)$, where $U$ is the static side and $V$ the online side.
The static side contains $B_i$ copies of each resource $i$, hence $\abs{U} = \sum_{i\in [d]} B_i$.
The online side contains $K_j$ copies of each type $j$, hence $\abs{V}=\sum_{j\in [n]} K_j$.
The edge set $E$ is the natural construction where each copy of $j\in [n]$ has edges to all copies of $i\in [d]$ according to the adjacency matrix $A$.
The weight $w_e$ on edge $e=(u,v)$ is $r_{ij}$ if $u$ is a copy of $i$ and $v$ is a copy of $j$.
Finally, define the following matching LP on the graph $G$, where $\lambda_{iljk}$ stands for the $l$-th copy of $i$ and $k$-th copy of $j$
\[
\begin{array}{rrll}
(P)\, \max  & \multicolumn{3}{c}{\sum_{i\in [d]}\sum_{l\in [B_i]}\sum_{j\in [n]:A_{ij}=1}\sum_{k\in [K_j]} r_{ij} \lambda_{iljk}} \\
\text{s.t.}&  \sum_{j\in [n]}\sum_{k\in [K_j]}\lambda_{iljk} &\leq 1& \forall i\in [d], l\in [B_i]\\
& \sum_{i\in [d]}\sum_{l\in [B_i]} \lambda_{iljk} &\leq 1 &\forall j\in [n],k\in [K_j] \\
& \lambda &\geq 0,
\end{array}
\]

Let $\lambda^\star$ be a solution to this LP.
Whenever a type $j$ arrives, Competitive draws $k\in [K_j]$ uniformly at random, then takes a vertex $u=il$ incident to $v=jk$ with probability $\lambda^\star_{iljk}$ and if $u=il$ is not taken, it matches $v$ to $u$.
We note that the process of copying nodes is not superfluous since the analysis of Competitive heavily relies on the fact that the LP is in this form.

Marginal Allocation is described as follows.
Let $x$ be a solution of $(P_T)$ in \cref{eq:coupled_lp_matching}, i.e., of the fluid LP, and let $f_i:[T]\times \crl{0,\ldots,B_i} \to \Rp$ be some functions specified later.
When a type $j$ arrives at $t$ and the current budgets are $B^t\in\N^d$, Marginal Allocation uses $f_i(t,B^t_i)-f_i(t,B^t_i-1)$ as the bid-price for each resource $i\in [d]$: the type is rejected if $r_{ij}<f_i(t,B^t_i)-f_i(t,B^t_i-1)$ for all $i\in [d]$ such that $B^t_i>0$ and otherwise it is matched to $\argmax\crl{r_{ij}-f_i(t,B^t_i)+f_i(t,B^t_i-1):i\in[d],B^t_i>0}$.
Finally, the functions $f$ are obtained with the following recursion
\[
f_i(t+1,b)=f_i(t,b)+\frac{1}{T}\sum_{j\in [n]}x_{ij}(r_{ij}-f_i(t,b)+f_i(t,b-1))^+, \qquad 
f_i(1,\cdot) = 0, f_i(\cdot,0) = 0.
\]

\begin{table}[ht]
\small
\centering
\begin{tabular}{lllllllllllllllll}
                               &                         & \multicolumn{15}{c}{Type $j$}                                                                                   \\
                               & \multicolumn{1}{l|}{}   & 1     & 2     & 3     & 4     & 5    & 6     & 7     & 8   & 9   & 10   & 11    & 12    & 13    & 14    & 15    \\ \cline{2-17} 
\multirow{20}{*}{\rotatebox[origin=c]{90}{Resource $i$}} & \multicolumn{1}{l|}{1}  & 0     & 1     & 0     & 1     & 0    & 1     & 0     & 0   & 0   & 0    & 1     & 0     & 1     & 0     & 0     \\
                               & \multicolumn{1}{l|}{2}  & 1     & 1     & 1     & 0     & 0    & 1     & 1     & 1   & 1   & 0    & 0     & 0     & 1     & 1     & 1     \\
                               & \multicolumn{1}{l|}{3}  & 0     & 1     & 1     & 0     & 1    & 0     & 0     & 1   & 0   & 0    & 1     & 1     & 0     & 1     & 0     \\
                               & \multicolumn{1}{l|}{4}  & 0     & 1     & 0     & 0     & 0    & 1     & 1     & 0   & 1   & 1    & 1     & 1     & 1     & 0     & 0     \\
                               & \multicolumn{1}{l|}{5}  & 1     & 1     & 1     & 0     & 0    & 0     & 1     & 0   & 1   & 0    & 0     & 0     & 0     & 1     & 1     \\
                               & \multicolumn{1}{l|}{6}  & 0     & 1     & 1     & 0     & 1    & 0     & 1     & 0   & 1   & 0    & 1     & 1     & 1     & 1     & 0     \\
                               & \multicolumn{1}{l|}{7}  & 1     & 0     & 1     & 1     & 1    & 1     & 1     & 0   & 1   & 1    & 1     & 0     & 0     & 1     & 1     \\
                               & \multicolumn{1}{l|}{8}  & 1     & 1     & 1     & 0     & 0    & 1     & 0     & 0   & 1   & 0    & 0     & 0     & 1     & 0     & 0     \\
                               & \multicolumn{1}{l|}{9}  & 0     & 1     & 1     & 1     & 1    & 0     & 0     & 1   & 0   & 0    & 0     & 1     & 0     & 0     & 0     \\
                               & \multicolumn{1}{l|}{10} & 1     & 1     & 0     & 1     & 1    & 0     & 0     & 0   & 1   & 0    & 1     & 1     & 0     & 0     & 0     \\
                               & \multicolumn{1}{l|}{11} & 1     & 1     & 0     & 1     & 0    & 0     & 1     & 0   & 1   & 0    & 0     & 0     & 0     & 0     & 1     \\
                               & \multicolumn{1}{l|}{12} & 0     & 1     & 1     & 1     & 0    & 1     & 1     & 1   & 1   & 1    & 0     & 1     & 0     & 0     & 1     \\
                               & \multicolumn{1}{l|}{13} & 0     & 0     & 1     & 0     & 0    & 1     & 0     & 1   & 1   & 0    & 1     & 1     & 0     & 1     & 1     \\
                               & \multicolumn{1}{l|}{14} & 1     & 0     & 0     & 0     & 0    & 1     & 0     & 1   & 0   & 0    & 1     & 1     & 1     & 0     & 0     \\
                               & \multicolumn{1}{l|}{15} & 0     & 0     & 0     & 0     & 0    & 0     & 0     & 0   & 0   & 1    & 0     & 1     & 0     & 1     & 0     \\
                               & \multicolumn{1}{l|}{16} & 1     & 0     & 1     & 0     & 0    & 1     & 0     & 1   & 0   & 1    & 1     & 0     & 0     & 0     & 1     \\
                               & \multicolumn{1}{l|}{17} & 0     & 1     & 0     & 1     & 1    & 0     & 0     & 0   & 0   & 1    & 1     & 1     & 0     & 0     & 1     \\
                               & \multicolumn{1}{l|}{18} & 0     & 0     & 1     & 1     & 0    & 1     & 1     & 0   & 1   & 1    & 0     & 0     & 0     & 0     & 1     \\
                               & \multicolumn{1}{l|}{19} & 1     & 1     & 1     & 1     & 0    & 1     & 0     & 0   & 0   & 1    & 0     & 0     & 0     & 0     & 0     \\
                               & \multicolumn{1}{l|}{20} & 0     & 0     & 0     & 1     & 0    & 0     & 1     & 1   & 1   & 0    & 0     & 1     & 1     & 1     & 1     \\
                               & $p_j$                   & 0.075 & 0.075 & 0.125 & 0.025 & 0.05 & 0.062 & 0.062 & 0.1 & 0.1 & 0.05 & 0.125 & 0.012 & 0.075 & 0.062 & 0.002 \\
                               & $r_j$                   & 7     & 5     & 16    & 1     & 1    & 20    & 10    & 18  & 7   & 14   & 17    & 19    & 14    & 1     & 2    
\end{tabular}
\caption{Parameters used for the second Online Packing instance.
Coordinates $(i,j)$ represent consumption $A_{ij}$. 
}\label{tab:packing_two}
\end{table}

\begin{table}[ht]
\centering
\begin{tabular}{llllllllllll}
                              &                        & \multicolumn{10}{c}{Type $j$}                             \\
                              & \multicolumn{1}{l|}{}  & 1   & 2   & 3   & 4   & 5   & 6   & 7   & 8   & 9   & 10  \\ \cline{2-12} 
\multirow{6}{*}{\rotatebox[origin=c]{90}{Resource $i$}} & \multicolumn{1}{l|}{1} & 10  & 6   & 0   & 0   & 9   & 8   & 2   & 0   & 0   & 1   \\
                              & \multicolumn{1}{l|}{2} & 1   & 0   & 0   & 0   & 0   & 0   & 2   & 0   & 0   & 8   \\
                              & \multicolumn{1}{l|}{3} & 0   & 0   & 0   & 0   & 0   & 0   & 2   & 0   & 0   & 6   \\
                              & \multicolumn{1}{l|}{4} & 0   & 26  & 0   & 0   & 1   & 0   & 3   & 0   & 0   & 11  \\
                              & \multicolumn{1}{l|}{5} & 1   & 4   & 0   & 0   & 0   & 0   & 0   & 0   & 0   & 13  \\
                              & \multicolumn{1}{l|}{6} & 7   & 4   & 12  & 11  & 10  & 12  & 18  & 2   & 0   & 0   \\
                              & $p_j$                  & 0.1 & 0.1 & 0.1 & 0.1 & 0.1 & 0.1 & 0.1 & 0.1 & 0.1 & 0.1
\end{tabular}
\caption{Parameters used for the second Online Matching instance.
Coordinates $(i,j)$ represent the reward $r_{ij}$ and $r_{ij}=0$ implies that it is not possible to match $i$ to $j$.
}\label{tab:matching_two}
\end{table}
\end{APPENDICES}

\end{document}